\begin{document}
{\renewcommand{\thefootnote}{\fnsymbol{footnote}}
\begin{center}
{\LARGE  Quantization of dynamical symplectic reduction}\\
\vspace{1.5em}
Martin Bojowald$^1$\footnote{e-mail address: {\tt bojowald@gravity.psu.edu}}
and Artur Tsobanjan$^2$\footnote{e-mail
    address: {\tt arturtsobanjan@kings.edu}} 
\\
\vspace{0.5em}
$^1$Institute for Gravitation and the Cosmos,
The Pennsylvania State
University,\\
104 Davey Lab, University Park, PA 16802, USA\\
\vspace{0.5em}
 $^2$ King's College, 133 North River Street, Wilkes--Barre, PA 18711, USA\\ 
  \vspace{1.5em}
\end{center}
}

\setcounter{footnote}{0}

\newtheorem{theo}{Theorem}
\newtheorem{prop}{Proposition}
\newtheorem{cor}{Corollary}
\newtheorem{lemma}{Lemma}
\newtheorem{defi}{Definition}

\newcommand{\proofend}{\raisebox{1.3mm}{\fbox{\begin{minipage}[b][0cm][b]{0cm}
\end{minipage}}}}
\newenvironment{proof}{\noindent{\it Proof:} }{\mbox{}\hfill \proofend\\\mbox{}}
\newenvironment{ex}{\mbox{}\\\noindent{\it Example:} }{\mbox{}\\\smallskip}
\newenvironment{rem}{\mbox{}\\\noindent{\it Remark:} }{\mbox{}\\\smallskip}

\newcommand{\case}[2]{{\textstyle \frac{#1}{#2}}}
\newcommand{\lP}{\ell_{\mathrm P}}

\newcommand{\md}{{\mathrm{d}}}
\newcommand{\tr}{\mathop{\mathrm{tr}}}
\newcommand{\sgn}{\mathop{\mathrm{sgn}}}

\newcommand*{\R}{{\mathbb R}}
\newcommand*{\N}{{\mathbb N}}
\newcommand*{\Z}{{\mathbb Z}}
\newcommand*{\Q}{{\mathbb Q}}
\newcommand*{\C}{{\mathbb C}}

\begin{abstract}
  A long-standing problem in quantum gravity and cosmology is the quantization
  of systems in which time evolution is generated by a constraint that must
  vanish on solutions. Here, an algebraic formulation of this problem is
  presented, together with new structures and results, which show that
  specific conditions need to be satisfied in order for well-defined evolution
  to be possible.
\end{abstract}

\section{Introduction}
\label{s:Intro}
When time-reparameterization-invariant dynamical systems are cast as
Hamiltonian theories on a symplectic manifold one finds that time evolution
and time-reparameterization flows are generated by one and the same
phase-space function---the Hamiltonian constraint. The straightforward
application of the usual methods of symplectic reduction to such dynamically
constrained systems has the undesirable side-effect of also removing their
dynamics, and needs to be replaced by \emph{dynamical} syplectic
reduction. This paper describes a method of dynamical reduction for the
quantized versions of such systems, where non-commutativity leads to a host of
additional complications. However, since dynamically constrained systems are
rarely studied outside of canonical approaches to quantum gravity we dedicate
most of this introductory section to the review of their classical
(i.e.\ non-quantum) treatment. Our main results and the structure of the rest
of this manuscript are outlined at the end of the introduction.

Given a symplectic manifold $(M,\Omega)$ and $C\in C^{\infty}(M)$, the {\em
  symplectic reduction} \cite{SympRed} $M/C$ of $M$ by $C$ is the orbit space
of $M\supset M_C\colon C=0$ with respect to the gauge flow
$F_C(\epsilon)=\exp(\epsilon X_C)$ in $M_C$ generated by the Hamiltonian
vector field $X_C$ of $C$, ${\rm d}C=\Omega(X_C,\cdot)$.  Because ${\cal
  L}_{X_C}C=\Omega(X_C,X_C)=0$, the flow preserves $M_C$, and the orbit space
inherits a unique symplectic form $\Omega_{M/C}$ from the presymplectic form
$i^*\Omega$ on $M_C$, where $i\colon M_C\to M$ is the inclusion of $M_C$ in
$M$. The set of {\em observables} of the constrained system, which solve the
constraint equation $C=0$ and are invariant under the gauge flow, is given by
$C^{\infty}(M/C)$.

In addition to implementing a constraint $C=0$ by symplectic reduction,
physical systems usually require the definition of a {\em dynamical flow}. The
canonical way is to select a Hamiltonian function $H\in C^{\infty}(M)$ which
generates the dynamical flow $F_H(t)=\exp(t X_H)$ with the Hamiltonian vector
field $X_H$ of $H$. A dynamical flow in the presence of a constraint $C=0$ is
consistent if it preserves the constraint surface, that is,
$X_HC=\Omega(X_C,X_H)=-\{C,H\}=0$ on $M_C$ with the Poisson bracket
$\{\cdot,\cdot\}$\ defined by $\Omega$. The same condition ensures that the
dynamical flow is well-defined on the reduced phase space $M/C$ because it is
compatible with the gauge flow: By the Jacobi identity of $\{\cdot,\cdot\}$, a
gauge transformation (that is, the diffeomorphism induced by a gauge flow)
commutes with the dynamical flow up to a gauge transformation.  Since
$\{C,H\}=0$ on $M_C$, there is a $\lambda\in C^{\infty}(M)$ such that
$\{C,H\}=\lambda C$ on $M$, and
\[
 [X_C,X_H]= \{\{\cdot,H\},C\}- \{\{\cdot,C\},H\}= -\{\cdot,\{C,H\}\}=
 -X_{\{C,H\}}= -X_{\lambda C}\,.
\]

In systems typically encountered in general relativity or its cosmological
models, the dynamical flow is simultaneously a gauge flow. A system is {\em
  time-reparameterization invariant} if, given a solution $f(t)$ of its
dynamical flow such that ${\rm d}f/{\rm d}t=\{f,H\}$ for all $t\in{\mathbb
  R}$, $f(T(t))$ is also a solution for any monotonic $T\in
C^{\infty}({\mathbb R})$.  Any such $f(T(t))$\ can be obtained from $f(t)$ by
following the flow generated by the Hamiltonian itself together with a
suitable non-zero multiplier $N\in C^{\infty}({\mathbb R})$ via
\[
\lim_{\epsilon\to 0}\frac{f(t+\epsilon N(t))-f(t)}{\epsilon}= N(t)
\frac{{\rm d}f}{{\rm d}t}= \{f,N H\} \ .
\]
Therefore the Hamiltonian function is itself the generator of a gauge flow.
Observables
are functions on the orbit space of the gauge flow. This orbit space inherits
a Poisson structure from $M$, with symplectic leaves given by the level
surfaces of $H$ \cite{brackets}. Adding a constant to $H$ does not change the
dynamical flow. Therefore, without loss of generality, we can assume the
relevant symplectic leaf to be given by $H=0$, such that the dynamical
generator $H$ is also a constraint.

The Hamiltonian of a time-reparameterization invariant system is therefore a
constraint, called the Hamiltonian constraint. In order to emphasize its
nature as a constraint, we will slightly change notation and refer to a
Hamiltonian constraint as $C$. We refer to symplectic reduction with a
Hamiltonian constraint as {\em dynamical symplectic reduction}. Associated
with this process is the following long-standing problem
\cite{KucharTime,Isham:Time}: Any observable $O\in C^{\infty}(M/C)$ on the
reduced phase space can be pulled back to a function on $M_C\colon C=0$ using
the projection $p\colon M_C\to M/C$ to the orbit space. By definition, $p^*O$
is constant on the orbits, or time independent if $C$ is a Hamiltonian
constraint. In the reduced phase space, therefore, there is no recognizable
time evolution in a time-reparameterization invariant theory.

Classically, the problem of identifying time evolution in a
time-reparameterization invariant system is usually solved by {\em fixing the
  gauge flow} generated by a Hamiltonian constraint. This construction to
determine observables and their evolution does not use the reduced phase
space. Given a symplectic manifold $(M,\Omega)$ and a Hamiltonian constraint
$C\in C^{\infty}(M)$, a gauge fixing of the gauge flow is accomplished by a
global incisive section.
\begin{defi} \label{ClassSection}
  A {\em section} $(L,\Omega_L,\iota)$ of the gauge flow generated by a
  constraint $C$ on $(M,\Omega)$ is a symplectic manifold $(L,\Omega_L)$
  (called the {\em gauge-fixed phase space}) together with an embedding
  $\iota\colon L\to M_C$ such that $\Omega_L=\iota^*i^*\Omega$.

  A section $(L,\Omega_L,\iota)$ of the gauge flow generated by a constraint
  $C$ on $(M,\Omega)$ is {\em global} if for every $y\in M_C$ there is an
  $x\in L$ such that $y=F_C(\epsilon)\iota(x)$ for some $\epsilon$.

  A section $(L,\Omega_L,\iota)$ of the gauge flow generated by a constraint
  $C$ on $(M,\Omega)$ is {\em incisive} if, for all $x_1,x_2\in L$,
  $\iota(x_1)=F_C(\epsilon)\iota(x_2)$ for some $\epsilon$ implies $x_1=x_2$.
\end{defi}

The pull-back $\iota^*\colon C^{\infty}(M_C)\to C^{\infty}(L)$ maps functions
on the constraint surface $M_C$ to {\em gauge-fixed observables} on $L$. 
\begin{prop} \label{PropClass} If $(L,\Omega_L,\iota)$ is a global incisive
  section of the gauge flow of $C$ on $(M,\Omega)$, the gauge-fixed phase
  space $(L,\Omega_L)$ is symplectomorphic to the reduced phase space
  $(M/C,\Omega_{M/C})$.
\end{prop}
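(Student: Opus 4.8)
The plan is to produce the symplectomorphism explicitly as the composite $\phi:=p\circ\iota\colon L\to M/C$, where $p\colon M_C\to M/C$ is the projection onto the orbit space, and then to check in sequence that $\phi$ is a bijection, a diffeomorphism, and that it intertwines the two symplectic forms. The map is well defined and smooth, being a composition of the embedding $\iota$ with the (submersive) quotient projection $p$. Injectivity is exactly where the \emph{incisive} hypothesis enters: if $\phi(x_1)=\phi(x_2)$, then $\iota(x_1)$ and $\iota(x_2)$ lie on a common gauge orbit, so $\iota(x_1)=F_C(\epsilon)\iota(x_2)$ for some $\epsilon$, and incisiveness yields $x_1=x_2$. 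Surjectivity is where the \emph{global} property enters: given any point of $M/C$, lift it to some $y\in M_C$, write $y=F_C(\epsilon)\iota(x)$ by globality, and observe that $\phi(x)=p(\iota(x))=p(y)$ is the chosen point.

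The symplectic bookkeeping is the short algebraic core of the argument. By the construction of the reduced phase space, the presymplectic form $i^*\Omega$ on $M_C$ descends through $p$, i.e.\ $p^*\Omega_{M/C}=i^*\Omega$. Hence
\[
 \phi^*\Omega_{M/C}=\iota^*p^*\Omega_{M/C}=\iota^*i^*\Omega=\Omega_L\,,
\]
where the final equality is precisely the defining condition of a section in Definition~\ref{ClassSection}. In particular $\phi$ pulls back the nondegenerate form $\Omega_{M/C}$ to the nondegenerate form $\Omega_L$, which forces ${\rm d}\phi$ to be injective at every point, so $\phi$ is an immersion.

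It then remains only to upgrade the immersive bijection $\phi$ to a diffeomorphism: a surjective immersion between (second countable) manifolds cannot decrease dimension, since the image of an immersion from a manifold of strictly smaller dimension has measure zero, so $\dim L=\dim(M/C)$; an immersion between equidimensional manifolds is a local diffeomorphism, and a bijective local diffeomorphism is a diffeomorphism. Together with the displayed identity, this makes $\phi$ the desired symplectomorphism. The one point where genuine care is needed — and which is tacitly subsumed under the phrase ``symplectic reduction'' throughout — is the manifold regularity underlying $p$: that $C$ has a suitable value at $0$, that the orbit space $M/C$ is a smooth manifold, and that $p$ is a submersion with $\dim(M/C)=\dim M-2$. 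Granting these standard assumptions, the rest of the argument is essentially immediate, with the incisive and global conditions doing exactly the work of injectivity and surjectivity respectively.
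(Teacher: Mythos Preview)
Your proof is correct and takes a genuinely different route from the paper's. The paper works in local Darboux coordinates: near any point of $M_C$ it uses $C$ and the flow parameter $z$ as a conjugate pair, completes to a Darboux chart $(z,C,q_j,p_j)$, and then argues that any section must locally be a graph $y\mapsto(s(y),z(s(y)))$ with $s$ canonical, from which $\Omega_L=s^*\Omega_{M/C}$ follows by inspection. You instead define the candidate symplectomorphism globally as $\phi=p\circ\iota$ and read off bijectivity directly from the incisive and global hypotheses, then use the defining relation $p^*\Omega_{M/C}=i^*\Omega$ together with the section condition $\Omega_L=\iota^*i^*\Omega$ to obtain $\phi^*\Omega_{M/C}=\Omega_L$ in one line. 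Your approach is cleaner and coordinate-free, and it makes transparent exactly which hypothesis does which job; the paper's approach, while more computational, has the virtue of exhibiting the local normal form of a section explicitly (a graph over the reduced coordinates), which is useful for the later discussion of evolving sections and deparameterization. Your final step---upgrading an immersive bijection to a diffeomorphism via a dimension count---is correct but slightly heavier than needed: once you know $\dim L=\dim(M/C)=\dim M-2$ from the standard symplectic reduction assumptions you already invoke, nondegeneracy of the pulled-back form immediately gives that $d\phi$ is an isomorphism at each point.
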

\begin{proof}
  Since a global incisive section intersects each gauge orbit exactly once,
  there is a bijection between $L$ and the reduced phase space. The
  symplectomorphism property can then be shown in local coordinates: Locally,
  $C$ can be used as a coordinate in a neighborhood around a given point $x\in
  M_C\subset M$. We use the gauge flow $F_C(\epsilon)\colon x\mapsto
  x_{\epsilon}\in M_C$ to introduce a second coordinate $z$ such that $z(x)=0$
  and $z(x_{\epsilon})=\epsilon$. The two functions $C$ and $z$ are
  canonically conjugate: $\{z,C\}=X_Cz= {\rm d}z/{\rm d}\epsilon=1$. By
  Darboux' theorem, there are ${\rm dim}M-2$ additional local coordinates
  $q_j$ and $p_k$, such that
\[
 \Omega_M= {\rm d}z\wedge{\rm d}C+ \sum_{j=1}^{\frac{1}{2}{\rm dim}M-1} {\rm
   d}q_j\wedge{\rm d}p_j \,.
\]
Since $0=\{q_j,z\}= \partial q_j/\partial C$ and
$0=\{p_j,z\}= \partial p_j/\partial C$, $q_j$ and $p_k$ together with $z$
define a local coordinate system on $M_C$.

On $M_C$, $i^*\Omega= \sum_{j=1}^{\frac{1}{2}{\rm dim}M-1} {\rm
  d}q_j\wedge{\rm d}p_j$ is a presymplectic form. Local intervals of gauge
orbits of $C$ are the coordinate lines of $z$. Therefore, $q_j$ and $p_k$ are
local coordinates on the reduced phase space, with symplectic form
$\Omega_{M/C}=\sum_{j=1}^{\frac{1}{2}{\rm dim}M-1} {\rm d}q_j\wedge{\rm
  d}p_j$. In order for $\iota^*i^*\Omega$ to be symplectic, any section of the
gauge flow must locally be of the form $\iota\colon y \mapsto (s(y), z(s(y)))$
with a canonical transformation $s\colon y\mapsto (q_J,p_k)$ and a smooth
function $z(q_j,p_k)$. Therefore, $\Omega_L=\iota^*i^*\Omega=
s^*\sum_{j=1}^{\frac{1}{2}{\rm dim}M-1} {\rm d}q_j\wedge{\rm
  d}p_j=s^*\Omega_{M/C}$.
\end{proof}

An incisive section $(L,\Omega_L,\iota)$\ {\em evolves} in $M$ if there is a
1-parameter family of incisive sections $(L,\Omega_L,\iota_t)$, $t\in
(t_1,t_2)\subset {\mathbb R}$, such that $\iota=\iota_{t_0}$\ for some $t_0\in
(t_1, t_2)$, and $L\times (t_1,t_2)\to {\cal U}, (y,t)\mapsto \iota_t(y)$ is a
diffeomorphism to an open submanifold ${\cal U}\subset M_C$. For each value of
$t \in (t_1, t_2)$\ the hypersurface $\iota_t (L) \subset M_C$\ plays the role
of a surface of a fixed value of time. With this structure in place, any
function $f \in C^\infty(M_C)$\ can be viewed as evolving in time along any
given gauge orbit by tracing it along the intersection between the orbit and
the constant-time surfaces.
\[
f_{[x]}(t) = f\left( [x] \cap \iota_t (L) \right) \ ,
\]
where $[x]$\ is the gauge orbit passing through some $x \in M_C$.

This time evolution takes place on the pre-symplectic manifold $M_C$\ which is
not the usual setting for describing a dynamical physical system. Moreover, it
is not a good starting point for standard quantization as functions on $M_C$\
do not have a well-defined Poisson bracket due to the degeneracy of
$i^*\Omega$. However, an evolving incisive section defines a family of
functions in $C^{\infty}(\mathcal{U})$, namely those that are constant along
the curves traced by points in $L$\ from one section to the next $x(t) =
\iota_t (y)$\ for a fixed $y \in L$. This family of functions can be
arbitrarily extended to the entirety of $M_C$:
\begin{defi} \label{FashClass} A subset ${\cal F}\subset C^{\infty}(M_C)$ is
  {\em fashionable} with respect to an evolving incisive section
  $(L,\Omega_L,\iota_t)$ if for all $t, t' \in (t_1, t_2)$\ the map
  $\iota_t^*\colon {\cal F}\to C^{\infty}(L)$ is a bijection and $\iota_t^* f
  = \iota_{t'}^*f$\ for all $f\in \mathcal{F}$.
\end{defi}
Given a choice of fashionables, each function on the symplectic manifold $L$\
corresponds to an evolving observable on $M_C$\ and conversely, the set of
evolving observables $\mathcal{F}$\ inherits a Poisson bracket from the
symplectic structure on $L$.

This notion of evolution has no known analog in the reduced phase space. In
practice it is usually implemented through {\em deparameterization}
\cite{GenHamDyn1,Blyth,RovelliTime,PartialCompleteObs,PartialCompleteObsII},
provided the constraint surface admits a factorization of the form $M_C\cong
\iota(L) \times {\mathbb R}\ni (\iota(x),Z)$ with a global coordinate
$Z\in{\mathbb R}$ such that $\{Z,C\}\not=0$. Then the map $\iota_t\colon L\to
M_C, x\mapsto (x,t)$\ defines a family of global incisive sections. Evolution
defined by this family of sections and ${\cal F}$, the $Z$-independent
functions on $M_C$, is called {\em global relational evolution} with respect
to $Z$.
\begin{ex}
 Let $M={\mathbb R}^{2(n+1)}\ni (Z,E,q_1,p_1,\ldots q_n,p_n)$ with
\[
 \Omega={\rm d}Z\wedge {\rm d}E+ \sum_{i=1}^n {\rm d}q_i\wedge {\rm d}p_i
\]
and a constraint $C=E+h(Z,q_i,p_i)$ linear in $E$. The constraint surface here
consists of points with coordinates $(Z, -h(q_i, p_i), q_i, p_i)$, so that
$(Z, q_i, p_i)$ serve as coordinates on $M_C$. The choice $L= {\mathbb
  R}^{2n}\ni (Q_1,P_1,\ldots Q_n,P_n)$ then leads to global incisive
deparameterized sections via $\iota_t\colon (Q_i, P_i) \mapsto (t, Q_i, P_i)
\in M_C$. Since for $C^{\infty}(M_C) \ni f=f(Z, q_i, p_i)$\ under this family
of embeddings $(\iota_t^* f)(Q_i, P_i) = f(t, Q_i, P_i)$, the corresponding
fashionables consist precisely of the functions that do not depend on
$Z$. Since the Hamiltonian vector field of $E$\ generates translations in $Z$\
and hence shifts from $\iota_t (L)$\ to $\iota_{t'}(L)$, the set of
fashionables correspond to the Poisson commutant $E'=\{f\in
C^{\infty}(M)\colon \{f,E\}=0\}$ of $E$ pulled back to $M_C$. Relational
evolution with respect to $Z$ is identical with Hamiltonian evolution in $L$
generated by $H_t(Q_i,P_i) = h(t, Q_i, P_i)$: The gauge flow $F_C(\epsilon)$
on $M$ maps a function $g\in C^{\infty}(M)$ to
$g_{\epsilon}=F_C(\epsilon)^*g$. In an infinitesimal version, $\delta g/\delta
\epsilon := \lim_{\epsilon\to0} (g_{\epsilon}-g)/\epsilon$ is given by
\[
  \frac{\delta g}{\delta \epsilon} = \{g,C\}= \frac{\partial g}{\partial
    Z}+ \{g,h\}\,.
\]
Specifically,
\[
 \frac{\delta q_i}{\delta \epsilon}= \frac{\partial h}{\partial p_i}\quad,\quad
 \frac{\delta p_i}{\delta \epsilon}= -\frac{\partial h}{\partial q_i}\quad,\quad
 \frac{\delta Z}{\delta \epsilon}=1\,.
\]
This pulls back to $L$ as
\[
 \frac{\delta Q_i}{\delta \epsilon}= \frac{\partial H_t}{\partial
   P_i}\quad,\quad 
 \frac{\delta P_i}{\delta \epsilon}= -\frac{\partial H_t}{\partial
   Q_i}\,. 
\]
For a function on $L$, we have
\begin{eqnarray*}
\frac{\delta f}{\delta \epsilon} &=& \lim_{\epsilon\to 0}
\frac{f(Q_i+\epsilon \delta Q_i/\delta\epsilon, P_i+\epsilon\delta
  P_i/\delta\epsilon)- f(Q_i,P_i)}{\epsilon}\\
&=&\sum_{i=1}^n\left( \frac{\partial f}{\partial 
   Q_i} \frac{\delta Q_i}{\delta \epsilon}+ \frac{\partial 
   f}{\partial P_i}\frac{\delta P_i}{\delta \epsilon}\right) =  \{f, H\}_L
\end{eqnarray*}
computed precisely according to Hamilton's equations on $L$.
\end{ex}

The quantization of a reduced phase space exists in the sense of deformation
quantization \cite{DefQuant1,DefQuant2} \`a la Fedosov or Kontsevich
\cite{Kontsevich}. On the other hand, dynamical symplectic reduction is
usually quantized only for deparameterized systems as in the immediately
preceding example, using a standard Hilbert-space quantization of $L$ on which
the reduced Hamiltonian $H_t(Q_i,P_i)$ is represented as an operator. In such
examples, quantum evolution exists and is unitary, but there are long-standing
problems when one tries to extend this notion to more complicated constrained
systems in which no global analog of $Z$ exists
\cite{KucharTime,Isham:Time}. For instance, given a constraint quadratic in
$E$ on the same phase space as in the example, $\{Z,C\}\propto E$ may become
zero along a gauge orbit such that $Z={\rm const}$ no longer defines a gauge
section. 

Heuristically, if there is no global analog of $Z$, evolution cannot be
represented by a family of unitary operators on a Hilbert space. Based on this
observation, we diagnose the main problem of standard approaches of
deparameterization as an over-reliance on Hilbert-space
representations. In order to solve this problem, we initiate a theory of {\em
  algebraic sections} as an algebraic quantization of classical gauge sections
in systems with a single constraint. By generalizing crucial steps of
deparameterization and keeping them strictly at the level of algebras of
observables, utilizing factor spaces of algebras instead of kernels of
constraint operators acting on a Hilbert space, we define a quantum version of
Proposition~\ref{PropClass} and derive
several necessary conditions that must be met by the constrained system in order for it to be dynamically reducible. While it remains difficult to find
sufficient conditions for such a result, the usefulness of our necessary
conditions is demonstrated by their restrictive nature in a specific example relevant to cosmology 
provided at the end of this paper. 

After setting the stage in Section~\ref{sec:AlgebraicApproach}, in
Section~\ref{sec:MainResults} we define algebraic qantization of dynamical
symplectic reduction and prove several properties of the resulting
quantum evolution on an algebra of observables. For deparameterizable systems,
which can be quantized by well-established means as representations on a fixed
Hilbert space, our algebraic results provide a more general treatment because
they apply to all possible choices of the Hilbert space. Moreover, our
construction applies to non-deparameterizable systems, even though the results
in that case are less specific than for deparameterizable systems. Several
results and examples in Section~\ref{sec:linearizing} will demonstrate the
non-trivial nature of our constructions.

\section{Preliminary constructions} \label{sec:AlgebraicApproach} 

To set the stage, we introduce in this section more details of the systems of
interest, standard procedures to quantize them as well as their limitations,
and basic algebraic ingredients to be used in our main results.

\subsection{Systems of interest}
\label{sec:Systems}

Quantum cosmology presents several versions of singly constrained systems,
resulting from the generally covariant theory of general relativity from which
their classical analogs are derived. Since space appears to be homogeneous and
isotropic over long distances, one can approximately describe the expansion of
the universe by a single time-dependent function, the scale factor $a(t)$,
subject to a single constraint. The latter is derived from the Friedmann
equation
\[
 \frac{1}{a^2}\left(\frac{{\rm d}a}{{\rm d}t}\right)^2=\frac{8\pi}{3} \rho(a)
\]
with the matter energy density $\rho(a)$ (written in units such that Newton's
constant and the speed of light equal one). In canonical form, a multiple of
the expansion rate $a^{-1}{\rm d}a/{\rm d}t$ is canonically conjugate to the
volume $V=a^3$,
\[
 p_V=-\frac{1}{4\pi a} \frac{{\rm d}a}{{\rm d}t}
\]
such that $\{V,p_V\}=1$.
The canonical Friedmann equation,  
\[
 -6\pi Vp_V^2+ E(V)=0
\]
can be written as a constraint equation which equates the matter energy
$E(V)=V\rho(V^{1/3})$ with a polynomial function of the volume and the
expansion rate. For non-relativistic matter (dominant at late times in the
universe), the energy density changes only by dilution in the expanding
universe, such that $\rho\propto 1/V$, or $E(V)=E$ constant.  The resulting
constraint
\begin{equation}\label{FriedmannC}
 C= -6\pi Vp_V^2+E\,,
\end{equation}
polynomial in $V$ and $p_V$ and linear in $E$, is a prototype of a large
set of models that have been studied to understand quantum evolution in
covariant systems. 

The dependence of $C$ on $V$ and $p_V$ varies according to the cosmological
model of interest. Moreover, there are additional anisotropy degrees of
freedom if one drops the assumption of spatial isotropy. Many models of this
type are known to have chaotic dynamics \cite{Billiards}, such that there is
usually no practical access to the reduced phase space $M/C$. Another set of
models is motivated by modified gravity, for instance the application of a
variety of quantization procedures, which may replace the Heisenberg algebra
generated by $V$ and $p_V$ by a different Lie algebra. Examples
of this type have been produced by models of loop quantum cosmology
\cite{LivRev,ROPP}, many of which can be formulated based on the Lie algebra
${\rm sl}(2,{\mathbb R})$ instead of the Heisenberg algebra generated by $V$
and $p_V$
\cite{BouncePert,CVHComplexifier,CVHPolymer,CVHProtected,NonBouncing}.

(If there are constraints in addition to $C$, which however do not contribute
to the dynamical flow, one can combine standard symplectic reduction with
dynamical symplectic reduction. For instance, in some cosmological models, the single variable $V$ could be
replaced by a pair $(V_1,V_2)$ with momenta $(p_{V_1},p_{V_2})$, subject to
rotational symmetry in the plane. A phase-space formulation would then
introduce a non-dynamical constraint $J=V_1p_{V_1}-V_2p_{V_2}=0$ given by
angular momentum in the plane. If the dynamical constraint $C$ extended to the
pair $(V_1,V_2)$ is rotationally symmetric, it has vanishing Poisson bracket
with the non-dynamical constraint, $\{C,J\}=0$, at least on the constraint
surface $J=0$ of $J$. In this example, and in many others of interest to gravity and cosmology, the Hamiltonian constraint is accompanied by finitely many additional constraints. In such cases standard
symplectic reduction of $J$ can easily be combined with dynamical symplectic
reduction of $C$, also at the quantum level.  A further generalization to
field theories would be required if one were to include perturbations around
isotropic cosmological models in order to describe inhomogeneous matter fields
on an expanding space-time background. Because time coordinates can be changed
locally in general relativity, there would then be not only an infinite number
of degrees of freedom, but also an infinite number of dynamical constraints,
one per point in space. Although such systems are certainly important for
modern cosmology, investigations of deparameterization in this context have
remained in their infancy. In particular, there seems to be no consensus so
far on the physical properties that should be required of deparameterization
in the presence of multiple dynamical constraints. In what follows, we will
therefore consider only the case of a single dynamical constraint.)

In order to introduce evolution in the constrained picture, it is common to
consider the constant $E$ as the momentum of a canonical variable $Z$ on which
the constraint does not depend. The constraint can then be quantized on the
kinematical Hilbert space ${\cal H}_{\rm kin}= L^2({\mathbb R},{\rm
  d}\lambda)\otimes {\cal H}_{\cal B}$ whose first factor is the Schr\"odinger
representation of the Heisenberg algebra generated by $E$ and $Z$, such that
$E=i\hbar \partial/\partial\lambda$, while ${\cal H}_{\cal B}$ is a unitary
representation of the algebra $\mathcal{B}$ generated by the original
canonical variables, such as the Heisenberg algebra generated by $V$ and
$p_V$.

If $C$ does not
depend on $Z$, as in (\ref{FriedmannC}), the standard procedure of
deparameterization, first suggested by Dirac \cite{GenHamDyn1} and applied to
quantum cosmology starting with \cite{DeWitt}, can be used to quantize the
dynamically constrained system: The constraint $C$ is represented as
an operator on ${\cal H}_{\rm kin}$ such that
\[
 C\psi= i\hbar\frac{\partial\psi(\lambda)}{\partial\lambda}-
 H\psi(\lambda) \ ,
\]
where $H$ is a self-adjoint representation of the $(V,p_V)$-dependent
contribution to $C$ on ${\cal H}_{\cal B}$. Zero eigenvectors of $C$ are
therefore given by
\[
 \psi(\lambda)=\exp(-i\lambda H/\hbar)\psi_0
\]
with arbitrary $\psi_0\in{\cal H}_{\cal B}$ as an ``initial state'' with
respect to evolution in $\lambda$. Because $U(\lambda)=\exp(-i\lambda
H/\hbar)$ is unitary, $\psi(\lambda)$ is not normalizable in ${\cal H}_{\rm
  kin}$, such that zero is in the continuous part of the spectrum of $C$.

In order to introduce a Hilbert-space structure on the solutions
$\psi(\lambda)$, we again make use of the unitarity of $U(\lambda)$ and define
a new inner product on the solution space by recycling the inner product
$(\cdot,\cdot)_{\cal B}$ on ${\cal H}_{\cal B}$: The physical inner product
\[
 (\psi(\lambda),\phi(\lambda))_{\rm phys}= (\psi_0,\phi_0)_{\cal B}
\]
turns the solution space into the physical Hilbert space ${\cal H}_{\rm phys}$
(which is not a subspace of ${\cal H}_{\rm kin}$). Unitarity of
$U(\lambda)$ implies that the inner product does not depend on
the choice of an initial $\lambda$-time:
\[
 (\psi(\lambda-\lambda_0),\phi(\lambda-\lambda_0))_{\rm phys}=
 (\psi_0,U(-\lambda_0)^{\dagger}U(-\lambda_0)\phi_0)_{\cal B}= (\psi_0,\phi_0)_{\cal
   B}\,.
\]
Moreover, any operator that commutes with $U(\lambda)$, called a Dirac
observable, has a unique representation on ${\cal H}_{\rm phys}$ However, it
is usually hard to compute Dirac observables or to show the existence of a
large-enough set, and they do not evolve because, by definition, they commute
with $U(\lambda)$.

In order to introduce an evolution picture on ${\cal H}_{\rm phys}$, one often
represents any operator $B\in{\cal B}$ by fixing an initial time, such that
the action of $B$ on $\psi_0\in{\cal H}_{\cal B}$ can be used. However, this
representation is not natural because choosing a different initial time
$\lambda_0$, such that $\psi_0$ is replaced by $U(-\lambda_0)\psi_0$, leads to
a different (but unitarily equivalent) representation unless $B$ commutes with
$U(-\lambda_0)$. With this construction, evolution is realized by the
$\lambda$-dependent expectation values
\[
 (\psi(\lambda),B\psi(\lambda))_{\rm phys}= (\psi_0,B\psi_0)_{\cal B}\,.
\]
The compution of Dirac observables can therefore be avoided if one accepts the
dependence of representations on the choice of an initial time. 

However, the applicability of these constructions is limited because they rely
on time-independent constraints which do not depend on the variable $Z$
canonically conjugate to $E$. Generic matter models in quantum cosmology and
other fields require such a dependence. For instance, in order to determine
evolution with respect to different choices of time coordinates, one would
have to interpret the gauge flows generated by $NC$ with some $N\in{\cal A}$,
rather than $C$ itself, and $N$ may well depend on $Z$ in cases of interest.
Moreover, relativistic matter systems imply an energy density quadratic in $E$
rather than linear, a prominent example given by a homogeneous scalar field
$Z$ (such as the inflaton often assumed in early-universe cosmology) with
energy density
\begin{equation} \label{ScalarDens}
 \rho_{\rm scalar}= \frac{1}{2}\frac{E^2}{V^2}+ W(Z)
\end{equation}
where the function $W(Z)$ is the scalar potential. The resulting constraint,
\[
 -12\pi V^2p_V^2+ E^2+2V^2W(Z)=0\,,
\]
is still polynomial in $V$ and $p_V$ but quadratic in $E$.  If $W(Z)=W$ is
constant, one can often ``take a square root'' and replace the constraint for
a scalar energy density with a constraint linear in $E$ by factorization,
\[
 \left(E-\sqrt{2}V \sqrt{6\pi p_V^2-W}\right) \left(E+\sqrt{2}V \sqrt{6\pi
     p_V^2-W}\right)=0\,,
\]
followed by selecting one of the two parentheses as a ``linearized''
constraint. To the new, linear constraint one can then apply
deparameterization as sketched above \cite{Blyth}. However, constant $W(Z)=W$
is not generic within the set of physically motivated models, and for
non-constant $W(Z)$ any factorization is non-trivial because $[W(Z),E]\not=0$.

Here, we propose an alternative way of reducing a quantum system with a
Hamiltonian constraint to a dynamical system, that solves most of these
problems and also reveals the non-trivial nature of introducing a well-defined
evolution picture. We completely avoid the construction of a physical Hilbert
space ${\cal H}_{\rm phys}$ distinct from the original, kinematical Hilbert
space ${\cal H}_{\rm kin}$. (Nevertheless, we will show that, if desired, a
physical Hilbert space can be derived from a subset of our ingredients using
the Gelfand-Naimark-Segal construction.) This feature brings our constructions
closer to a relativistic setting which seems violated in the construction
described above in which time is an operator represented only on ${\cal
  H}_{\rm kin}$, while all other observables are represented on ${\cal H}_{\rm
  phys}$. Our approach is based on an algebraic notion of quantum states.

\subsection{Algebraic states}

The set of observables of a quantum system is given by the $*$-invariant
elements of a complex, unital $*$-algebra $\mathcal{A}$. In this paper, we
assume that ${\cal A}$ is associative. (This assumption rules out some
physical systems, such as magnetic monopole densities \cite{Malcev,JackiwMon},
which however are usually considered exotic.) Our main examples will be
enveloping algebras of Lie algebras, which we assume to be represented on a
kinematical Hilbert space as unbounded operators. These algebras carry a
useful topology, introduced as the $\rho$-topology in \cite{UnboundedTop}. 

Physical states of the quantum system defined by ${\cal A}$ are normalized
positive linear functionals $\omega\colon \mathcal{A} \to {\mathbb C}$, such
that $\omega(\mathbf{1})=1$ and
\[
\omega \left(AA^*\right) \geq 0
\quad\mbox{for all } A\in \mathcal{A} \, . 
\]
According to Theorem~1 in \cite{UnboundedTop}, such functionals are continuous
in the $\rho$-topology of ${\cal A}$.  The condition that
$\omega\left(AA^*\right)$\ is real for all $A \in \mathcal{A}$
implies that a physical state is real ---
$\omega(A)=\overline{\omega(A^*)}$. In addition, the stronger inequality
condition leads to the Cauchy--Schwarz inequality
\[
 |\omega(AB^*)|^2 \leq |\omega(AA^*)| |\omega(BB^*)| \quad\mbox{ for all } A,
 B \in  \mathcal{A} \,;
\]
see for instance \cite{LocalQuant}.

As we will see, intermediate stages of quantum symplectic reduction require a
weaker notion of states which are not completely positive. We begin with
\begin{defi}\label{def:AlgebraicStates}
  The set of {\em kinematical states} $\Gamma$\ on a unital $*$-algebra
  $\mathcal{A}$ is the set of continuous normalized linear functionals
  $\omega\colon \mathcal{A}\to {\mathbb C}$, such that $\omega(\mathbf{1})=1$.
\end{defi}
Given the normalization condition, $\Gamma$ is not a vector space, but it is
closed with respect to normalized sums: for any integer $N\geq 1$, states
$\omega_1,\ldots,\omega_N\in\Gamma$ and complex numbers $a_1,\ldots,a_N$,
$\sum_{j=1}^N a_j\omega_j \in \Gamma$ if $\sum_{j=1}^Na_j = 1$.

\begin{defi} \label{def:Dflow} A {\em dynamical flow} on $\mathcal{A}$\ is a
  one-parameter family of derivations $\vec{D}_t\colon (a, b) \times
  \mathcal{A} \rightarrow \mathcal{A}$, where $(a, b) \subset \mathbb{R}$,
  which is compatible with the $*$-structure on $\mathcal{A}$ --- $(\vec{D}_t
  A)^* = \vec{D}_t A^*$ for all $A \in \mathcal{A}$ --- and such that $\omega(
  \vec{D}_t A)$ is continuously differentiable with respect to $t$ for all
  $\omega \in \Gamma$.

  Given a dynamical flow $\vec{D}_t$ on ${\cal A}$, the {\em time evolution}
  of a kinematical state $\omega\in\Gamma$ is a map $(a,b)\times{\cal
    A}\to{\mathbb C}, (t,A)\mapsto\omega_t(A)$ such that $\omega_t$ is a
  kinematical state and
\begin{equation} \label{eq:Ddynamics}
\frac{{\rm d}}{{\rm d}t} \omega_t(A) = \omega_t \left( \vec{D}_t A \right)
\end{equation}
for all $t\in(a,b)$, with initial conditions $\omega_{t_0}=\omega$\ for some
$t_0\in(a,b)$.
\end{defi}

In order to make sure that a state has a unique time evolution (or a unique
gauge flow in what follows), we will assume that, for all algebras we
consider, a differential equation of the form (\ref{eq:Ddynamics}) has a
unique solution with the specified initial condition. Standard results do not
necessarily apply because our differential equations, though linear, are, in
general, formulated on an infinite-dimensional space and may have
time-dependent coefficients. (Although we will not pursue a formal proof of
existence and uniqueness of solutions, we note that time evolution in systems
of interest in physics is usually obtained as a unique Dyson series on a
Hilbert space; see for instance \cite{DysonSeries}.)

\begin{lemma}\label{EvolutionLemmaKin}
  If $\omega\in\Gamma$ is a kinematical state, its time evolution with respect
  to $\vec{D}_t$, $t\in (a,b)$, returns a kinematical state for any $t\in
  (a,b)$.
\end{lemma}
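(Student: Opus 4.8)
The plan is to verify that the time-evolved functional $\omega_t$ defined by the flow equation \eqref{eq:Ddynamics} satisfies the two defining properties of a kinematical state in Definition~\ref{def:AlgebraicStates}: normalization, $\omega_t(\mathbf{1})=1$, and continuity. (Linearity in $A$ is immediate, since the right-hand side of \eqref{eq:Ddynamics} is linear in $A$ and the initial datum $\omega=\omega_{t_0}$ is linear; alternatively one appeals to the assumed uniqueness of solutions and notes that the map $A\mapsto(\text{solution through }\omega(A))$ inherits linearity.) First I would establish normalization. Since $\vec{D}_t$ is a derivation, it annihilates the unit: $\vec{D}_t\mathbf{1}=\vec{D}_t(\mathbf{1}\cdot\mathbf{1})=(\vec{D}_t\mathbf{1})\mathbf{1}+\mathbf{1}(\vec{D}_t\mathbf{1})=2\vec{D}_t\mathbf{1}$, hence $\vec{D}_t\mathbf{1}=0$. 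Applying \eqref{eq:Ddynamics} with $A=\mathbf{1}$ gives $\tfrac{\rm d}{{\rm d}t}\omega_t(\mathbf{1})=\omega_t(\vec{D}_t\mathbf{1})=\omega_t(0)=0$, so $\omega_t(\mathbf{1})$ is constant in $t$; with the initial condition $\omega_{t_0}(\mathbf{1})=\omega(\mathbf{1})=1$ we conclude $\omega_t(\mathbf{1})=1$ for all $t\in(a,b)$.

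The remaining point is continuity of $\omega_t$ in the $\rho$-topology of $\mathcal{A}$, for each fixed $t$. Here I would use the fact, invoked in the excerpt, that we are assuming existence and uniqueness of solutions to \eqref{eq:Ddynamics}; the cleanest argument identifies $\omega_t$ as a normalized (convex-affine) combination of kinematical states built out of $\omega$ and the flow, using the closure of $\Gamma$ under normalized sums noted right after Definition~\ref{def:AlgebraicStates}. Concretely, one approach is to write the formal solution as a Dyson-type series $\omega_t(A)=\sum_{n\ge 0}\int_{t_0}^{t}\!\!\int_{t_0}^{s_1}\!\cdots \omega\big(\vec{D}_{s_1}\cdots\vec{D}_{s_n}A\big)\,{\rm d}s_n\cdots{\rm d}s_1$ (the $n=0$ term being $\omega(A)$): each partial sum is a linear functional that is a finite normalized combination of continuous functionals of the form $A\mapsto\omega(\vec{D}_{s_1}\cdots\vec{D}_{s_n}A)$ — continuous because $\omega$ is continuous and each $\vec{D}_{s}$ is an endomorphism of the topological algebra $\mathcal{A}$ (the $\rho$-topology makes multiplication, hence derivations, continuous, cf.\ \cite{UnboundedTop}) — and then one argues that the limit functional remains continuous. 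Alternatively, and perhaps more safely given that we are not proving convergence estimates, I would argue directly: fix $t$, and use that $\omega_t-\omega$ is determined by $\tfrac{\rm d}{{\rm d}s}(\omega_s-\omega)(A)=\omega_s(\vec{D}_s A)$, so $\omega_t(A)=\omega(A)+\int_{t_0}^{t}\omega_s(\vec{D}_s A)\,{\rm d}s$; continuity of $A\mapsto\omega_t(A)$ then follows provided the map $A\mapsto\int_{t_0}^t\omega_s(\vec{D}_s A)\,{\rm d}s$ is continuous, which one gets from continuity of $\omega_s\circ\vec{D}_s$ for each $s$ together with continuity (joint in $(s,A)$, or via the differentiability hypothesis in Definition~\ref{def:Dflow}) of $s\mapsto\omega_s(\vec{D}_s A)$ and an interchange of limit and integral.

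I expect the main obstacle to be precisely this continuity step: $\rho$-continuity of the integrated/limit functional is not a formal consequence of pointwise statements and genuinely uses the topological structure of $\mathcal{A}$ and the standing assumption that \eqref{eq:Ddynamics} is well-posed. If the authors prefer to keep the argument light, the honest move is to invoke the assumed uniqueness of the solution to \eqref{eq:Ddynamics} (stated just before the lemma) to define $\omega_t$ and to cite Theorem~1 of \cite{UnboundedTop} together with the closure of $\Gamma$ under normalized finite combinations, treating the passage to the limit in the Dyson series as justified by the same well-posedness assumption; a fully rigorous treatment would require control of the series in the $\rho$-topology, which the excerpt explicitly declines to pursue. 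Everything else — normalization via $\vec{D}_t\mathbf{1}=0$, reality/linearity — is routine.
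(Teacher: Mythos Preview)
Your normalization argument is essentially the paper's: they also show $\vec{D}_t\mathbf{1}=0$ (via $\vec{D}_t(A\cdot\mathbf{1})=\vec{D}_t(A)+A\vec{D}_t(\mathbf{1})$ rather than your $\vec{D}_t(\mathbf{1}\cdot\mathbf{1})=2\vec{D}_t\mathbf{1}$, but this is cosmetic), conclude ${\rm d}\omega_t(\mathbf{1})/{\rm d}t=0$, and invoke the initial condition. That is the entire content of the paper's proof.

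Where you diverge is in the continuity discussion: the paper simply does not address it. Their proof consists solely of the normalization step; linearity and $\rho$-continuity of $\omega_t$ are left implicit, presumably absorbed into the standing assumption (stated just before the lemma) that solutions to \eqref{eq:Ddynamics} exist and are unique within $\Gamma$. Your Dyson-series and integral arguments are a reasonable attempt to fill this in, and you correctly flag that they are not fully rigorous without further control in the $\rho$-topology---but you are working harder than the authors did. If your aim is to match the paper, you can stop after normalization; if your aim is a genuinely complete proof, your continuity sketch identifies the right issues but, as you yourself note, would need the convergence/well-posedness input that the paper explicitly declines to supply.
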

\begin{proof}
 By definition, a derivation satisfies
\begin{equation} \label{eq:Derivation}
\vec{D}_t \left( AB \right)  = \vec{D}_t \left(A\right) B +A\vec{D}_t
\left(B\right) 
\end{equation}
for all $A,B\in {\cal A}$. Choosing $B=\mathbf{1}$, we have
$\vec{D}_t(A)=\vec{D}_t(A)+ A\vec{D}_t(\mathbf{1})$ for all $A\in{\cal A}$. It
follows that $\vec{D}_t(\mathbf{1})=0$, whence ${\rm
  d}\omega_t(\mathbf{1})/{\rm d}t=0$ for all $t$. Therefore,
$\omega_t(\mathbf{1})=1$ for all $t$.
\end{proof}

\begin{lemma} \label{EvolutionLemma}
 If $\omega\in\Gamma$ is positive, its time evolution is positive.
\end{lemma}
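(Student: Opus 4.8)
The plan is to show that positivity $\omega_t(AA^*)\geq 0$ is preserved by the flow, by combining the evolution equation with a Gr\"onwall-type argument. First I would fix $A\in\mathcal{A}$ and consider the real-valued function $g(t):=\omega_t(AA^*)$, which is continuously differentiable by the definition of a dynamical flow, with initial value $g(t_0)=\omega(AA^*)\geq 0$. Differentiating and using the Leibniz rule \eqref{eq:Derivation} for the derivation $\vec D_t$ gives
\[
\frac{\md}{\md t}\,\omega_t(AA^*) = \omega_t\!\left(\vec D_t(AA^*)\right) = \omega_t\!\left((\vec D_tA)A^*\right) + \omega_t\!\left(A(\vec D_tA^*)\right) = \omega_t\!\left((\vec D_tA)A^*\right) + \omega_t\!\left(A(\vec D_tA)^*\right),
\]
where the last equality uses $*$-compatibility of $\vec D_t$. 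Since $\omega_t$ is a kinematical state it is real, so the two terms on the right are complex conjugates of each other and the derivative equals $2\,\mathrm{Re}\,\omega_t\!\left((\vec D_tA)A^*\right)$.

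The key estimate is to bound this derivative by $g(t)$ itself up to a locally integrable factor, so that Gr\"onwall's inequality forces $g$ to stay nonnegative. For this I would apply the Cauchy--Schwarz inequality available for kinematical states to the bilinear form $(X,Y)\mapsto\omega_t(XY^*)$ — more precisely, to the positive sesquilinear form obtained from $\omega_t$ once one knows $\omega_t$ is positive. The circularity here is the main obstacle: Cauchy--Schwarz as stated in the excerpt requires positivity of the functional, which is exactly what we are trying to propagate. The way around this is to run the argument on the closed interval where positivity is not yet known to fail: let $t_* = \sup\{t : \omega_s(AA^*)\geq 0 \text{ for all } s\in[t_0,t]\}$ (and similarly to the left of $t_0$), use continuity to get $g(t_*)\geq 0$, and derive a contradiction with $t_*<b$. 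On $[t_0,t_*]$ the state $\omega_t$ restricted to the relevant finitely generated subalgebra is positive, so Cauchy--Schwarz applies there and yields
\[
\left|\omega_t\!\left((\vec D_tA)A^*\right)\right|^2 \leq \omega_t\!\left((\vec D_tA)(\vec D_tA)^*\right)\,\omega_t\!\left(AA^*\right).
\]
Writing $\vec D_tA$ in terms of a (locally bounded, since $\omega_t(\vec D_t\,\cdot)$ is continuous) coefficient structure, one bounds $\omega_t\!\left((\vec D_tA)(\vec D_tA)^*\right)$ by $c(t)^2\,\omega_t(AA^*)$ for a continuous $c(t)$ on the compact interval, giving $|g'(t)|\leq 2c(t)\,g(t)$ and hence $g(t)\geq g(t_0)\exp(-2\int_{t_0}^t c)\geq 0$ throughout, contradicting the definition of $t_*$ unless $t_*=b$.

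I expect the genuinely delicate point to be the bound $\omega_t\!\left((\vec D_tA)(\vec D_tA)^*\right)\leq c(t)^2\,\omega_t(AA^*)$: in an abstract $*$-algebra there is no a priori operator-norm control of $\vec D_tA$ relative to $A$, so this step really uses the structure of the dynamical flows arising in practice (derivations given by commutators with elements of $\mathcal{A}$, in the $\rho$-topology), together with the standing existence-and-uniqueness assumption and the continuity of $t\mapsto\omega_t(\vec D_t\,\cdot)$. An alternative, cleaner route that sidesteps the Gr\"onwall estimate entirely is to argue by uniqueness of solutions: if the flow were invertible and implemented by an (adjoint-preserving) automorphism-like propagator on $\mathcal{A}$, then $\omega_t(AA^*) = \omega(U_tA(U_tA)^*)\geq 0$ immediately; I would mention this as the conceptual reason but note that at the stated level of generality the Gr\"onwall argument is what one can actually carry out. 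Finally, the left interval $[t,t_0]$ is handled symmetrically by running the same supremum argument downward.
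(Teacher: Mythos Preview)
Your proposal contains two genuine gaps. First, you assert that ``since $\omega_t$ is a kinematical state it is real,'' but in this paper kinematical states are merely normalized continuous linear functionals (Definition~\ref{def:AlgebraicStates}); reality $\omega(A^*)=\overline{\omega(A)}$ is a \emph{consequence} of positivity, not of normalization. The paper therefore proves reality of $\omega_t(AA^*)$ separately, by setting $f_A(t)=\omega_t(AA^*)-\overline{\omega_t(AA^*)}$, checking that the collection $\{f_A\}_{A\in\mathcal A}$ satisfies a closed first-order system with zero initial data, and invoking the standing uniqueness assumption on solutions of (\ref{eq:Ddynamics}). Your computation of $g'(t)=2\,\mathrm{Re}\,\omega_t((\vec D_tA)A^*)$ is not valid until this is established.

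Second, and more seriously, the estimate $\omega_t\bigl((\vec D_tA)(\vec D_tA)^*\bigr)\leq c(t)^2\,\omega_t(AA^*)$ that you need for the Gr\"onwall step is generally false: for an inner derivation $\vec D_tA=\frac{1}{i\hbar}[A,H]$ there is no reason the ``size'' of $[A,H]$ in the state should be controlled by that of $A$, and you yourself flag this as unresolved. The paper sidesteps this entirely. Once positivity is known to hold at a given $t'$, Cauchy--Schwarz gives directly
\[
\Bigl|\tfrac{\md}{\md t}\omega_t(AA^*)\Bigr|^2_{t=t'}\;\leq\;4\,\omega_{t'}(AA^*)\,\omega_{t'}\bigl((\vec D_{t'}A)(\vec D_{t'}A)^*\bigr),
\]
so that the derivative vanishes whenever $\omega_{t'}(AA^*)=0$. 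No bound on the second factor is needed --- only its finiteness --- and hence no Gr\"onwall argument. The logical structure the paper uses is: (i) realness for all $t$ via uniqueness, (ii) the derivative of $\omega_t(AA^*)$ is zero at any zero of $\omega_t(AA^*)$ (assuming positivity there), which together are argued to keep $\omega_t(AA^*)\geq 0$. Your ``alternative, cleaner route'' via an adjoint-preserving propagator is closer in spirit to what actually drives the paper's argument (namely the uniqueness assumption), but the Gr\"onwall route as you outline it does not go through.
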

\begin{proof}
To prove that $\omega_t(AA^*) \geq 0$\ continues to hold along the flow, it is sufficient to show that (i) $\omega_t(AA^*)$\ is real for all $t$\ and (ii) ${\rm d} \omega_t(AA^*)/{\rm d}t$ is non-negative whenever $\omega_t(AA^*) = 0$.
 
To prove (i), for each $A\in \mathcal{A}$\ define a function of $t$\ via
$f_A(t) = \omega_t(AA^*) - \overline{\omega_t(AA^*)}$\ on $\Gamma$, so that
$\omega_t(AA^*)$\ is real iff $f_A(t)=0$. Suppose all of the functions
$f_A(t')=0$\ for some $t'\in (a, b)$, then $\omega_{t'}(AA^*)$\ is real for
all $A\in \mathcal{A}$, which implies
$\omega_{t'}(A)=\overline{\omega_{t'}(A^*)}$, and we get
\begin{eqnarray*}
\left. \frac{{\rm d}}{{\rm d} t} \omega_t (AA^*) \right|_{t=t'}&=& \left. \omega_t\left( \vec{D}_t (AA^*) \right) \right|_{t=t'}= \omega_{t'}\left( \left(\vec{D}_{t'} A\right) A^* \right) + \omega_{t'}\left( A \left(\vec{D}_{t'} A^*\right) \right) \\
&=& \omega_{t'}\left( \left(\vec{D}_{t'} A\right) A^* \right)  + \omega_{t'} \left(
  \left[ \left(\vec{D}_{t'} A\right) A^*\right]^* \right)\\
&=& \omega_{t'}\left( \left(\vec{D}_{t'} A\right) A^* \right)+
\overline{\omega_{t'}\left( \left(\vec{D}_{t'} A\right) A^* \right)}\\
&=& 2 {\rm Re} \left[ \omega_{t'}\left( \left(\vec{D}_{t'} A\right) A^* \right)\right] = 2 {\rm Re} \left[ \overline{\omega_{t'}\left( \left(\vec{D}_{t'} A\right) A^* \right)}
\right] = \left. \frac{{\rm d}}{{\rm d} t} \overline{\omega_t (AA^*)} \right|_{t=t'} \ ,
\end{eqnarray*}
which means that ${\rm d}f_A(t)/{\rm d}t = 0$\ at $t=t'$. Since
$\omega_{t_0}=\omega$\ is positive, we have the initial conditions
$f_A(t_0)=0$\ for all $A \in \mathcal{A}$. We see that $\{f_A(t)=0, \forall t
\in (a, b)\}_{A\in \mathcal{A}}$\ satisfies the first-order ordinary
differential equation system induced by the dynamical flow and matches the
given set of initial conditions. As previously discussed, here we assume such
solutions to the dynamical flow to be unique. Therefore $\omega_t(AA^*)$\ is
real for all $t\in(a, b)$.

To prove (ii) we use the above result and assume that the inequality holds at $t=t'$.
\begin{eqnarray*}
\left| \left. \frac{{\rm d}}{{\rm d} t} \omega_t(AA^*) \right|_{t=t'} \right|^2 &=& 4 \left| {\rm
    Re} \left[ \omega_{t'}\left( \left(\vec{D}_{t'} A\right) A^* \right) \right]
\right|^2 \\
&\leq& 4 \left| \omega_{t'}\left( \left(\vec{D}_{t'} A\right) A^* \right) \right|^2
\\
&\leq& 4 \, \omega_{t'} (A^*A)\,\, \omega_{t'} \left( \left(\vec{D}_{t'}
    A\right)\left(\vec{D}_{t'} A\right)^*\right) \ . 
\end{eqnarray*}
Since $\omega_{t'}(A^*A) \in \mathbb{R}$, $\omega_{t'}(A^*A) =
\omega_{t'}(AA^*)$, and the expression on the right is zero if
$\omega_{t'}(AA^*)=0$.
\end{proof}

\subsection{Constrained quantization}\label{sec:QConstraint}

Our results in Sections~\ref{sec:MainResults} and \ref{sec:linearizing} apply
to a specific type of constrained systems relevant for quantizations of
dynamical symplectic reduction.
\begin{defi}
A singly constrained quantum system is a complex, unital $*$-algebra
$\mathcal{A}$ together with a constraint $C\in\mathcal{A}$ such that 
\begin{enumerate}
\item $C^*=C$,
\item $C$ does not have a left-inverse in ${\cal A}$, and
\item $C$ is not a divisor of zero.
\end{enumerate}
\end{defi}
\begin{rem}
  If $C$ is a right divisor of zero, then by property 1 it is also a left
  divisor of zero and vice versa, in which case there is an $X\in{\cal A}$
  such that $CX=0$. Using a representation of ${\cal A}$ on the kinematical
  Hilbert space, any vector in the image of $X$ is then a zero eigenvector of
  $C$ in the discrete spectrum, and one can simply solve the constraint $C=0$
  by restriction to the zero eigenspace. Here we are primarily concerned with
  the more complicated case of zero in the continuous part of the spectrum of
  $C$, which is also most relevant for dynamical symplectic reduction of
  examples discussed in Section~\ref{sec:Systems}.
\end{rem}

\begin{defi}\label{def:A-obs}
  The {\em algebra of Dirac observables} of a singly constrained quantum
  system $(\mathcal{A},C)$ is the commutant of $C$ in $\mathcal{A}$:
  \[
 \mathcal{A}_{\rm obs}=C'=\{A\in\mathcal{A} \, :\, [A, C] = 0 \}\,.
  \]

  The space of {\em physical states} of a singly constrained quantum system is
  given by the space $\Gamma({\cal A}_{\rm obs})$ of normalizaed positive
  linear functionals on ${\cal A}_{\rm obs}$.
\end{defi}
\begin{lemma}\label{lem:Comutant}
 $\mathcal{A}_{\rm obs}$ is a unital $*$-subalgebra of $\mathcal{A}$.
\end{lemma}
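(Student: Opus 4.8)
The plan is to verify the three defining properties of a unital $*$-subalgebra: that $\mathcal{A}_{\rm obs}$ is closed under the algebra operations (addition, scalar multiplication, multiplication), that it contains the unit $\mathbf{1}$, and that it is closed under the involution $*$. Since $\mathcal{A}_{\rm obs}=C'$ is defined as the set of elements commuting with the single element $C$, each of these reduces to a short computation with the commutator $[\cdot,C]$.

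First I would record that $[\mathbf{1},C]=0$, so $\mathbf{1}\in\mathcal{A}_{\rm obs}$, and that linearity of the commutator in its first slot gives $[aA+bB,C]=a[A,C]+b[B,C]=0$ whenever $A,B\in\mathcal{A}_{\rm obs}$ and $a,b\in\mathbb{C}$, so $\mathcal{A}_{\rm obs}$ is a linear subspace. For closure under multiplication I would use the Leibniz identity $[AB,C]=A[B,C]+[A,C]B$, valid in any associative algebra; if $A,B\in\mathcal{A}_{\rm obs}$ both terms on the right vanish, so $AB\in\mathcal{A}_{\rm obs}$. This establishes that $\mathcal{A}_{\rm obs}$ is a unital subalgebra.

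For the $*$-structure I would use property 1 of a singly constrained quantum system, namely $C^*=C$. If $A\in\mathcal{A}_{\rm obs}$, then $[A,C]=0$; applying $*$ and using that $*$ is an antihomomorphism, $0=[A,C]^*=(AC-CA)^*=C^*A^*-A^*C^*=CA^*-A^*C=-[A^*,C]$, hence $[A^*,C]=0$ and $A^*\in\mathcal{A}_{\rm obs}$. This is the only place where the self-adjointness of $C$ is needed.

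There is no real obstacle here: the statement is essentially the standard fact that the commutant of a self-adjoint element of a $*$-algebra is a unital $*$-subalgebra, and every step is a one-line manipulation of commutators relying only on associativity of $\mathcal{A}$ (assumed in the paper) and $C^*=C$. The only point worth stating carefully is the Leibniz rule for commutators, which is where associativity enters; properties 2 and 3 of the constraint play no role in this particular lemma.
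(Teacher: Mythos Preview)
Your proof is correct and follows essentially the same approach as the paper's own proof. The paper is more terse—stating that the commutant is a subalgebra without writing out the Leibniz rule—but your expanded treatment of linearity, multiplicative closure, the unit, and the $*$-closure via $C^*=C$ matches the paper's reasoning step for step.
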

\begin{proof}
  Defined as the commutant of $C$, ${\cal A}_{\rm obs}$ is a subalgebra. Since
  $[\mathbf{1},C]=0$ and $[A^*,C]=-[A,C^*]^*=-[A,C]^*=0$ if $A\in{\cal A}_{\rm
    obs}$, using $C^*=C$, it is a unital $*$-subalgebra.
\end{proof}

\begin{defi}
  A kinematical state $\omega \in \Gamma$\ is a {\em solution of the
    constraint $C$} if $\omega(AC) = 0$ for all $A \in \mathcal{A}$.  The {\em
    constraint surface} $\Gamma_C\subset\Gamma$ is the subset of all
  solutions of $C$, closed with respect to normalized sums.
\end{defi}

\begin{rem}
  Since we have assumed that $C$ is without left-inverse in ${\cal A}$, ${\cal
    A}C\subset {\cal A}$ is a strict subalgebra without unit. The condition
  $\omega(AC)=0$ is therefore consistent with normalization of kinematical
  states. If zero is in the continuous spectrum, no normalized positive states
  exist in $\Gamma_C$. In such a case it is common to drop the normalization
  condition and work with distributions instead of state vectors in a Hilbert
  space. Here, we instead retain the normalizability condition and relax
  positivity. Our results (such as Corollary~\ref{cor:Positive} in
  Section~\ref{sec:Positivity}) imply the existence of kinematical states in
  $\Gamma_C$ for singly constrained systems.
\end{rem}

The constraint $C$ in a singly constrained quantum system
induces a gauge flow:
\begin{defi} \label{def:Cequiv} Two kinematical states $\psi, \omega \in
  \Gamma$ are {\em $C$-equivalent}, $\omega \thicksim_C \psi$, if there exist
  a positive integer $M$ together with $A_1, A_2, \ldots, A_M \in \mathcal{A}$
  and $\lambda_1, \lambda_2, \ldots \lambda_M \in \mathbb{R}$, such that
\[
\psi = S_{A_1C} (\lambda_1) S_{A_2C} (\lambda_2) \ldots S_{A_MC} (\lambda_M)
\omega 
\]
where for $A\in\mathcal{A}$ and $\lambda\in{\mathbb R}$, the {\em flow}
$S_A(\lambda)\colon \Gamma\to\Gamma$ is defined by $S_A(0)={\rm id}$ and
\[
i\hbar \frac{{\rm d}}{{\rm d} \lambda} \left( S_A(\lambda) \omega(B) \right)
=  S_A(\lambda) \omega([B, A]) \ . 
\]
\end{defi}

Since $B\mapsto [B,A]$ is a derivation, $S_A(\lambda)$ is well-defined by
Lemma~\ref{EvolutionLemmaKin}. By analogy with classical reduction, we refer
to flows generated by elements of $\mathcal{A}C$\ as gauge. In the classical
case, the constraint $C$ and the function $fC$, where $f$ is any phase-space
function not equal to zero on the constraint surface, generate the same gauge
flow on the constraint surface: For $fC$, the flow equations
\[
 \frac{{\rm d}}{{\rm d}\lambda'}= \{\cdot,fC\}\approx f\{\cdot,C\}
\]
can be mapped to the flow equations generated by $C$ by rescaling the gauge
parameter $\lambda'$ such that ${\rm d}\lambda=f{\rm d}\lambda'$. (For this
reparameterization of the flow, the phase-space function $f$ is considered to
be a function of $\lambda'$ on the space of solutions of the flow
equations. The sign ``$\approx$'' in the preceding equations refers to
equality on the constraint surface, as usual in the theory of constrained
systems following Dirac.)  In singly constrained quantum systems, by contrast,
the gauge flows generated by $C$ and $AC$ with some $A\in{\cal A}$ are
inequivalent: For $\omega\in\Gamma_C$, we have
\[
 \omega([B,AC])= \omega(A[B,C])\not= \omega(A)\omega([B,C])
\]
in general. 

\begin{rem}
  If zero is in the discrete spectrum of $C$ represented on the kinematical
  Hilbert space, the behavior of flows is rather different. Any quantum flow
  generated by $AC$ is then equivalent to the flow generated by $C$ when
  restricted to constrained \emph{pure} states, given by vectors $\psi$ in the
  Hilbert space such that $C\psi=0$: on these states,
  $S_{AC}(\lambda)\psi=\exp(-i\lambda AC/\hbar)\psi$ such that
\[
 S_{AC}(\lambda) \psi= \sum_{n=0}^{\infty} \frac{1}{n!}
 \left(\frac{-i\lambda}{\hbar}\right)^n (AC)^n\psi=0
\]
for any $A$. The full gauge flow may nevertheless be non-trivial on general
algebraic states of Definition~\ref{def:AlgebraicStates}.
\end{rem}
 
\begin{lemma} \label{lem:GammaCinv} The constraint surface $\Gamma_C$ is
  preserved by the flow induced by any algebra element $AC$.
\end{lemma}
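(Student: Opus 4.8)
The plan is to track the defining property of $\Gamma_C$ along the flow $S_{AC}(\lambda)$ and show it is conserved, reducing the claim to the standing ODE-uniqueness assumption. Fix $\omega\in\Gamma_C$ and set $\omega_\lambda:=S_{AC}(\lambda)\omega$. By Definition~\ref{def:Cequiv} together with Lemma~\ref{EvolutionLemmaKin}, each $\omega_\lambda$ is already a kinematical state, so the only thing to verify is that $\omega_\lambda(BC)=0$ for every $B\in\mathcal{A}$ and every admissible $\lambda$. For each $B\in\mathcal{A}$ I would introduce the scalar function $g_B(\lambda):=\omega_\lambda(BC)$; since $\omega\in\Gamma_C$, the initial data are $g_B(0)=\omega(BC)=0$ for all $B$.

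Next I would differentiate using the flow equation of Definition~\ref{def:Cequiv}, which gives $i\hbar\,g_B'(\lambda)=\omega_\lambda\big([BC,AC]\big)$. The key algebraic observation is that the commutator $[\,\cdot\,,AC]$ maps the left ideal $\mathcal{A}C$ into itself: because $\mathcal{A}$ is associative, $[BC,AC]=BCAC-ACBC=(BCA-ACB)C=B'C$ with $B':=BCA-ACB\in\mathcal{A}$ depending linearly on $B$. Hence $i\hbar\,g_B'(\lambda)=g_{B'}(\lambda)$, so the family $\{g_B\}_{B\in\mathcal{A}}$ satisfies a closed, homogeneous, linear first-order system driven by the $\lambda$-independent linear map $B\mapsto B'=BCA-ACB$ on $\mathcal{A}$, with vanishing initial conditions.

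Finally I would invoke the standing uniqueness hypothesis for flow equations of this type (the same one used to make $S_A(\lambda)$ well-defined and in the proof of Lemma~\ref{EvolutionLemma}): the identically-zero family $g_B\equiv 0$ solves the system with the given initial data, hence it is \emph{the} solution, i.e.\ $\omega_\lambda(BC)=0$ for all $B$ and all $\lambda$, which is precisely the statement $\omega_\lambda\in\Gamma_C$. Closure of $\Gamma_C$ under normalized sums is not needed for a single generator $AC$; it would only enter if one wanted to run the argument for a composition of several such flows, where one simply iterates.

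The argument is short, and I expect the only real content to be the observation that $\mathcal{A}C$ is stable under $[\,\cdot\,,AC]$, which is exactly where associativity of $\mathcal{A}$ is used. The single conceptual caveat is that the ODE system is infinite-dimensional, so the conclusion rests on the paper's blanket uniqueness assumption rather than a finite-dimensional Picard--Lindelöf statement; if one wished to sidestep that assumption one could instead expand $\omega_\lambda(BC)$ as a formal power series in $\lambda$ and check that every coefficient vanishes, again because $[\,\cdot\,,AC]$ preserves $\mathcal{A}C$ and the zeroth-order term is zero.
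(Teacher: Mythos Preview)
Your proof is correct and follows essentially the same approach as the paper: define $g_B(\lambda)=S_{AC}(\lambda)\omega(BC)$, observe that $[\,\cdot\,,AC]$ preserves $\mathcal{A}C$ so the family $\{g_B\}$ satisfies a closed homogeneous linear system with zero initial data, and invoke the standing uniqueness assumption. The only cosmetic difference is that the paper writes $[BC,AC]=([BC,A]+A[B,C])C$ whereas you write it as $(BCA-ACB)C$, which is the same element.
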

\begin{proof} For any fixed $A\in \mathcal{A}$\ and $\omega \in \Gamma_C$,
  following the same argument as in Lemma~\ref{EvolutionLemma}, define
  functions $f_B(\lambda) = S_{AC}(\lambda) \omega(BC)$\ on $\Gamma$, for
  $B\in\mathcal{A}$. Suppose all $f_B(\lambda') = 0$\ for some $\lambda'$,
  then
\begin{eqnarray*}
\left.  i\hbar \frac{{\rm d} f_B}{{\rm d}\lambda} \right|_{\lambda = \lambda'}
&=& \left. i\hbar \frac{{\rm d}}{{\rm d} \lambda} \left( S_{AC}(\lambda) \omega(BC) \right) \right|_{\lambda = \lambda'}
\\
&=&  \left. S_{AC}(\lambda) \omega([BC, AC]) \right|_{\lambda = \lambda'}
\\
&=& S_{AC}(\lambda')\omega\left( ([BC,A] + A[B, C] ) C \right)
= f_{([BC,A] + A[B, C])}(\lambda') = 0
\end{eqnarray*}
for all $B\in\mathcal{A}$.  Moreover, we have initial conditions $f_B(0) =
\omega(BC) = 0$\ for all $B$. It follows that $\{f_B(\lambda)=0, \forall
\lambda \}_{B\in \mathcal{A}}$\ is the solution to the flow induced by any
algebra element of the form $AC$\ that satisfies our initial
conditions. Therefore, $S_{AC}(\lambda) \omega(BC)=0$ for all $\lambda$, and
$S_{AC}(\lambda) \omega \in \Gamma_C$.
\end{proof}

Any two $C$-equivalent states on $\Gamma_C$\ are indistinguishble by their
evaluation in Dirac observables:

\begin{lemma} \label{lemma:Cequiv} For any $\omega, \psi \in \Gamma_C$, if
  $\omega\thicksim_C \psi$, then $\omega(O)=\psi(O)$ for any
  $O\in\mathcal{A}_{\rm obs}$.
\end{lemma}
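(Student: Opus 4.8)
The plan is to reduce the statement to the case of a single generating flow and then iterate. By Definition~\ref{def:Cequiv}, $\omega\thicksim_C\psi$ means $\psi=S_{A_1C}(\lambda_1)\cdots S_{A_MC}(\lambda_M)\omega$ for some $A_j\in\mathcal{A}$ and $\lambda_j\in\mathbb{R}$. So it is enough to prove the following claim: if $\chi\in\Gamma_C$, $A\in\mathcal{A}$ and $O\in\mathcal{A}_{\rm obs}$, then $\bigl(S_{AC}(\lambda)\chi\bigr)(O)=\chi(O)$ for all $\lambda$. Granting this claim, one argues by downward induction on $k$ that each partial composition $\chi_k:=S_{A_kC}(\lambda_k)\cdots S_{A_MC}(\lambda_M)\omega$ lies in $\Gamma_C$ — this is exactly Lemma~\ref{lem:GammaCinv} applied repeatedly — and that $\chi_k(O)=\omega(O)$, the inductive step being the single-flow claim applied to $\chi_{k+1}\in\Gamma_C$. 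Taking $k=1$ gives $\psi(O)=\omega(O)$.

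To prove the single-flow claim, fix $\chi\in\Gamma_C$, $A\in\mathcal{A}$, $O\in\mathcal{A}_{\rm obs}$, and set $g(\lambda)=\bigl(S_{AC}(\lambda)\chi\bigr)(O)$, which is differentiable by the definition of the flow (and well-defined by Lemma~\ref{EvolutionLemmaKin}). By the defining differential equation of $S_{AC}(\lambda)$,
\[
i\hbar\,\frac{{\rm d}}{{\rm d}\lambda}\,g(\lambda)=\bigl(S_{AC}(\lambda)\chi\bigr)\bigl([O,AC]\bigr).
\]
The key algebraic observation is that, since $O$ commutes with $C$, the Leibniz rule for the commutator gives $[O,AC]=[O,A]C+A[O,C]=[O,A]C$, i.e. $[O,AC]$ is again an element of the subalgebra $\mathcal{A}C$ of the form $BC$ with $B=[O,A]\in\mathcal{A}$.

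Now invoke Lemma~\ref{lem:GammaCinv}: because $\chi\in\Gamma_C$, the evolved state $S_{AC}(\lambda)\chi$ remains in $\Gamma_C$ for all $\lambda$, so it annihilates every element of $\mathcal{A}C$; in particular $\bigl(S_{AC}(\lambda)\chi\bigr)\bigl([O,A]C\bigr)=0$. Hence ${\rm d}g/{\rm d}\lambda=0$, so $g$ is constant and equals $g(0)=\chi(O)$, which is the claim. I do not expect any genuine obstacle here; the only point needing care is the bookkeeping of the iterated composition — making sure that at each stage the state being flowed still lies in $\Gamma_C$ so that Lemma~\ref{lem:GammaCinv} can be reapplied — but this is immediate from the closure of $\Gamma_C$ under the gauge flows established in that lemma. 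The essential content is simply the identity $[O,AC]=[O,A]C$ for Dirac observables $O$ together with constraint-surface invariance.
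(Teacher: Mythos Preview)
Your proof is correct and follows essentially the same approach as the paper: both use the Leibniz identity $[O,AC]=[O,A]C+A[O,C]=[O,A]\,C$ for $O\in\mathcal{A}_{\rm obs}$, then invoke Lemma~\ref{lem:GammaCinv} to conclude that the evolved state annihilates $[O,A]C$, so the derivative along each single flow vanishes. Your explicit induction over the composition of flows is a bit more detailed than the paper's terse ``succession of gauge flows,'' but the content is identical.
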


\begin{proof} The two states $\omega$ and $\psi$ are related by a succession
  of gauge flows $S_{AC}(\lambda)$. By Lemma~\ref{lem:GammaCinv}, each of
  these flows preservers $\Gamma_C$. Therefore, for any $A\in\mathcal{A}$ and
  $B\in{\cal A}_{\rm obs}$,
\begin{eqnarray*}
 i\hbar \frac{{\rm d}}{{\rm d} \lambda} \left( S_{AC}(\lambda) \omega(B)
  \right)  &=&  S_{AC}(\lambda) \omega([B, AC])
 \\  &=&  S_{AC}(\lambda)\omega\left( A[B, C] + [B, A] C
\right)\\ &=&  S_{AC}(\lambda)\omega\left([B, A] C \right)= 0  \ ,
\end{eqnarray*}
since $ S_{AC}(\lambda)\omega\in\Gamma_C$. Therefore, $S_{AC}(\lambda)
\omega(B)$ is constant along any gauge flow $S_{AC}(\lambda)$.
\end{proof}

Equivalence classes $[\omega]_C\in\Gamma_C/\thicksim_C$ therefore define
states on ${\cal A}_{\rm obs}$.
\begin{cor}\label{def:PhysStates}
  The space of {\em physical states} $\Gamma_{\rm
    phys}$ is the convex subset of $\Gamma_C/\thicksim_C$ containing all
  $[\omega]_C$ with $\omega$ positive on $\mathcal{A}_{\rm obs}$.
\end{cor}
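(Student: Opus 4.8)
The plan is to show that the stated definition of $\Gamma_{\rm phys}$ is well-posed — that is, that the assignment $[\omega]_C \mapsto \omega|_{\mathcal{A}_{\rm obs}}$ and the property ``$\omega$ positive on $\mathcal{A}_{\rm obs}$'' both descend to the quotient $\Gamma_C/\!\sim_C$ — and then to verify convexity of the resulting subset. First I would check that, when $\omega$ is positive on $\mathcal{A}_{\rm obs}$, the class $[\omega]_C$ yields a genuine element of $\Gamma(\mathcal{A}_{\rm obs})$ via $[\omega]_C(O):=\omega(O)$: independence of the representative is exactly Lemma~\ref{lemma:Cequiv}; linearity is inherited from each $\omega$; normalization holds because $\mathbf{1}\in\mathcal{A}_{\rm obs}$ by Lemma~\ref{lem:Comutant} and $\omega(\mathbf{1})=1$; and continuity on $\mathcal{A}_{\rm obs}$ in the subspace $\rho$-topology follows from continuity of $\omega$ on $\mathcal{A}$. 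Since $\mathcal{A}_{\rm obs}$ is a $*$-subalgebra (Lemma~\ref{lem:Comutant}), for any $O\in\mathcal{A}_{\rm obs}$ one has $OO^*\in\mathcal{A}_{\rm obs}$, so Lemma~\ref{lemma:Cequiv} gives $\omega(OO^*)=\psi(OO^*)$ whenever $\omega\sim_C\psi$; hence ``positive on $\mathcal{A}_{\rm obs}$'' depends only on $[\omega]_C$, and $\Gamma_{\rm phys}$ is an honest subset of $\Gamma_C/\!\sim_C$ (this is also what reconciles the corollary with Definition~\ref{def:A-obs}).

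Next I would establish that the normalized-sum (affine) structure on $\Gamma$, under which $\Gamma_C$ is closed by definition, passes to the quotient. The crucial observation is that each gauge flow $S_{AC}(\lambda)$ commutes with normalized sums: the defining equation $i\hbar\,\frac{{\rm d}}{{\rm d}\lambda}\big(S_{A}(\lambda)\omega(B)\big)=S_A(\lambda)\omega([B,A])$ is linear in $\omega$, so for $\sum_j a_j=1$ the functional $\lambda\mapsto\sum_j a_j S_{AC}(\lambda)\omega_j$ solves the same initial-value problem as $S_{AC}(\lambda)\big(\sum_j a_j\omega_j\big)$ with the same initial data, and by the standing uniqueness assumption for such flows the two agree; composing finitely many gauge flows preserves this. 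Consequently, if $\omega_j\sim_C\omega_j'$ for each $j$ and $\sum_j a_j=1$, then $\sum_j a_j\omega_j\sim_C\sum_j a_j\omega_j'$, so $\sum_j a_j[\omega_j]_C:=\big[\sum_j a_j\omega_j\big]_C$ is well-defined on the quotient. Convexity of $\Gamma_{\rm phys}$ then follows immediately: given $[\omega_1]_C,\dots,[\omega_N]_C\in\Gamma_{\rm phys}$ and $a_j\geq 0$ with $\sum_j a_j=1$, the representative $\sum_j a_j\omega_j$ lies in $\Gamma_C$ and satisfies $\big(\sum_j a_j\omega_j\big)(OO^*)=\sum_j a_j\,\omega_j(OO^*)\geq 0$ for every $O\in\mathcal{A}_{\rm obs}$, so its class lies in $\Gamma_{\rm phys}$.

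The main obstacle here is not depth but care: the argument rests entirely on the linearity of the flow ODEs together with the uniqueness hypothesis for their solutions adopted after Definition~\ref{def:Dflow}; without uniqueness one could not conclude that gauge flows respect normalized sums, and the affine structure on $\Gamma_C/\!\sim_C$ — hence the very statement that $\Gamma_{\rm phys}$ is a \emph{convex} subset — would be ill-defined. A secondary point I would state explicitly for completeness is that $\sim_C$ is genuinely an equivalence relation, so that the quotient makes sense at all: reflexivity from $S_{AC}(0)=\mathrm{id}$, symmetry because $S_{AC}(-\lambda)$ inverts $S_{AC}(\lambda)$ (again by uniqueness applied to the time-reversed flow), and transitivity by concatenating the defining sequences of flows $S_{A_1C}(\lambda_1)\cdots S_{A_MC}(\lambda_M)$.
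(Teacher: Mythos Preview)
The paper does not supply a separate proof of this corollary; it is stated as an immediate consequence of Lemma~\ref{lemma:Cequiv} (the sentence ``Equivalence classes $[\omega]_C\in\Gamma_C/\thicksim_C$ therefore define states on ${\cal A}_{\rm obs}$'' is the entire justification). Your first paragraph --- checking that positivity on $\mathcal{A}_{\rm obs}$ is a class function via Lemma~\ref{lemma:Cequiv} and Lemma~\ref{lem:Comutant} --- is correct and is exactly the content the paper has in mind.

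Your second paragraph, however, contains a genuine gap. From the linearity of each flow $S_{AC}(\lambda)$ in $\omega$ you correctly deduce that a \emph{single} gauge transformation commutes with normalized sums. But the step ``if $\omega_j\sim_C\omega_j'$ for each $j$ then $\sum_j a_j\omega_j\sim_C\sum_j a_j\omega_j'$'' does not follow: the equivalences $\omega_j\sim_C\omega_j'$ may be realized by \emph{different} compositions of gauge flows for different $j$, and there is no single chain $S_{A_1C}(\lambda_1)\cdots S_{A_MC}(\lambda_M)$ that simultaneously carries every $\omega_j$ to its $\omega_j'$. So you have not shown that the affine structure on $\Gamma_C$ descends to $\Gamma_C/\!\thicksim_C$, and hence your argument for convexity at the quotient level is incomplete.

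What survives without that step is the weaker (and essentially tautological) statement that the preimage in $\Gamma_C$ of $\Gamma_{\rm phys}$ --- the set of $\omega\in\Gamma_C$ positive on $\mathcal{A}_{\rm obs}$ --- is closed under convex combinations; this is all the paper's phrasing actually requires, and it follows immediately from $\big(\sum_j a_j\omega_j\big)(OO^*)=\sum_j a_j\,\omega_j(OO^*)\geq 0$. If you want the stronger claim that convex combinations are well defined on the quotient, you would need an additional argument (or a different definition of $\thicksim_C$) beyond linearity of the individual flows.
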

The computation of a complete ${\cal A}_{\rm obs}$ is usually very complicated
in interesting models. Moreover, a sufficiently large ${\cal A}_{\rm obs}$
containing observables that can describe all measurements of interest may not
exist, in particular in chaotic systems \cite{DiracChaos,DiracChaos2}. The
result of Corollary~\ref{def:PhysStates} partially avoids a direct reference
to ${\cal A}_{\rm obs}$ by formulating the space of physical states through an
equivalence relation on the constrained states. However, one still needs
access to ${\cal A}_{\rm obs}$ in order to implement the positivity
condition. This reference to ${\cal A}_{\rm obs}$ cannot be avoided because it
is generally impossible to extend positivity to all of ${\cal A}$ for any
$\omega\in\Gamma_C$: for any $A=A^* \in \mathcal{A}$\ and a positive $\omega
\in \Gamma$, we have $\omega(AC+CA) \in \mathbb{R}$\ and $\omega([A, C]) \in
i\mathbb{R}$. However for $\omega \in \Gamma_C$\ we must have 
\[
 \omega(AC+CA) = \omega(2AC-[A, C]) = -\omega ([A,C])\,,
\]
which, given the reality conditions, can be satisfied only if $\omega(AC+CA) =
\omega ([A,C]) = 0$. If $C$\ possesses a canonical conjugate
$Z=Z^*\in\mathcal{A}$\ such that $[Z, C] = i\hbar\mathbf{1}$ (as we will
shortly demand for a ``clock''), then for any normalized state $\omega([Z,C])
= i\hbar \neq 0$, so that no solution of the constraint is positive on all of
$\mathcal{A}$. Moreover, as in the classical case, there is no evolution for
physical states in a dynamically constrained system, since the adjoint action
of the constraint has been factored out.

In Section~\ref{sec:MainResults} we will solve both problems --- formulating
positivity conditions without reference to ${\cal A}_{\rm obs}$ and obtaining
a consistent evolution picture --- by introducing a new notion of gauge
sections. Our approach is motivated by the algebraic analogue of
deparameterization discussed in Section~\ref{sec:Systems}. In an unconstrained
quantum system the dynamical flow is usually driven by some self-adjoint
Hamiltonian $H=H^*\in{\cal B}$\ via
\begin{equation}\label{eq:Hflow}
 \frac{{\rm d}\omega_t(B)}{{\rm d} t} = \frac{1}{i\hbar} \omega_t \left( [B,H]
 \right) + \omega_t \left( \frac{\rm d}{{\rm d} t} B \right)\,, 
\end{equation}
where the derivation $\frac{1}{i\hbar} [\cdot,H] + \frac{\rm d}{{\rm d}
  t}(\cdot)$\ clearly satisfies Definition~\ref{def:Dflow}. The above
evolution of states can be formulated as a pure adjoint action of a constraint
by extending the kinematical algebra ${\cal B}$ by two new generators,
``time'' $Z=Z^*$ and ``energy'' $E=E^*$, such that $[Z,E]=i\hbar \mathbf{1}$
and $[Z,B]=0=[E,B]$ for all $B$ in $\mathcal{B}$. On this extended algebra
$\mathcal{A}$, the constraint $C:=E+H\in{\cal A}$ generates a gauge flow
\begin{equation} \label{OCflow} \frac{\rm d}{{\rm d} \lambda} \omega_{\lambda} (A)
  =\frac{1}{i\hbar} \omega_{\lambda} \left([A,C] \right) = \frac{1}{i\hbar}
  \omega_{\lambda}\left( [A,H] + [A,E] \right) = \frac{\omega_{\lambda} \left( [A,H]
  \right)}{i\hbar} + \omega_{\lambda} \left( \frac{\rm d}{{\rm d} Z} A \right)
\end{equation}
resembling the original dynamical flow on $\mathcal{B}$ for any $A$ polynomial
in $Z$. Explicit time dependence of $B(t)$ in the Hamiltonian case corresponds
to $Z$-dependence of $A\in{\cal A}$ in the constrained case. This process is
called parameterization of the dynamical flow: physical time $t$\ has been
replaced by an arbitrary flow parameter $\lambda$.

One can follow this process in reverse: starting with the degrees of freedom
described by the extended algebra $\mathcal{A}$\ and given a Hamiltonian
constraint one could attempt to \emph{deparameterize} the system, reducing it to a
an unconstrained dynamical system on a subalgebra $\mathcal{B}\subset \mathcal{A}$. In the
above parameterized theory, the gauge flow (\ref{OCflow}) is equivalent to the
dynamical flow (\ref{eq:Hflow}) if $Z\in{\cal A}$ can be ``demoted to a real
number'' $t$.  The deparameterization process of passing from $\mathcal{A}$
back to the smaller algebra $\mathcal{B}$ can therefore be interpreted as
finding the states on $\mathcal{A}$ for which ``the value of $Z$ is
fixed'' to equal $t$. For a general Hamiltonian constraint, this process
requires finding a suitable clock $Z=Z^*$: its values will keep track of time,
and an associated algebra (the fashionables of Definition~\ref{FashClass})
will play the role of the smaller ``unextended'' algebra of evolving degrees
of freedom.

\subsection{Quantum clocks}\label{sec:QClocks}

Any kinematical observable $Z=Z^* \in \mathcal{A}$\ can formally serve as a
quantum reference system. In Section~\ref{sec:MainResults} we will use such
reference systems to track translations in time, where $Z$\ will serve as a
``clock.'' (Reference systems can be used to track spatial translations as
well \cite{QuantumRef1,QuantumRef2,QuantumRef3,QuantumRef4}.) Following
Lemma~\ref{lem:Comutant} the commutant $Z'=\left\{ A \in \mathcal{A} : [A, Z]
  = 0 \right\}$\ is a unital $*$-subalgebra of $\mathcal{A}$. Furthermore, it
is straightforward to verify that for any $t\in\mathbb{R}$\ the set
$(Z-t\mathbf{1})Z'$\ is a two-sided $*$-ideal of $Z'$. It follows that the
quotient $Z'/(Z-t\mathbf{1})Z'$\ is unital and inherits a $*$-structure under
the canonical projection $\pi_t\colon Z' \rightarrow Z'/(Z-t\mathbf{1})Z'$. In
order to be useful for keeping track of time, such a quantum reference system
needs some additional structure.

\begin{defi} \label{def:QClock} A \emph{quantum clock} $(Z, \mathcal{F})$\ is a reference observable $Z=Z^* \in \mathcal{A}$\ together with a compatible \emph{fashionable algebra}: a unital $*$-subalgebra ${\cal F} \subset Z'$, such that for all $t \in  \mathbb{R}$, we have $\mathcal{F} \cap \ker \pi_t = \{ 0\}$ and $\pi_t(\mathcal{F}) = Z'/(Z-t\mathbf{1})Z'$.
\end{defi}
\begin{rem} Keeping with the commonsense physics usage, the term ``clock'' will also be used to refer to the reference observable $Z$\ itself, where it will be assumed that $Z$\ possesses a compatible fashionable algebra.\end{rem}

The two conditions on $\mathcal{F}$\ guarantee that $\pi_t$ restricted to $\mathcal{F}$ is a
$*$-algebra isomorphism. The algebra of fashionables is therefore a
realization of a family of quotient algebras $Z'/(Z-t\mathbf{1})Z'$ as a
single subalgebra of $Z'$ (and hence of $\mathcal{A}$).  We denote the
$*$-isomorphism $\nu_t\colon Z'/(Z-t\mathbf{1})Z' \rightarrow \mathcal{F}$,
where for any $X \in Z'/(Z-t\mathbf{1})Z'$,
\begin{equation}\label{eq:ZtoFmap}
\nu_t(X) := \pi_t^{-1} (X) \cap \mathcal{F}
\end{equation}
yields a single element of $\mathcal{F}$. This isomorphism inverts $\pi_t$
when the latter is restricted to $\mathcal{F}$, so that $\pi_t \circ \nu_t =
{\rm id}$. As a direct consequence, we  note
\begin{lemma} \label{CosetLemma}
 For every $t\in{\mathbb R}$,  $\mathcal{F}+(Z-t\mathbf{1})Z' = Z'$. 
\end{lemma}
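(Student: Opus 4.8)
The plan is to prove the set equality $\mathcal{F}+(Z-t\mathbf{1})Z' = Z'$ by establishing both inclusions, the forward one being trivial and the reverse one being the content of the lemma. Since $\mathcal{F}\subset Z'$ and $(Z-t\mathbf{1})Z'\subset Z'$ (the latter because $Z-t\mathbf{1}\in Z'$ as $Z$ commutes with itself and with $\mathbf{1}$, and $Z'$ is closed under multiplication), any sum of such elements lies in $Z'$; this gives $\mathcal{F}+(Z-t\mathbf{1})Z'\subseteq Z'$ immediately.

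For the reverse inclusion, I would fix an arbitrary $A\in Z'$ and look at its image $\pi_t(A)\in Z'/(Z-t\mathbf{1})Z'$ under the canonical projection. By the defining property of a quantum clock (Definition~\ref{def:QClock}), $\pi_t$ restricted to $\mathcal{F}$ is surjective onto $Z'/(Z-t\mathbf{1})Z'$, so there exists $F\in\mathcal{F}$ with $\pi_t(F)=\pi_t(A)$. Then $\pi_t(A-F)=0$, i.e.\ $A-F\in\ker\pi_t=(Z-t\mathbf{1})Z'$. Writing $A = F + (A-F)$ exhibits $A$ as an element of $\mathcal{F}+(Z-t\mathbf{1})Z'$, which proves $Z'\subseteq \mathcal{F}+(Z-t\mathbf{1})Z'$ and hence equality.

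There is no real obstacle here; the statement is essentially a restatement of the surjectivity half of the quantum-clock conditions combined with the identification $\ker\pi_t = (Z-t\mathbf{1})Z'$, which holds by construction of the quotient. The only point worth being careful about is that one should invoke precisely the condition $\pi_t(\mathcal{F}) = Z'/(Z-t\mathbf{1})Z'$ rather than the injectivity condition $\mathcal{F}\cap\ker\pi_t=\{0\}$; the latter is not needed for this lemma (it would be needed to conclude uniqueness of the decomposition, which is not claimed here). In fact the argument also shows the decomposition is unique: if $F+k = F'+k'$ with $F,F'\in\mathcal{F}$ and $k,k'\in(Z-t\mathbf{1})Z'$, then $F-F'\in\mathcal{F}\cap\ker\pi_t=\{0\}$, but since the lemma as stated asks only for the set equality I would keep the proof to the two inclusions above and perhaps remark on uniqueness in passing.
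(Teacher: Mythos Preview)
Your proof is correct and matches the paper's approach: the paper does not give an explicit proof but simply records the lemma as ``a direct consequence'' of the fact that $\pi_t|_{\mathcal{F}}$ is a bijection onto $Z'/(Z-t\mathbf{1})Z'$, which is exactly the surjectivity argument you spell out. Your observation that only the surjectivity condition $\pi_t(\mathcal{F}) = Z'/(Z-t\mathbf{1})Z'$ is needed here (injectivity giving uniqueness of the decomposition, not existence) is accurate and worth keeping.
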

Additionally, for each value of $t$ we have a projection from $Z'$ to its
subalgebra $\mathcal{F}$ via the composition of $*$-homomorphisms $\nu_t \circ
\pi_t$.

Each quotient algebra $ Z'/(Z-t\mathbf{1})Z'$\ possesses a natural relational interpretation:
\begin{defi} For any kinematical observable $A=A^* \in \mathcal{A}$ that
  commutes with a clock $Z$, the {\em relational observable} for $A$ when the
  value of $Z$ is $t \in \mathbb{R}$ is the image of $A$ under the canonical
  projection, $A_{Z=t} := \pi_t(A)$.
\end{defi}
It is straightforward to check that $A_{Z=t} = A_{Z=t}^*$\ under the inherited
$*$-operation.

\begin{rem} Time read by a reference clock behaves differently from the time
  of ordinary quantum mechanics where it is an external parameter. In the
  latter case any observable can be evaluated at a fixed time, in the former
  case this can only be done for observables that commute with the
  clock.\end{rem}

We will use the notation $\bar{\omega} \in \Gamma_{Z'}$\ to denote normalized
states on $Z'$.  For each $t$, canonical projection $\pi_t$ defines a
``surface of constant time'' within $\Gamma_{Z'}$, which will be used in
Section~\ref{sec:Transversality} to fix the gauge flow of a quantum
Hamiltonian constraint.

\begin{defi}  \label{GaugeFixedStates} For any $t \in \mathbb{R}$\ the \emph{surface of constant time $t$\ of a clock $Z$}\ is
\begin{equation*}
 \Gamma_{Z'}|_{\pi_t} = \{ \bar{\omega} \in \Gamma_{Z'} : \bar{\omega}(B)=0
 \mbox{ for all }B\in\ker \pi_t\}\,. 
\end{equation*}
\end{defi}
Each state on this surface is a pull-back $\Gamma_{Z'}|_{\pi_t} \ni
\bar{\omega}=\tilde{\omega}\circ\pi_t$\ of a state $\tilde{\omega}$\ on the
quotient algebra $Z'/(Z-t\mathbf{1})Z'$. The states belonging to such a
surface that are positive on $Z'$\ can be interpreted as assigning a value to
each $A=A^* \in Z'$\ ``when clock $Z$\ shows time $t$''. Not every positive
state $\bar{\omega}$\ on the algebra $Z'$ has a relational interpretation
since this would require $\bar{\omega}((Z-t\mathbf{1})A)=0$\ for any $A \in
Z'$\ for some $t \in \mathbb{R}$. A sequence of relational states 
provides a notion of evolution over time in clock $Z$.

\begin{defi}\label{def:RelationalEvol} A one-parameter family of states
  $\bar{\omega}_t \in \Gamma_{Z'}$\ for $t\in (a,b) \subset \mathbb{R}$\ is a
  \emph{time evolution of $\bar{\omega} \in \Gamma_{Z'}$ relative to $Z$}\ if
  $\bar{\omega} =\bar{\omega}_{t_0}$\ for some $t_0\in(a, b)$, and if for each
  $t\in (a,b)$\ $\bar{\omega}_t$\ is positive on $Z'$ and $\bar{\omega}_t \in
  \Gamma_{Z'}|_{\pi_t}$.
\end{defi}

While any positive state on a quotient $Z'/(Z-t\mathbf{1})Z'$\ does have a
relational interpretation, these quotients give distinct algebras for
different values of $t$. So a time evolution in $Z$\ corresponds to a sequence
of states on different disconnected algebras. The fashionable algebra is
needed precisely to ``sew together'' constant time degrees of freedom at
different values of the clock and to define clock-value-dependent states that
can be freely specified over a fixed algebra $\mathcal{F}$. We define a
one-parameter family of invertible linear maps $\psi_t:
\Gamma_{Z'}|_{\pi_t}\rightarrow \Gamma_{\mathcal{F}}$\ with the forward map
given simply by restriction, for $\bar{\omega}\in \Gamma_{Z'}|_{\pi_t}$
\begin{equation} \label{psit}
\psi_t(\bar{\omega})(F) = \bar{\omega}(F) \ , \ \ {\rm for\ all\ } F \in \mathcal{F}\ .
\end{equation}
We define its inverse using~(\ref{eq:ZtoFmap}), for $\tilde{\omega} \in \Gamma_{\mathcal{F}}$
\begin{equation}
\psi_t^{-1}(\tilde{\omega})(A) = \tilde{\omega}\left( \nu_t(\pi_t(A)) \right) \ , \ \ {\rm for\ all\ } A \in Z'\ .
\end{equation}
Clearly, because it involves $\pi_t$, the resultant state
$\psi_t^{-1}(\tilde{\omega}) \in \Gamma_{Z'}|_{\pi_t}$. It is also not
difficult to verify that $\psi$\ and $\psi^{-1}$\ invert each other on their
domains of definition. The fashionable algebra gives us the structure
necessary to define time translation of a state from $Z=t_1$ to $Z=t_2$:
$\bar{\omega}_1 \in \left. \Gamma_{Z'} \right|_{\pi_{t_1}}$ and
$\bar{\omega}_2 \in \left. \Gamma_{Z'} \right|_{\pi_{t_2}}$ represent the same
unevolved state at two different times $t_1$ and $t_2$ if $\bar{\omega}_1 (F)
= \bar{\omega}_2 (F)$ for all $F\in \mathcal{F}$, that is, if
$\bar{\omega}_1=\psi_{t_1}^{-1} (\tilde{\omega})$\ and
$\bar{\omega}_2=\psi_{t_2}^{-1} (\tilde{\omega})$\ for some $\tilde{\omega}
\in \Gamma_{\mathcal{F}}$.

A time evolution maps to a one-parameter family of positive states on the
fashionable algebra via $\tilde{\omega}_t
=\psi_t(\bar{\omega}_t)$. Conversely, any one-parameter family of positive
states $\tilde{\omega}_t \in \Gamma_{\mathcal F}$\ maps to a time evolution
with respect to $Z$ via $\bar{\omega}_t = \psi_{t}^{-1}(\tilde{\omega}_t)$.
The maps $\psi_t$\ and $\psi_{t}^{-1}$\ preserve positivity of states, because
the underlying algebra maps that they utilize, $\imath \colon \mathcal{F}
\hookrightarrow Z'$, $\pi_t$, and $\nu_t$\ are all $*$-homomorphisms. This can
be stated somewhat more broadly.
\begin{lemma}\label{lem:RelationalPositivity} If $\bar{\omega} \in
  \Gamma_{Z'}|_{\pi_t}$\ for some $t\in \mathbb{R}$\ and $\bar{\omega}$\ is
  positive on $\mathcal{F}\subset Z'$, then $\bar{\omega}$\ is positive on the
  whole of $Z'$.
\end{lemma}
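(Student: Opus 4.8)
The plan is to exploit the factorization of $\bar{\omega}$ through the quotient map $\pi_t$, together with the fact that $\ker\pi_t = (Z-t\mathbf{1})Z'$ is a two-sided $*$-ideal of $Z'$ (recorded just before Definition~\ref{def:QClock}). First I would note that, since $\bar{\omega}\in\Gamma_{Z'}|_{\pi_t}$, we have $\bar{\omega}(B)=0$ for every $B\in\ker\pi_t$. Because that ideal is two-sided and closed under $*$, it follows that $\bar{\omega}$ annihilates every element of $Z'$ that can be written as a product with at least one factor (or its adjoint) lying in $\ker\pi_t$.

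Next, given an arbitrary $A\in Z'$, I would replace it by its fashionable representative $F:=\nu_t(\pi_t(A))\in\mathcal{F}$, which exists precisely because the quantum-clock conditions of Definition~\ref{def:QClock} make $\pi_t|_{\mathcal{F}}$ a $*$-isomorphism onto $Z'/(Z-t\mathbf{1})Z'$. Since $\pi_t\circ\nu_t=\mathrm{id}$, we get $\pi_t(F)=\pi_t(A)$, hence $A-F\in\ker\pi_t$ and also $(A-F)^*\in\ker\pi_t$. Writing $A=F+(A-F)$ and expanding,
\[
AA^* = FF^* + F(A-F)^* + (A-F)F^* + (A-F)(A-F)^*,
\]
each of the last three terms has a factor in $\ker\pi_t$, so all three are annihilated by $\bar{\omega}$. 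Therefore $\bar{\omega}(AA^*)=\bar{\omega}(FF^*)$.

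Finally, since $F\in\mathcal{F}$ and $\bar{\omega}$ is positive on $\mathcal{F}$ by hypothesis, we obtain $\bar{\omega}(FF^*)\geq 0$, and hence $\bar{\omega}(AA^*)\geq 0$ for all $A\in Z'$, which is the claim. I do not expect a real obstacle here: the whole argument is the interplay between the ideal property of $\ker\pi_t$ and the existence of a fashionable representative. The only point needing a little care is to verify that every cross term in the expansion genuinely lies in $\ker\pi_t$, which uses both that $(Z-t\mathbf{1})Z'$ is two-sided and that it is $*$-invariant.
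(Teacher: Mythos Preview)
Your proposal is correct and is essentially the same argument as the paper's: both decompose an arbitrary element of $Z'$ as a fashionable representative plus an element of $\ker\pi_t=(Z-t\mathbf{1})Z'$, expand $AA^*$, and observe that all cross terms are annihilated by $\bar\omega$. The paper writes the remainder explicitly as $(Z-t\mathbf{1})B_0$ and uses centrality of $Z-t\mathbf{1}$ in $Z'$ to factor it to the left, whereas you phrase this via the two-sided $*$-ideal property of $\ker\pi_t$; these are equivalent and both yield $\bar\omega(AA^*)=\bar\omega(FF^*)\geq 0$.
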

\begin{proof}
  Since $\bar{\omega} \in \Gamma_{Z'}|_{\pi_t}$, we have
  $\bar{\omega}((Z-t\mathbf{1})A) = 0$\ for all $A\in Z'$. According to
  Lemma~\ref{CosetLemma}, for any $B \in Z'$ there are $F\in {\cal F}\subset
  Z'$ and $B_0\in Z'$ such that $B=F+(Z-t\mathbf{1})B_0$. Thus
\begin{eqnarray*}
\bar{\omega} \left( BB^* \right) &=& \bar{\omega} \Bigl( FF^* +  F(Z-t\mathbf{1})B_0^* +
(Z-t\mathbf{1}) B_0 F^*+ (Z-t\mathbf{1}) B_0 (Z-t\mathbf{1})B_0^* \Bigr) 
\\
&=& \bar{\omega} \left( FF^* \right) + \bar{\omega}\Bigl( (Z - t \mathbf{1}) \left( FB_0^*+ B_0F^*  +  (Z-t\mathbf{1})B_0 B_0^* \right) \Bigr)
\\
&=& \bar{\omega} \left( FF^* \right) \geq 0 \ .
\end{eqnarray*}
\end{proof}

In what follows we use a clock as the basis for a new method of
deparameterization by (i) using the commutant $Z'\subset {\cal A}$ as a
go-between of the kinematical algebra ${\cal A}$ and the algebra of
observables, ${\cal A}_{\rm obs}$, (ii) fixing gauge degrees of freedom using
constant time surfaces $\Gamma_{Z'}|_{\pi_t}$, and (iii) demoting the clock
observable $Z$\ to a real-valued time parameter $t$.  The constant-time
surfaces are analogous to fixing an initial $\lambda$-time in standard
deparameterization discussed in Section~\ref{sec:Systems}, but they are
introduced in a subalgebra of ${\cal A}$ and do not require the transition to
an unrelated space such as ${\cal H}_{\rm phys}$. Moreover, Dirac observables
will be replaced by the more accessible fashionables ${\cal F}$. (Since their
definition depends on the choice of time $Z$, and not just on the constraint
$C$, they are ``in fashion'' only as long as $Z$ is used as time. This need to
introduce fashionables has first been identified in semiclassical calculations
\cite{EffTime,EffTimeLong,EffTimeCosmo}.)

\section{Quantum dynamical reduction: Linear case}
\label{sec:MainResults}

In this section, we assume that the constraint is of a form $C=C_H$ such that
$[Z,C_H]=i\hbar\mathbf{1}$ for some $Z\in{\cal A}$. This case is close to
standard deparameterization, as shown by (\ref{OCflow}). (As we will discuss
in more detail in Section~\ref{sec:linearizing}, a constraint should, in
general, be factorized in order to make it deparameterizable, in the sense
that $C=NC_H$ with suitable $C_H=C_H^*$, $N \in \mathcal{A}$.)

\begin{defi} \label{def:Deparameterization} A quantum constraint $C_H\in \mathcal{A}$\ is \emph{deparameterized} by the clock $(Z, \mathcal{F})$\ if $[Z, C_H] = i\hbar \mathbf{1}$\ and the commutant of $Z$\ has the following properties:
\begin{enumerate}
\item $Z'+\mathcal{A}C_H = \mathcal{A}$;\label{def:Depar1}
\item $Z'\cap \mathcal{A}C_H = \{0\}$;\label{def:Depar2}
\item the set $Z'\cup \{C_H\}$\ algebraically generates $\mathcal{A}$.\label{def:Depar3}
\end{enumerate}
\end{defi}

The main result of this section is that deparameterization of a
constraint $C_H$\ by a clock $(Z, \mathcal{F})$\ is a realization of the
constrained quantization of $C_H$ as an unconstrained quantum mechanical
system with degrees of freedom given by the fashionable algebra $\mathcal{F}$\
evolving in time. The consistency of this construction is shown by
\begin{prop}\label{prop:Deparameterization}  If $C_H\in{\cal A}$ is
  deparameterized by $(Z,{\cal F})$, then there is a bijection between the
  constraint surface and the space of states on $Z'$, $\phi\colon \Gamma_{C_H}
  \rightarrow \Gamma_{Z'}$ such that:
\begin{enumerate}
\item All gauge flows mapped to $\Gamma_{Z'}$\ by
  $\phi$\ are transversal to the constant time surfaces $\Gamma_{Z'}|_{\pi_t}$, $t\in{\mathbb R}$. \label{prop:Depar2}
\item The flow generated by $C_H$\ itself is mapped to a dynamical flow on
  $\mathcal{F}$. \label{prop:Depar3} 
\item Each positive state on $\mathcal{F}$\ specifies a positive physical state
  $[\omega]_{C_H} \in \Gamma_{C_H}/\thicksim_{C_H}$.\label{prop:Depar4} 
\end{enumerate}
\end{prop}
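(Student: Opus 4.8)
The plan is to construct the bijection $\phi$ explicitly from the algebraic structure guaranteed by Definition~\ref{def:Deparameterization}, and then to verify the three listed properties by direct computation using the flow equations. The key observation is that properties~\ref{def:Depar1} and~\ref{def:Depar2} assert that $\mathcal{A} = Z' \oplus \mathcal{A}C_H$ as a direct sum of vector spaces, so every $A\in\mathcal{A}$ has a unique decomposition $A = \mathrm{pr}(A) + R(A)C_H$ with $\mathrm{pr}(A)\in Z'$ and $R(A)\in\mathcal{A}$; here $\mathrm{pr}\colon\mathcal{A}\to Z'$ is a linear projection. I would first check that $\mathrm{pr}$ restricts to the identity on $Z'$ and that $\mathrm{pr}(\mathbf{1})=\mathbf{1}$. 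Then for $\omega\in\Gamma_{C_H}$ I define $\phi(\omega):=\omega\circ\iota_{Z'}$, i.e.\ simply the restriction of $\omega$ to the subalgebra $Z'$; this is manifestly a normalized continuous linear functional on $Z'$, hence lies in $\Gamma_{Z'}$. For the inverse, given $\bar\omega\in\Gamma_{Z'}$ I set $\phi^{-1}(\bar\omega)(A):=\bar\omega(\mathrm{pr}(A))$; one checks $\phi^{-1}(\bar\omega)(AC_H)=\bar\omega(\mathrm{pr}(AC_H))=\bar\omega(0)=0$ for all $A$, so $\phi^{-1}(\bar\omega)\in\Gamma_{C_H}$, and that $\phi\circ\phi^{-1}=\mathrm{id}$ on $\Gamma_{Z'}$. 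The reverse composition $\phi^{-1}\circ\phi=\mathrm{id}$ on $\Gamma_{C_H}$ requires showing $\omega(A)=\omega(\mathrm{pr}(A))$ for $\omega\in\Gamma_{C_H}$, which is exactly the defining property $\omega(R(A)C_H)=0$; here property~\ref{def:Depar3} may be needed to ensure the decomposition interacts well with products, but at the level of a single linear functional the constraint condition $\omega(\,\cdot\,C_H)=0$ suffices.

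For property~\ref{prop:Depar2}, I would push a gauge flow $S_{AC_H}(\lambda)$ through $\phi$ and compute its infinitesimal generator on $\Gamma_{Z'}$. Transversality to $\Gamma_{Z'}|_{\pi_t}$ means the flow does not preserve any constant-time surface, equivalently that the ``clock reading'' $\bar\omega(Z)$ genuinely moves. Using $[Z,C_H]=i\hbar\mathbf{1}$ one computes, along $\phi(S_{AC_H}(\lambda)\omega)$, the derivative of the functional evaluated on $Z$: schematically $i\hbar\,\mathrm{d}/\mathrm{d}\lambda\,\bar\omega_\lambda(Z)=\bar\omega_\lambda([Z,AC_H])=\bar\omega_\lambda(A[Z,C_H]+[Z,A]C_H)=i\hbar\,\bar\omega_\lambda(A)$ after projecting (the $[Z,A]C_H$ term dies under the constraint/projection). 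So the clock value changes at rate $\bar\omega_\lambda(A)$, and transversality holds exactly where $\bar\omega_\lambda(A)\neq 0$; for the canonically normalized generator $C_H$ itself this rate is $1$, which is the content of property~\ref{prop:Depar3}: the flow of $C_H$ advances $Z$ at unit rate, so it maps to honest $t$-evolution, and restricted to $\mathcal{F}\subset Z'$ it becomes a dynamical flow in the sense of Definition~\ref{def:Dflow} — one identifies the induced derivation on $\mathcal{F}$ via $\nu_t\circ(\text{action of }[\,\cdot\,,C_H])\circ\pi_t$ and checks $*$-compatibility and differentiability from the corresponding properties of the $S$-flow together with Lemma~\ref{EvolutionLemmaKin}. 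Property~\ref{prop:Depar4} then follows by combining Lemma~\ref{lem:RelationalPositivity} (positivity on $\mathcal{F}$ at fixed time extends to positivity on $Z'$) with the already-established correspondence between states on $Z'$ and constrained states, and Lemma~\ref{lemma:Cequiv} to see that the resulting physical state $[\omega]_{C_H}$ is well-defined on $\mathcal{A}_{\mathrm{obs}}$ and positive there.

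The main obstacle I anticipate is \emph{well-definedness and naturality of the projection} $\mathrm{pr}\colon\mathcal{A}\to Z'$ — in particular making sure it is continuous in the $\rho$-topology so that $\phi^{-1}(\bar\omega)$ is genuinely a kinematical state (Definition~\ref{def:AlgebraicStates} requires continuity), and making sure that generating property~\ref{def:Depar3} is actually used where needed rather than just assumed. A second delicate point is property~\ref{prop:Depar3}: showing that the $C_H$-flow descends to a \emph{well-defined} flow on $\mathcal{F}$ rather than merely on the family of quotients $Z'/(Z-t\mathbf{1})Z'$ requires the clock isomorphisms $\nu_t$ to intertwine the flow correctly, i.e.\ that the generator $[\,\cdot\,,C_H]$ maps $\ker\pi_t$ appropriately so that the induced map on $\mathcal{F}$ does not depend on representative choices; this is where the interplay between $[Z,C_H]=i\hbar\mathbf{1}$ (which makes $C_H$ shift $\pi_t$ to $\pi_{t+\lambda}$) and the single-subalgebra realization $\mathcal{F}$ does the real work. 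I would handle this by first establishing that conjugation-type flow by $C_H$ carries the surface $\Gamma_{Z'}|_{\pi_t}$ onto $\Gamma_{Z'}|_{\pi_{t+\lambda}}$, and only then transporting everything to $\mathcal{F}$ via the fixed maps $\psi_t$ of~(\ref{psit}); the resulting $\mathcal{F}$-flow is then visibly independent of $t$ by construction.
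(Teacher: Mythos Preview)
Your construction of the bijection $\phi$ and your outline for property~\ref{prop:Depar3} are correct and match the paper's approach essentially verbatim (the paper's equations~(\ref{eq:Phi})--(\ref{eq:PhiInv}) are your $\phi$ and $\phi^{-1}$, and Lemma~\ref{FlowLemma} together with the derivation $\vec{D}'_H(t)=\nu_t\circ\pi_t\circ\vec{D}_H$ is exactly your plan for descending the $C_H$-flow to~$\mathcal{F}$).

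There is, however, a genuine gap in your argument for property~\ref{prop:Depar2}. You compute only the derivative of $\bar\omega_\lambda(Z)$ along the gauge flow of $AC_H$ and obtain the rate $\bar\omega_\lambda(\mathrm{pr}(A))$; you then say transversality holds ``exactly where $\bar\omega_\lambda(A)\neq0$''. But the proposition asserts transversality for \emph{all} gauge flows, and Definition~\ref{def:Trans} requires that at any $\bar\omega_{\lambda_0}\in\Gamma_{Z'}|_{\pi_t}$ either the flow vanishes identically or it moves \emph{some} element of $\ker\pi_t$. When $\bar\omega_{\lambda_0}(\mathrm{pr}(A))=0$ the flow need not vanish, and checking the single element $Z-t\mathbf{1}$ is insufficient. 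This is precisely where the generating condition~\ref{def:Depar3} of Definition~\ref{def:Deparameterization} enters, not merely for ``naturality of the projection'' as you suspect: it lets you write $A=\sum_{n=0}^M B_n C_H^n$ with $B_n\in Z'$, so that on $\Gamma_{C_H}$ the flow acts by $\bar\omega_\lambda\bigl(\sum_n(-1)^{n-1}(i\hbar)^n B_{n-1}\vec{D}_H^n F\bigr)$. The paper then evaluates the flow on the elements $(Z-t\mathbf{1})^{M+1}F,(Z-t\mathbf{1})^M F,\ldots$ in $\ker\pi_t$; assuming non-transversality, an inductive computation forces $\bar\omega_{\lambda_0}(B_n F)=0$ for all $n$ and all $F\in Z'$, whence the flow vanishes at $\bar\omega_{\lambda_0}$ (Lemma~\ref{lm:TotalLinearGauge}). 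Your single-variable check cannot reproduce this descent.

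There is also a smaller gap in your sketch of property~\ref{prop:Depar4}: Lemma~\ref{lem:RelationalPositivity} gives positivity on $Z'$, and Lemma~\ref{lemma:Cequiv} gives well-definedness on $\mathcal{A}_{\rm obs}$, but neither yields positivity on $\mathcal{A}_{\rm obs}$, since $\mathcal{A}_{\rm obs}\not\subset Z'$ in general. The missing step (the paper's Lemma~\ref{lem:Positivity}) is to decompose $O\in\mathcal{A}_{\rm obs}$ as $O=B+AC_H$ with $B\in Z'$, observe that $[O,C_H]=0$ forces $[B,C_H]=0$ and $[A,C_H]=0$ separately (by the direct-sum property), and then compute $\omega(OO^*)=\omega(BB^*)\geq0$.
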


Existence of $\phi$ follows immediately from conditions~\ref{def:Depar1}
and~\ref{def:Depar2} of Definition~\ref{def:Deparameterization}. Since $Z' +
\mathcal{A}C_H = \mathcal{A}$, any linear functional on $\mathcal{A}$ is
completely defined by its restrictions to $\mathcal{A}C_H$ and $Z'$. Since
$Z'\cap \mathcal{A}C_H = \{0\}$\ these two restrictions can be specified
independently. Any state in $\Gamma_{C_H}$\ vanishes on $\mathcal{A}C_H$\ and
defines a state on $\Gamma_{Z'}$\ by restriction to $Z'$. Conversely any state
on $Z'$\ can be uniquely extended to the entirety of $\mathcal{A}$\ by setting
the extension to vanish on $\mathcal{A}C_H$ (see equations~(\ref{eq:Phi}) and~(\ref{eq:PhiInv}) below). Deparameterization provides sections of the quantum gauge flow (Definition~\ref{def:Cequiv}) through constant-time surfaces $\Gamma_{Z'}|_{\pi_t}$. Transversality of these sections, defined and proven in Section~\ref{sec:Transversality}, is the local quantum analog of incisiveness of a classical gauge section of Definition~\ref{ClassSection}. Potential issues arising from the weaker local nature of transversality are discussed in Section~\ref{sec:PhysStates}. We prove properties 2 and 3 of the proposition in Sections~\ref{sec:NonRelTime} and~\ref{sec:Positivity}, respectively, and provide an explicit example of a deparameterizable constraint in
Section~\ref{sec:DeparamExample}.

\begin{rem} As shown in our discussion of cosmological
  models, condition~\ref{def:Depar3} of
  Definition~\ref{def:Deparameterization} is satisfied by the algebras of many
  physical examples of interest where $C_H$\ and kinematical observables are polynomial in an
  ``energy'' $E\in\mathcal{A}$. This property is only used in
  Section~\ref{sec:Transversality} specifically to ensure that all gauge flow
  generators can be projected to $Z'$ using finite power series. In other
  situations it may be possible to replace this condition with sufficiently
  strong assumptions on the topology of $\mathcal{A}$.
\end{rem}

\subsection{Proof of transversality} \label{sec:Transversality}

Let a quantum constraint $C_H$\ be deparameterized by a clock $(Z,
\mathcal{F})$. In this subsection we prove that all of the flows generated by
constraint elements $AC_H \in \mathcal{A}C_H$\ can be mapped to $\Gamma_{Z'}$,
and are separated by the constant-time surfaces of $Z$. According to
Definition~\ref{GaugeFixedStates}, a constant-time surface
$\Gamma_{Z'}|_{\pi_t}$ in $\Gamma_{Z'}$\ contains precisely those states on
$Z'$ that vanish on $\ker\pi_t$. A non-vanishing gauge flow that preserves the
values assigned by states to $\ker\pi_t$ is therefore tangent to
$\Gamma_{Z'}|_{\pi_t}$ and hence remains unresolved by fixing the value of the
clock. By contrast, a flow that is separated by the $\Gamma_{Z'}|_{\pi_t}$
either pierces the surface or vanishes at each of its points. The
corresponding geometrical picture suggests a notion of transversality.
\begin{defi} \label{def:Trans}

\begin{enumerate}
\item   A one-parameter family of states $\bar{\omega}_{\lambda} \in \Gamma_{Z'}$\ is \emph{transversal} to a constant time surface $\Gamma_{Z'}|_{\pi_t}$\ if for all $\lambda'$\ such that $\bar{\omega}_{\lambda'} \in \Gamma_{Z'}|_{\pi_t}$\ either $\left. \frac{\rm d}{\rm d \lambda} \bar{\omega}_{\lambda} (A) \right|_{\lambda = \lambda'} = 0, \ \forall A\in Z'$\ or there is some $B\in  \ker \pi_t$\ such that $\left. \frac{\rm d}{\rm d \lambda} \bar{\omega}_{\lambda} (B) \right|_{\lambda = \lambda'} \neq 0$. \label{def:TransStates}
\item A flow on $Z'$\ is \emph{transversal} to $\Gamma_{Z'}|_{\pi_t}$ if every one-parameter family of states generated by it is transversal to $\Gamma_{Z'}|_{\pi_t}$. \label{def:TransFlow}
\end{enumerate}
\end{defi}

To map gauge flows onto $\Gamma_{Z'}$, let us first explicitly define the
bijection between $\Gamma_{C_H}$\ and $\Gamma_{Z'}$. Taken together, the two
conditions of deparameterization $Z' \cap \mathcal{A}C_H = \{0\}$ and $Z' +
\mathcal{A}C_H = \mathcal{A}$ imply that every $A$ can be written as
\begin{equation} \label{eq:AlgebraicDecomp}
A = B + G C_H \ ,
\end{equation}
where $B \in Z'$ is unique and $G \in \mathcal{A}$\ is unique up to adding
terms that are annihilated by $C_H$\ multiplied on their right because $C_H$
is not a divisor of zero in a single constrained system.  We define the
forward map $\phi\colon \Gamma_{C_H} \to \Gamma_{Z'}$\ by restriction: for any
$\omega \in \Gamma_{C_H}$\ and $A\in Z'$
\begin{equation} \label{eq:Phi}
\phi(\omega)(A) = \omega(A) \ .
\end{equation}
We use decomposition~(\ref{eq:AlgebraicDecomp}) to define the inverse map: for any $A \in \mathcal{A}$\ and $\bar{\omega} \in \Gamma_{Z'}$, we have $A = B + G C_H$, and define
\begin{equation} \label{eq:PhiInv}
\phi^{-1}(\bar{\omega}) (A) = \bar{\omega} (B) \ .
\end{equation}
Since $B=0$\ for any $A \in \mathcal{A}C_H$\ by condition~\ref{def:Depar2} of
Definition~\ref{def:Deparameterization}, clearly $\phi^{-1}(\bar{\omega}) \in
\Gamma_{C_H}\subset \Gamma$. A flow is mapped from $\Gamma_{C_H}$\ to
$\Gamma_{Z'}$\ by mapping the one-parameter families of states it generates:
$\bar{\omega}_{\lambda} = \phi(\omega_{\lambda}$).

We can further iterate relation~(\ref{eq:AlgebraicDecomp}) expressing $G$\
as a sum of elements from $Z'$\ and $\mathcal{A}C_H$\ to get a second-order
expression $A = B+\tilde{B} C_H+\tilde{G} C_H^2$\ with $B, \tilde{B} \in Z'$\ and $\tilde{G}\ \in
\mathcal{A}$, and so on. In fact, since by condition~\ref{def:Depar3} of Definition~\ref{def:Deparameterization} the
set $Z'\cup\{C_H\}$\ algebraically generates $\mathcal{A}$, every
$A\in \mathcal{A}$\ can be written as a finite-order polynomial in $C_H$
\begin{equation} \label{eq:PolExpansionInC}
A = B_0 + B_1 C_H + B_2 C_H^2 + \ldots B_M C_H^M \ ,
\end{equation}
with $B_i \in Z'$ and $M\in{\mathbb N}$. 

Further, the adjoint action of $C_H$\ preserves the commutant of $Z$, since for any $A \in Z'$
\begin{equation}\label{eq:ZAdjointAction}
\left[ \left[A, C_H\right], Z\right] = \left[ \left[Z, C_H\right], A\right] +\left[ \left[A, Z\right], C_H\right] = 0 \ ,
\end{equation}
since $[Z, C_H]=i\hbar \mathbf{1}$ according to
Definition~\ref{def:Deparameterization}, and $[A, Z]=0$; therefore $[A,
C_H]\in Z'$.
\begin{lemma}
  For a constraint $C_H$\ deparameterized by a clock $(Z,\mathcal{F})$, every
  gauge flow mapped to $\Gamma_{Z'}$\ from $\Gamma_{C_H}$\ is determined by
  the derivation $\vec{D}_HF=\frac{1}{i\hbar} [F, C_H]$\ on $Z'$.
\end{lemma}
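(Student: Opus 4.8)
The statement claims that any gauge flow on $\Gamma_{Z'}$ obtained as $\phi$ of a gauge flow $S_{AC_H}(\lambda)$ on $\Gamma_{C_H}$ is governed, on the algebra $Z'$, by the single derivation $\vec{D}_H F = \frac{1}{i\hbar}[F,C_H]$, which by~(\ref{eq:ZAdjointAction}) indeed maps $Z'$ into $Z'$. The plan is to take an arbitrary generator $AC_H \in \mathcal{A}C_H$, look at the defining flow equation of $S_{AC_H}(\lambda)$ evaluated on elements $F \in Z'$, push it through $\phi$, and show the right-hand side collapses to $\frac{1}{i\hbar}[F,C_H]$ evaluated in the flowed state — independently of $A$. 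The key computational input is that a state in $\Gamma_{C_H}$ (and, by Lemma~\ref{lem:GammaCinv}, its image under any such flow) annihilates all of $\mathcal{A}C_H$.

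First I would fix $\omega \in \Gamma_{C_H}$, set $\bar{\omega}_\lambda = \phi(S_{AC_H}(\lambda)\omega)$, and note that since $\phi$ is just restriction to $Z'$ (equation~(\ref{eq:Phi})) we have $\bar{\omega}_\lambda(F) = S_{AC_H}(\lambda)\omega(F)$ for all $F \in Z'$. Then by Definition~\ref{def:Cequiv},
\[
i\hbar \frac{\rm d}{\rm d\lambda} \bar{\omega}_\lambda(F) = S_{AC_H}(\lambda)\omega\bigl([F, AC_H]\bigr) = S_{AC_H}(\lambda)\omega\bigl(A[F,C_H] + [F,A]C_H\bigr).
\]
By Lemma~\ref{lem:GammaCinv}, $S_{AC_H}(\lambda)\omega \in \Gamma_{C_H}$ for every $\lambda$, so it vanishes on $\mathcal{A}C_H$, and in particular $S_{AC_H}(\lambda)\omega([F,A]C_H) = 0$. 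Now the crucial step: $[F,C_H] \in Z'$ by~(\ref{eq:ZAdjointAction}), so $A[F,C_H]$ need not lie in $Z'$, but I can use decomposition~(\ref{eq:AlgebraicDecomp}) to write $A[F,C_H] = B' + G'C_H$ with $B' \in Z'$; then $S_{AC_H}(\lambda)\omega(A[F,C_H]) = S_{AC_H}(\lambda)\omega(B') + 0 = \phi(S_{AC_H}(\lambda)\omega)(B')$. On the other hand, also decompose $[F,C_H]$ itself (it is already in $Z'$, so it equals its own $B$-part), and more usefully observe that for a state $\psi \in \Gamma_{C_H}$, $\psi(A[F,C_H])$ depends on $A[F,C_H]$ only through its $Z'$-component in the sense of~(\ref{eq:PhiInv}); since $\phi^{-1}\circ\phi$ is the identity on $\Gamma_{C_H}$, $\psi(A[F,C_H]) = \phi(\psi)(B')$ where $A[F,C_H] = B' + G'C_H$. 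The content I want is that this equals $\bar{\omega}_\lambda$ applied to the $Z'$-part of $[F,C_H]$ times something — here I expect I actually just need that the flow on $Z'$ closes, i.e. that $\frac{\rm d}{\rm d\lambda}\bar{\omega}_\lambda(F)$ is expressed purely through $\bar{\omega}_\lambda$ and $[F,C_H]\in Z'$, and the "$A$-twist" $A[\cdot,C_H] \mapsto B'$ reproduces a fixed $A$-independent derivation precisely because restricted to $\Gamma_{C_H}$ one has $\psi(A X) = \psi(X)$-type reductions. I would make this precise by showing that for $\psi \in \Gamma_{C_H}$ and $X \in Z'$, $\psi(AX)$ equals $\psi$ evaluated on the projection of $AX$ to $Z'$, and that this projection, composed with $X \mapsto \frac{1}{i\hbar}[F,C_H]$, is exactly $\vec{D}_H$ acting inside the state — the factor $A$ washing out on $\Gamma_{C_H}$ the same way $\lambda$ washes out in~(\ref{OCflow}).

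\textbf{The main obstacle.} The delicate point is justifying that the "$A$" multiplying $[F,C_H]$ genuinely disappears at the level of states in $\Gamma_{C_H}$. One does \emph{not} have $\psi(A[F,C_H]) = \psi(A)\psi([F,C_H])$, as the paper explicitly warns. What saves the argument is that $\vec{D}_H F = \frac{1}{i\hbar}[F,C_H]$ lands in $Z'$, and that on $\Gamma_{C_H}$, evaluation is insensitive to $\mathcal{A}C_H$: writing $A = 1 + (A-1)$ is no help, but writing $A[F,C_H]$ via~(\ref{eq:AlgebraicDecomp}) and then arguing that the $Z'$-component $B'$ of $A[F,C_H]$ is itself $\frac{1}{i\hbar}[F,C_H]$ up to elements on which every $\Gamma_{C_H}$-state vanishes — this is what needs care, and it is essentially a statement that the derivation $\vec{D}_H$ on $Z'$ is well-defined as the induced action on the quotient $\mathcal{A}/\mathcal{A}C_H \cong Z'$. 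So I would phrase the heart of the proof as: the flow generated by $AC_H$, pushed to $\Gamma_{Z'}$, satisfies $i\hbar\frac{\rm d}{\rm d\lambda}\bar{\omega}_\lambda(F) = \bar{\omega}_\lambda\bigl(\pi_{Z'}(A[F,C_H])\bigr)$ where $\pi_{Z'}\colon \mathcal{A} \to Z'$ is the projection along $\mathcal{A}C_H$ from condition~\ref{def:Depar1}; and then show $\pi_{Z'}(A[F,C_H]) = [F,C_H]$ is \emph{false} in general, so in fact the correct reading — and the one I believe the authors intend — is that every such flow, while $A$-dependent as a map, has the property that its action \emph{restricted to test elements $F$ where one tracks the leading behavior} is controlled by $\vec{D}_H$; concretely, the flow on $Z'$ is $\bar{\omega} \mapsto \bar{\omega}\circ(\text{automorphism built from }\exp(\text{ad}_{AC_H}))$ and the induced derivation on $Z'$, after projecting along $\mathcal{A}C_H$, is independent of $A$ and equals $\vec{D}_H$. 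I expect the cleanest route is to verify directly that $\pi_{Z'} \circ \mathrm{ad}_{AC_H}|_{Z'} = \mathrm{ad}_{C_H}|_{Z'}$ as maps $Z' \to Z'$ — using $[F,AC_H] = A[F,C_H] + [F,A]C_H$, the second term is in $\mathcal{A}C_H$ hence killed by $\pi_{Z'}$, and $\pi_{Z'}(A[F,C_H]) = \pi_{Z'}(A)\cdot_{?}[F,C_H]$ — and this last identity, requiring $\pi_{Z'}(A[F,C_H])$ to not see the non-$Z'$ part of $A$, is exactly where condition~\ref{def:Depar2} ($Z' \cap \mathcal{A}C_H = \{0\}$) together with $[F,C_H] \in Z'$ does the work, and is the step I would spend the most care on.
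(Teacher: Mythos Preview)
Your proposal has a genuine gap, rooted in a misreading of what the lemma asserts. You are trying to show that the mapped flow on $\Gamma_{Z'}$ is literally the single flow generated by $\vec{D}_H$, independent of $A$ --- hence your struggle to make the factor $A$ in $A[F,C_H]$ ``disappear'', and your correct realization that $\pi_{Z'}(A[F,C_H]) = [F,C_H]$ is false. The lemma does not claim this. It claims that the flow on $\Gamma_{Z'}$, while still $A$-dependent, can be expressed purely through the derivation $\vec{D}_H$ and its iterates acting on $F\in Z'$, with coefficients that are themselves in $Z'$. The flow is therefore intrinsic to $Z'$: no reference to $\mathcal{A}$, $C_H$, or the projection $\pi_{Z'}$ is needed once the formula is in hand. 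Concretely, the paper arrives at
\[
i\hbar \frac{{\rm d}}{{\rm d}\lambda}\bar{\omega}_\lambda(F) = \bar{\omega}_\lambda\Bigl(\sum_{n=1}^{M+1}(-1)^{n-1}(i\hbar)^n B_{n-1}\,\vec{D}_H^n F\Bigr),
\]
which is manifestly built from $\vec{D}_H$ on $Z'$ and the $Z'$-elements $B_{n-1}$ determined by $A$.

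The technical ingredient you are missing is condition~\ref{def:Depar3} of Definition~\ref{def:Deparameterization}: the set $Z'\cup\{C_H\}$ algebraically generates $\mathcal{A}$. This is precisely what lets you write $A$ as a \emph{finite} polynomial $A = B_0 + B_1 C_H + \cdots + B_M C_H^M$ with $B_i \in Z'$ (equation~(\ref{eq:PolExpansionInC})), rather than just the single-step split $A = B + GC_H$ you invoke. With this expansion, one computes $[F, B C_H^n]$ by iterating $[F, B C_H] = B(i\hbar\vec{D}_H F) + [F,B]C_H$, producing terms of the form $B(\vec{D}_H^k F)C_H^{n-k}$; all terms carrying a surviving $C_H$ on the right are then killed by any state in $\Gamma_{C_H}$, and what remains is the sum of $B_{n-1}\vec{D}_H^n F$ terms above, which lies entirely in $Z'$. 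Your single-step decomposition gets you to $\bar{\omega}_\lambda(\pi_{Z'}(A[F,C_H]))$, which is correct but opaque: it is an element of $Z'$, but not yet visibly a function of $\vec{D}_H$ alone, and your attempt to pry it open via multiplicativity of $\pi_{Z'}$ cannot succeed because $\pi_{Z'}$ is merely linear.
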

\begin{proof}
First, we note that for $F, B \in \mathcal{A}$
\[
[F, BC_H] = B (i\hbar\vec{D}_HF) + [F, B] C_H
\]
Iterating by replacing $B$\ with $BC_H$, we get for any integer $n \geq 1$,
\[
[F, BC_H^n] = \sum_{i=1}^n {n \choose i} (-1)^{i-1} B ((i\hbar)^i\vec{D}_H^i F)
C_H^{n-i} + [F, B] C_H^n= (-1)^{n-1} B ((i\hbar)^n\vec{D}_H^n F) + G C_H \ . 
\]
We have combined the terms proportional to $C_H$ using
\[
G=\sum_{i=1}^{n-1} {n \choose i} (-1)^{i-1} B ((i\hbar)^i\vec{D}_H^i F)
  C_H^{n-i-1}  + [F, B] C_H^{n-1}\,.
\]
They all vanish when evaluated on states in $\Gamma_{C_H}$. Using the above result
and writing $A\in \mathcal{A}$ as a
polynomial in $C_H$\ as in equation~(\ref{eq:PolExpansionInC}), we have
\begin{equation}\label{eq:CflowProj}
[F, AC_H] = \sum_{n=1}^{M+1} (-1)^{n-1} B_{n-1} ((i\hbar)^n\vec{D}_H^n F) + GC_H
\end{equation}
for some $G \in \mathcal{A}$ and $B_i\in Z'$. For any state $\omega \in
\Gamma_{C_H}$, we have $\omega(G C_H) = 0$, and hence the flows induced by
constraint elements $AC_H$ satisfy
\[
i\hbar \frac{{\rm d}}{{\rm d} \lambda} \omega_{\lambda} (F) = \omega_{\lambda}\left(
  [F, AC_H] \right) = \sum_{n=1}^{M+1} (-1)^{n-1} \omega_{\lambda} \left( B_{n-1}
  ((i\hbar)^n\vec{D}_H^n F) \right) \ ,
\]
For $F\in Z'$\ we also have $\vec{D}_H^n F \in Z'$, and the above expression
can be computed entirely through the states' restriction to $Z'$, and the
action of $\vec{D}_H$\ on $Z'$. Hence, for $\bar{\omega}_{\lambda} =
\phi(\omega_{\lambda})$
\begin{equation}\label{eq:CFlowBreaking}
 i\hbar \frac{{\rm d}}{{\rm d} \lambda} \bar{\omega}_{\lambda} \left(
       F \right)  = \bar{\omega}_{\lambda}
  \left(  \sum_{n=1}^{M+1} (-1)^{n-1} (i\hbar)^n B_{n-1}\vec{D}_H^n \left( 
      F \right) \right) \ . 
\end{equation}
\end{proof}

For the flow generated on $\Gamma_{Z'}$\ by $C_H$ itself,
$B_0=\mathbf{1}$ and $B_n=0$ for $n>1$ in (\ref{eq:CflowProj}). For
$F=\mathbf{1}$\ equation~(\ref{eq:CFlowBreaking}) implies that along the flow,
${\rm d} \bar{\omega}_{\lambda} \left( (Z-t\mathbf{1}) \right)/{\rm d}\lambda
= 1 \neq 0$\ for any $t\in \mathbb{R}$. Therefore this particular flow is transversal to all constant time surfaces $\Gamma_{Z'}|_{\pi_t}$. In fact, the property 
\begin{equation}\label{eq:DZparametrization}
\bar{\omega} \left( (\vec{D}_HZ) F \right) = \bar{\omega} \left( \mathbf{1} F \right) =\bar{\omega}(F) \ , 
\end{equation}
is sufficient to ensure that all gauge flows mapped to $\Gamma_{Z'}$\ are
transversal to constant time surfaces. To show this, we first establish a
useful result that holds for any $\vec{D}_HZ \in Z'$\ that commutes with all elements of $Z'$, and a pair of non-negative integers $i \leq n$: 
\begin{eqnarray*}
\vec{D}_H^i \left( Z-t \mathbf{1} \right)^n &=& \vec{D}_H^{i-1} \left( n(Z-t
  \mathbf{1})^{n-1} (\vec{D}_HZ) \right) 
\\
&=& \vec{D}_H^{i-2} \left( n(n-1)(Z-t \mathbf{1})^{n-2} (\vec{D}_HZ)^2 + n(Z-t
  \mathbf{1})^{n-1} (\vec{D}_H^2Z) \right) 
\\
&=& \cdots
\\
&=& (Z-t \mathbf{1})^{n-i}  {n \choose i} (\vec{D}_HZ)^i + (Z-t
\mathbf{1})^{n-i+1} \Bigl( \cdots \Bigr) \ . 
\end{eqnarray*}
Therefore, for $\bar{\omega} \in \Gamma_{Z'}|_{\pi_t}$, and for any $F_1, F_2 \in Z'$
\begin{equation} \label{eq:AuxResult}
\bar{\omega} \left[ F_1 \left( \vec{D}_H^i \left( Z-t \mathbf{1} \right)^n
  \right) F_2 \right] = \left\{ \begin{array}{cl} 0 & \mbox{for } i<n \ ;\\
    \bar{\omega} \left( (\vec{D}_HZ)^n F_1 F_2 \right) &\mbox{for }
    i=n \ . \end{array} \right. 
\end{equation}

\begin{lemma} \label{lm:TotalLinearGauge} If $C_H$\ is deparameterized by $(Z, \mathcal{F})$, the flow of every $AC_H\in \mathcal{A}C_H$\ mapped to $\Gamma_{Z'}$\ is transversal to every constant time surface $\Gamma_{Z'}|_{\pi_t}$\ of $Z$.
\end{lemma}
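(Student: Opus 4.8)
The plan is to show that an arbitrary gauge flow, once mapped to $\Gamma_{Z'}$, satisfies the transversality condition of Definition~\ref{def:Trans} at every point lying on a constant-time surface. Fix $t\in\mathbb{R}$ and suppose $\bar{\omega}_{\lambda'}\in\Gamma_{Z'}|_{\pi_t}$ along a one-parameter family generated by the flow of some $AC_H$. By Definition~\ref{def:Trans}, transversality will follow if we can show that whenever $\frac{\rm d}{\rm d\lambda}\bar{\omega}_\lambda(B)|_{\lambda'}=0$ for all $B\in\ker\pi_t$, it must also vanish for all $A\in Z'$. The generators of $\ker\pi_t$ inside $Z'$ are, by construction, elements of the form $(Z-t\mathbf{1})F_0$ with $F_0\in Z'$ (equivalently, by Lemma~\ref{CosetLemma}, any $B\in\ker\pi_t$ decomposes with vanishing $\mathcal{F}$-part), so the key is to control $\frac{\rm d}{\rm d\lambda}\bar{\omega}_\lambda\bigl((Z-t\mathbf{1})^{k}F\bigr)$ for all $k\ge 1$ and $F\in Z'$, and relate these to the $k=0$ values.

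First I would feed $F \mapsto (Z-t\mathbf{1})^{k}F$ (for $k=0,1,2,\dots$) into the master flow equation~(\ref{eq:CFlowBreaking}), so that
\[
i\hbar \frac{\rm d}{\rm d\lambda}\bar{\omega}_\lambda\bigl((Z-t\mathbf{1})^{k}F\bigr)
= \bar{\omega}_\lambda\!\left(\sum_{n=1}^{M+1}(-1)^{n-1}(i\hbar)^{n} B_{n-1}\,\vec{D}_H^{n}\bigl((Z-t\mathbf{1})^{k}F\bigr)\right).
\]
Then I would expand $\vec{D}_H^{n}\bigl((Z-t\mathbf{1})^{k}F\bigr)$ by the Leibniz rule for the derivation $\vec{D}_H$ on $Z'$, distributing the $n$ applications of $\vec{D}_H$ among the $k$ factors of $(Z-t\mathbf{1})$ and the single factor $F$. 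Using the auxiliary computation~(\ref{eq:AuxResult}) — evaluated in the state $\bar{\omega}_{\lambda'}\in\Gamma_{Z'}|_{\pi_t}$, which kills every term still carrying a positive power of $(Z-t\mathbf{1})$ — only the terms that hit each of the $k$ copies of $(Z-t\mathbf{1})$ at least once survive. Since $\vec{D}_H^{i}(Z-t\mathbf{1})^{n}$ contributes a nonzero surviving piece only for $i=n$, with value proportional to $(\vec{D}_HZ)^{n}$, and since~(\ref{eq:DZparametrization}) says $\bar{\omega}\bigl((\vec{D}_HZ)F\bigr)=\bar{\omega}(F)$ after stripping each factor $\vec{D}_HZ=\mathbf{1}$, the whole right-hand side collapses to a finite linear combination of the $k=0$ derivatives $\frac{\rm d}{\rm d\lambda}\bar{\omega}_\lambda(F')|_{\lambda'}$ for various $F'\in Z'$, possibly together with lower-$k$ derivative terms.

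This yields a triangular recursion in $k$: the derivative of $\bar{\omega}_\lambda\bigl((Z-t\mathbf{1})^{k}F\bigr)$ at $\lambda'$ is expressed through derivatives of $\bar{\omega}_\lambda\bigl((Z-t\mathbf{1})^{j}F'\bigr)$ with $j<k$. Hence if the $k=0$ derivatives $\frac{\rm d}{\rm d\lambda}\bar{\omega}_\lambda(A)|_{\lambda'}$ are \emph{not} all zero — i.e.\ the flow does not vanish at $\bar{\omega}_{\lambda'}$ — then by induction on $k$ the only way for \emph{every} derivative with $k\ge 1$ (that is, every $B\in\ker\pi_t$, after expanding $B=F+(Z-t\mathbf{1})B_0$ and noting $F$ has the same derivative structure controlled by the $k=0$ case) to vanish would force the $k=0$ derivatives to vanish as well, a contradiction; running the logic in the contrapositive direction, either all $k=0$ derivatives vanish (first alternative of Definition~\ref{def:TransStates}) or some $\ker\pi_t$-derivative is nonzero (second alternative). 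Invoking condition~\ref{def:Depar3} of Definition~\ref{def:Deparameterization} guarantees the polynomial expansion~(\ref{eq:PolExpansionInC}) is finite so all the sums above are finite and the recursion is well-founded. Since the one-parameter family was arbitrary, part~\ref{def:TransFlow} of Definition~\ref{def:Trans} gives transversality of the flow of $AC_H$ to every $\Gamma_{Z'}|_{\pi_t}$.

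The main obstacle I anticipate is bookkeeping the combinatorics of $\vec{D}_H^{n}\bigl((Z-t\mathbf{1})^{k}F\bigr)$ cleanly enough that the "only the term hitting all $k$ copies survives" reduction is rigorous rather than heuristic — in particular making precise, via~(\ref{eq:AuxResult}), that after evaluation in $\bar{\omega}_{\lambda'}\in\Gamma_{Z'}|_{\pi_t}$ the surviving contribution is exactly $\bar{\omega}_{\lambda'}$ applied to $B_{n-1}$ times a $\vec{D}_H^{n}F$-type term, so that the right-hand side really does reduce to the $k=0$ flow equation and the triangular structure in $k$ is manifest. The fact that $\vec{D}_HZ=\mathbf{1}$ is central and commutes with everything is what makes~(\ref{eq:AuxResult}) applicable and what collapses the surviving powers $(\vec{D}_HZ)^{n}$ to the identity; without the linear-clock hypothesis $[Z,C_H]=i\hbar\mathbf{1}$ this argument would not close, which is presumably why the general (factorized) case is deferred to Section~\ref{sec:linearizing}.
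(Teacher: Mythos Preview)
Your setup is the paper's: substitute $(Z-t\mathbf{1})^{k}F$ into~(\ref{eq:CFlowBreaking}), expand by Leibniz, invoke~(\ref{eq:AuxResult}) at $\bar{\omega}_{\lambda'}\in\Gamma_{Z'}|_{\pi_t}$, and argue by contradiction. The gap is in the recursion you describe. After~(\ref{eq:AuxResult}) strips the $(Z-t\mathbf{1})$ factors, the $k$-derivative evaluates (up to nonzero combinatorial constants $c_{n,k}$) to
\[
\left. i\hbar\,\frac{{\rm d}}{{\rm d}\lambda}\bar{\omega}_\lambda\bigl((Z-t\mathbf{1})^{k}F\bigr)\right|_{\lambda'}
\;=\;\sum_{n=k}^{M+1}c_{n,k}\,\bar{\omega}_{\lambda'}\!\bigl(B_{n-1}\,\vec{D}_H^{\,n-k}F\bigr),
\]
a combination of the quantities $\bar{\omega}_{\lambda'}(B_{n-1}\,\cdot\,)$ for $n\ge k$. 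This is neither ``a finite linear combination of the $k=0$ derivatives'' nor expressible through the $j<k$ derivatives as you claim; the triangularity is in the coefficient index $n$ of the $B_{n}$, and it points the other way.

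Accordingly, the paper \emph{descends} from $k=M{+}1$. At $k=M{+}1$ only the single term $n=M{+}1$ survives, so vanishing of that derivative for all $F\in Z'$ gives $\bar{\omega}_{\lambda'}(B_M F)=0$. Feeding this into the $k=M$ equation kills the $n=M{+}1$ contribution and forces $\bar{\omega}_{\lambda'}(B_{M-1}F)=0$; iterating down to $k=1$ yields $\bar{\omega}_{\lambda'}(B_{0}F)=0$. With every $\bar{\omega}_{\lambda'}(B_{n}\,\cdot\,)=0$ in hand, the $k=0$ instance of~(\ref{eq:CFlowBreaking}) shows the flow vanishes at $\bar{\omega}_{\lambda'}$, the desired contradiction. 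The missing idea in your sketch is that the objects one actually solves for are the $\bar{\omega}_{\lambda'}(B_n\,\cdot\,)$, extracted one at a time by starting at the \emph{top} $k=M{+}1$ and working downward; your stated recursion (``the $k$-derivative is expressed through $j<k$ derivatives'') does not hold and would not, even if it did, let you infer anything about the $k=0$ case from the vanishing of the $k\ge1$ cases.
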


\begin{proof} Using~(\ref{eq:PolExpansionInC}) we write
\[
A = B_0 + B_1 C_H + B_2 C_H^2 + \ldots B_M C_H^M 
\]
with $B_i\in Z'$.  Suppose this flow is not transversal to
$\Gamma_{Z'}|_{\pi_t}$\ for some $t \in \mathbb{R}$. Then,
according to Definition~\ref{def:Trans}, there is a one-parameter family of
states $\bar{\omega}_{\lambda} \in \Gamma_{Z'}$\ generated by $AC_H$\
according to equation~(\ref{eq:CFlowBreaking}) with $\bar{\omega}_{\lambda_0}
\in \Gamma_{Z'}|_{\pi_t}$\ for some $\lambda_0 \in \mathbb{R}$, such that the flow of $AC_H$\ does not vanish at $\bar{\omega}_{\lambda_0}$, and for
any $F \in Z'$\ using~(\ref{eq:CFlowBreaking}),
\begin{eqnarray}
0 &=& \left. i\hbar \frac{{\rm d}}{{\rm d} \lambda} \bar{\omega}_{\lambda} \left(
    (Z-t\mathbf{1})^{M+1} F \right) \right|_{\lambda = \lambda_0} 
\nonumber\\
&=& \, \bar{\omega}_{\lambda_0} \left(  \sum_{n=1}^{M+1} (-1)^{n-1} (i\hbar)^n B_{n-1}\vec{D}_H^n
  \left( (Z-t \mathbf{1})^{M+1}  F \right) \right) 
\nonumber\\
&=&  \sum_{n=1}^{M+1} (-1)^{n-1} (i\hbar)^n \bar{\omega}_{\lambda_0} \left(  B_{n-1} \sum_{m=1}^n {n
    \choose m} \left(\vec{D}_H^m (Z-t \mathbf{1})^{M+1}\right)  \left(
    \vec{D}_H^{n-m} F \right) \right) 
\nonumber\\
&=&  \sum_{n=1}^{M+1} (-1)^{n-1} (i\hbar)^n \sum_{m=1}^n {n \choose m} \bar{\omega}_{\lambda_0}
\left(  B_{n-1} \left(\vec{D}_H^m (Z-t \mathbf{1})^{M+1}\right)  \left(
    \vec{D}_H^{n-m} F \right) \right) 
\nonumber\\
&=&  \sum_{n=1}^{M+1} (-1)^{n-1} (i\hbar)^n \sum_{m=1}^n {n \choose m} \delta_{m,\, M+1}
\bar{\omega}_{\lambda_0} \left(  B_{n-1} (\vec{D}_HZ)^{M+1} \left( \vec{D}_H^{n-m} F
  \right) \right) 
\nonumber\\
&=& (-1)^M (i\hbar)^{M+1} \bar{\omega}_{\lambda_0} \left( B_M \mathbf{1}^{M+1} F \right) = (-1)^M
(i\hbar )^{M+1} \bar{\omega}_{\lambda_0} \left( B_M F \right) \ ,  \label{omegaBF}
\end{eqnarray}
where the Kronecker delta comes directly from~(\ref{eq:AuxResult}). This implies that $\bar{\omega}_{\lambda_0} \left( B_M F \right) =0$ for all $F \in
Z'$. Iterating the argument, (\ref{eq:CFlowBreaking}) also implies
\vspace{-2pt}
\begin{eqnarray*}
0 &=& i\hbar \left. \frac{{\rm d}}{{\rm d} \lambda} \bar{\omega}_{\lambda} \left(
    (Z-t\mathbf{1})^M F \right) \right|_{\lambda = \lambda_0} 
\\
&=& \, \bar{\omega}_{\lambda_0} \left(  \sum_{n=1}^{M+1} (-1)^{n-1} (i\hbar)^n B_{n-1}\vec{D}_H^n
  \left( (Z-t \mathbf{1})^M  F \right) \right) 
\\
&=& \, \bar{\omega}_{\lambda_0} \left(  \sum_{n=1}^{M} (-1)^{n-1} (i\hbar)^n B_{n-1}\vec{D}_H^n
  \left( (Z-t \mathbf{1})^M  F \right) \right) \\
  &&+ (-1)^M(i\hbar)^{M+1} \bar{\omega}_{\lambda_0} \left( B_M
  \left[ \vec{D}_H^{M+1} \left((Z-t\mathbf{1})^M F \right) \right] \right) \
. 
\end{eqnarray*}
By (\ref{omegaBF}), the second term in the final expression is zero for any
$F \in Z'$, giving \vspace{-2pt}
\begin{eqnarray*}
0 &=& \, \bar{\omega}_{\lambda_0} \left(  \sum_{n=1}^{M} (-1)^{n-1} (i\hbar)^n B_{n-1}\vec{D}_H^n
  \left( (Z-t \mathbf{1})^M  F \right) \right)  
\\
&=&  \sum_{n=1}^{M} (-1)^{n-1} (i\hbar)^n \sum_{m=1}^n {n \choose m} \bar{\omega}_{\lambda_0} \left(
  B_{n-1} \left(\vec{D}_H^m (Z-t \mathbf{1})^{M}\right)  \left(
    \vec{D}_H^{n-m} F \right) \right) 
\\
&=& (-1)^{M-1} (i\hbar)^M\bar{\omega}_{\lambda_0} \left( B_{M-1} (\vec{D}_HZ)^{M} F \right) =
(-1)^{M-1} (i\hbar)^{M} \bar{\omega}_{\lambda_0} \left( B_{M-1} F \right) \ , 
\end{eqnarray*}
which implies $\bar{\omega}_{\lambda_0} \left( B_{M-1} F \right) =0$ for all $F \in
Z'$. Continuing in this way, we establish that $\bar{\omega}_{\lambda_0} \left( B_n F
\right) =0$ for all $n$. Therefore the flow must
completely vanish at $\bar{\omega}_{\lambda_0}$ since
\[
\left. i\hbar \frac{{\rm d}}{{\rm d} \lambda} \bar{\omega}_{\lambda} \left( F \right)
\right|_{\lambda = \lambda_0}  = \sum_{n=1}^{M+1} (-1)^{n-1} (i\hbar)^n \bar{\omega}
\left( B_{n-1} (\vec{D}_H^n F) \right) = 0 \ ,
\]
proving our claim by contradiction.
\end{proof}

%
%
%
%
%

\subsection{Deparameterized time evolution}\label{sec:NonRelTime}

In this subsection we prove that for a constraint $C_H$\ that is
deparameterized by $(Z, \mathcal{F})$, the adjoint action of $C_H$ projects to
a dynamical flow on the fashionable algebra $\mathcal{F}$. Later on, in
Section~\ref{sec:Positivity} we will link deparameterized dynamics to the
physical states and the Dirac observables of the constrained system.

Let $S_{C_H} (\lambda)$\ denote the flow induced on $\Gamma$ by the adjoint
action of $C_H$ on $\mathcal{A}$, which by Lemma~\ref{lem:GammaCinv} preserves
the constraint surface $\Gamma_{C_H}$. Using the map defined in~(\ref{eq:Phi})
and~(\ref{eq:PhiInv}) we can transfer this flow from $\Gamma_C$\ to a flow on
$\Gamma_{Z'}$\ via $\bar{S}_{C_H} (\lambda) = \phi \circ S_{C_H}(\lambda)
\circ \phi^{-1}$. Explicitly, $\bar{S}_{C_H} (\lambda)\phi(\omega) = \phi
(S_{C_H} (\lambda)\omega)$\ for any $\omega \in \Gamma_{C_H}$. Using this
relation, for any $\bar{\omega} \in \Gamma_{Z'}$ and any $A \in Z'$
\begin{equation} \label{eq:ProjectedCHFlow}
i\hbar \frac{{\rm d}}{{\rm d} \lambda} \left( \bar{S}_{C_H} (\lambda) \bar{\omega} (A) \right) =  \bar{S}_{C_H} (\lambda) \bar{\omega} \left( [A,  C_H] \right) \ ,
\end{equation}
where we implicitly use the fact that $[A, C_H] \in Z'$, which follows from  equation~(\ref{eq:ZAdjointAction}). Therefore, $\bar{S}_{C_H} (\lambda)$\ is generated simply by the adjoint action of $C_H$\ on the subalgebra $Z'$. Referring back to Definition~\ref{def:RelationalEvol} from Section~\ref{sec:QClocks} we have the following result.

\begin{lemma}\label{FlowLemma} Let $C_H$\ be deparameterized by $(Z, \mathcal{F})$, then the flow $\bar{S}_{C_H} (\lambda) = \phi \circ S_{C_H}(\lambda)  \circ \phi^{-1}$ generates time evolution of states relative to $Z$\ on $\Gamma_{Z'}$.
\end{lemma}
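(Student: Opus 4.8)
The plan is to unwind Definition~\ref{def:RelationalEvol} and to verify its three ingredients directly. Fix $\bar\omega\in\Gamma_{Z'}$ that is positive on $Z'$ and lies on some constant-time surface $\Gamma_{Z'}|_{\pi_{t_0}}$ (these are exactly the standing hypotheses needed for $\bar\omega$ to be the $\bar\omega_{t_0}$ of a time evolution), identify the time parameter with the flow parameter via $t=t_0+\lambda$, and write $\bar\omega_\lambda:=\bar S_{C_H}(\lambda)\bar\omega$.

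First I would record that, by~(\ref{eq:ZAdjointAction}) and $C_H=C_H^*$, the restriction of $\vec D_H=\frac{1}{i\hbar}[\,\cdot\,,C_H]$ to $Z'$ is a $*$-derivation of $Z'$, so that $Z'$ equipped with this ($\lambda$-independent) derivation is an instance of the setting of Definition~\ref{def:Dflow}; by~(\ref{eq:ProjectedCHFlow}), $\bar S_{C_H}(\lambda)$ is precisely the flow it generates on $\Gamma_{Z'}$. Lemma~\ref{EvolutionLemmaKin} applied to $Z'$ then shows that $\bar\omega_\lambda$ stays normalized, and Lemma~\ref{EvolutionLemma} applied to $Z'$ shows that $\bar\omega_\lambda$ stays positive on $Z'$ for all $\lambda$. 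This settles the normalization and positivity requirements of Definition~\ref{def:RelationalEvol}.

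The substantive step is to show that the constant-time surface carrying $\bar\omega_\lambda$ advances at unit rate with $\lambda$, i.e. $\bar\omega_\lambda\in\Gamma_{Z'}|_{\pi_{t_0+\lambda}}$; this is where the deparameterization relation $[Z,C_H]=i\hbar\mathbf 1$ is indispensable. Following the template of the proofs of Lemmas~\ref{EvolutionLemma} and~\ref{lem:GammaCinv}, I would set $g_A(\lambda)=\bar\omega_\lambda\big((Z-(t_0+\lambda)\mathbf 1)A\big)$ for $A\in Z'$; since $\ker\pi_{t_0+\lambda}=(Z-(t_0+\lambda)\mathbf 1)Z'$, the claim is equivalent to $g_A\equiv 0$ for every $A\in Z'$. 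One has $g_A(0)=0$ because $\bar\omega\in\Gamma_{Z'}|_{\pi_{t_0}}$, and differentiating in $\lambda$, using $[ZA,C_H]=Z[A,C_H]+i\hbar A$ together with~(\ref{eq:ProjectedCHFlow}), the $i\hbar A$ term cancels the contribution of differentiating the explicit factor $(t_0+\lambda)$, leaving
\[
 \frac{{\rm d}}{{\rm d}\lambda}\, g_A(\lambda) = \frac{1}{i\hbar}\, \bar\omega_\lambda\big((Z-(t_0+\lambda)\mathbf 1)[A,C_H]\big) = g_{\vec D_H A}(\lambda)\,,
\]
the last equality using $[A,C_H]\in Z'$ (eq.~(\ref{eq:ZAdjointAction})). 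Thus $\lambda\mapsto g_{(\cdot)}(\lambda)$ is a one-parameter family of linear functionals on $Z'$ solving the flow equation of $\vec D_H$ with vanishing initial data, so by the standing uniqueness assumption it is the zero family, and hence $\bar\omega_\lambda\in\Gamma_{Z'}|_{\pi_{t_0+\lambda}}$.

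Assembling the pieces, $t\mapsto\bar\omega_t:=\bar S_{C_H}(t-t_0)\bar\omega$ is a one-parameter family of states in $\Gamma_{Z'}$, each positive on $Z'$ and each lying in $\Gamma_{Z'}|_{\pi_t}$, with $\bar\omega=\bar\omega_{t_0}$ --- exactly a \emph{time evolution of $\bar\omega$ relative to $Z$} in the sense of Definition~\ref{def:RelationalEvol}, which is the assertion of the lemma. I expect the only real obstacle to be the third step: membership in $\Gamma_{Z'}|_{\pi_t}$ demands vanishing on the entire ideal $(Z-t\mathbf 1)Z'$, not merely the correct value $\bar\omega_\lambda(Z)=t$, so one genuinely needs the closed linear system satisfied by the $g_A$ together with the uniqueness assumption; everything else is a bookkeeping appeal to Lemmas~\ref{EvolutionLemmaKin}, \ref{EvolutionLemma}, \ref{lem:GammaCinv} and to~(\ref{eq:ProjectedCHFlow}).
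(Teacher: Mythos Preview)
Your proposal is correct and follows essentially the same approach as the paper: both reduce the claim to showing that $\bar S_{C_H}(\lambda)$ carries $\Gamma_{Z'}|_{\pi_{t_0}}$ to $\Gamma_{Z'}|_{\pi_{t_0+\lambda}}$, introduce the auxiliary functionals $g_A(\lambda)=\bar\omega_\lambda\big((Z-(t_0+\lambda)\mathbf 1)A\big)$ (the paper calls them $f_A$), show via $[Z,C_H]=i\hbar\mathbf 1$ that they satisfy a closed linear system with zero initial data, and invoke the standing uniqueness assumption. Your write-up is slightly more explicit than the paper's in that you separately record the positivity step via Lemma~\ref{EvolutionLemma} and flag the standing hypothesis that $\bar\omega$ must already lie on some $\Gamma_{Z'}|_{\pi_{t_0}}$ and be positive on $Z'$; the paper handles the former with a one-line mention of $*$-compatibility and relegates the latter to a remark after the proof.
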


\begin{proof}
  Since $[\cdot,C_H]$\ is a $*$--compatible derivation on $\mathcal{A}$, it is
  $*$-compatible also on $Z'$.  All that needs to be shown is that the gauge
  flow $\bar{S}_{C_H}(\lambda)$ maps states from $\left. \Gamma_{Z'}
  \right|_{\pi_t}$\ to $\left. \Gamma_{Z'} \right|_{\pi_{t+\lambda}}$.

For convenience let us denote the one-parameter family of states $\bar{\omega}_{\lambda} :=\bar{S}_{C_H}(\lambda) \bar{\omega}$, where $\bar{\omega} \in \left. \Gamma_{Z'} \right|_{\pi_t}$\ for some fixed $t$. Following the method of Lemma~\ref{EvolutionLemma}, for each $A\in Z'$\ we define a function that varies along the flow $f_A(\lambda) = \bar{ \omega}_{\lambda} \left( (Z-(t+\lambda)\mathbf{1}) A \right)$. The state $\bar{\omega}_{\lambda}$ belongs to $\left. \Gamma_{Z'} \right|_{\pi_{t+\lambda}}$ if and only if $f_A(\lambda)=\bar{\omega}_{\lambda}\left( (Z-(t+\lambda)\mathbf{1}) A \right) = 0$ for all $A \in Z'$. Suppose  all of the functions $f_A(\lambda')= 0$\ for some $\lambda'$. We can compute their derivatives along the flow using equation~(\ref{eq:ProjectedCHFlow}):
\begin{eqnarray*}
\left.   \frac{{\rm d}f_A}{{\rm d}\lambda} \right|_{\lambda=\lambda'} &=& \left. \frac{{\rm d}}{{\rm d} \lambda} \bar{\omega}_{\lambda} \left( (Z-(t+\lambda)\mathbf{1}) A \right) \right|_{\lambda=\lambda'}
\\
&=& \left. \frac{{\rm d}}{{\rm d} \lambda} \bigl(  \bar{\omega}_{\lambda} \left( (Z-t \mathbf{1}) A \right) - \lambda  \bar{\omega}_{\lambda}(A) \bigr) \right|_{\lambda=\lambda'}
\\
  &=& \frac{1}{i\hbar} \bar{\omega}_{\lambda'} \left( \left[ (Z-t\mathbf{1})A,
      C_H \right] \right) - \frac{\lambda'}{i\hbar} \bar{\omega}_{\lambda'} \left(
    \left[ A, C_H \right] \right) -\bar{\omega}_{\lambda'} (A) 
  \\
  &=& \frac{1}{i\hbar} \bar{\omega}_{\lambda'} \left( (Z-t\mathbf{1}) \left[ A,
      C_H \right] \right) - \frac{\lambda'}{i\hbar} \bar{\omega}_{\lambda'} \left(
    \left[ A, C_H \right] \right) 
  \\
  &=& \frac{1}{i\hbar} \bar{\omega}_{\lambda'} \bigl( (Z-(t+\lambda')\mathbf{1})
  \left[ A, C_H \right] \bigr) =  \frac{1}{i\hbar} f_{\left[ A, C_H \right]} (\lambda') = 0\ . 
\end{eqnarray*}
The last equality follows since $ [A, C_H] \in Z'$. Furthermore, by our
initial conditions $f_A(0) = 0$\ for all $A\in \mathcal{A}$\ since
$\bar{\omega}_0 = \bar{\omega} \in \left. \Gamma_{Z'} \right|_{\pi_{t+0}}$. It
follows that $\{f_A(\lambda)=0, \mbox{ for all } \lambda \}_{A\in
  \mathcal{A}}$\ is the solution to the flow given by
equation~(\ref{eq:ProjectedCHFlow}) with $\bar{\omega}\in \left. \Gamma_{Z'}
\right|_{\pi_t}$. Thus at any point along the flow
\[
\bar{\omega}_{\lambda} \left(  (Z-(t+\lambda)\mathbf{1}) A \right) = 0 , \ \
{\rm for\ all\ } \ A \in Z' \ .
\]
Therefore $\bar{S}_{C_H}(\lambda) \bar{\omega} \in \left. \Gamma_{Z'}
\right|_{\pi_{t+\lambda}}$, as required.
\end{proof}
\begin{rem} The above lemma does not guarantee that for an arbitrary $\omega \in \Gamma_{C_H}$\ the family $\phi(S_{C_H} (\lambda) \omega)$\ is a time evolution relative to $Z$; we also need $\phi(\omega) \in \Gamma_{Z'}|_{\pi_t}$\ for some $t\in \mathbb{R}$.
\end{rem}

As discussed in Section~\ref{sec:QClocks}, time evolution of states $\bar{\omega}_t \in \Gamma_{Z'}$ relative to $Z$\ maps to a one-parameter family of states on the fashionable algebra $\tilde{\omega}_t=\psi_t(\bar{\omega}_t)$, by restriction of $\bar{\omega}_t$\ to $\mathcal{F}$: we have $\tilde{\omega}_t (F)=\bar{\omega}_t (F)$\ for all $F\in \mathcal{F}$. Furthermore, at each value $t\in \mathbb{R}$\ of the clock the adjoint action of $C_H$\ can be projected from $Z'$\ to $\mathcal{F}$\ using the map of equation~(\ref{eq:ZtoFmap})
\begin{equation} \label{eq:FashFlowGenerator}
\vec{D}^{\prime}_H (t) F := \nu_t \circ \pi_t \left(\vec{D}_H F \right) \ ,
\end{equation}
which is a $*$-compatible derivation on $\mathcal{F}$ thanks to the fact that $\nu_t$ and $\pi_t$ are $*$-homomorphisms. $\vec{D}^{\prime}_H(t)$\ is therefore a dynamical flow on $\mathcal{F}$\ according to Definition~\ref{def:Dflow}.

\begin{lemma} Let $\bar{\omega}_t = \bar{S}_{C_H} (t-t_0) \bar{\omega}$\ for
  some $t_0 \in \mathbb{R}$\ and $\bar{\omega} \in \Gamma_{Z'}|_{\pi_{t_0}}$,
  then the image $\tilde{\omega}_t = \psi_t(\bar{\omega}_t) \in
  \Gamma_{\mathcal{F}}$, where $\psi_t$ is defined in (\ref{psit}) is
  generated by the dynamical flow $\vec{D}^{\prime}_H(t)$. That is
\begin{equation}\label{FlowF}
\frac{{\rm d}}{{\rm d}t} \tilde{\omega}_t (F) = \tilde{\omega}_t
(\vec{D}^{\prime}_H(t) F) \ . 
\end{equation}
\end{lemma}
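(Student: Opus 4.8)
The plan is to reduce everything to the flow equation \eqref{eq:ProjectedCHFlow} for $\bar{S}_{C_H}$ on $Z'$ together with Lemma~\ref{FlowLemma}. First I would record that, by Lemma~\ref{FlowLemma}, the one-parameter family $\bar{\omega}_t=\bar{S}_{C_H}(t-t_0)\bar{\omega}$ satisfies $\bar{\omega}_t\in\Gamma_{Z'}|_{\pi_t}$ for every $t$, so $\psi_t$ is defined on it and $\tilde{\omega}_t=\psi_t(\bar{\omega}_t)\in\Gamma_{\mathcal F}$ as claimed; moreover, since $\mathcal{F}\subset Z'$, the definition \eqref{psit} of $\psi_t$ gives $\tilde{\omega}_t(F)=\bar{\omega}_t(F)$ for all $F\in\mathcal{F}$, so the $t$-dependence of the left-hand side of \eqref{FlowF} is carried entirely by the flow and is differentiable (chain rule in the flow parameter, with differentiability inherited from that of $S_{C_H}$). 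Applying \eqref{eq:ProjectedCHFlow} with $A=F\in Z'$ then yields $i\hbar\,\tfrac{\rm d}{{\rm d}t}\bar{\omega}_t(F)=\bar{\omega}_t([F,C_H])$, i.e.\ $\tfrac{\rm d}{{\rm d}t}\tilde{\omega}_t(F)=\bar{\omega}_t(\vec{D}_H F)$, and by \eqref{eq:ZAdjointAction} we have $\vec{D}_H F\in Z'$, so the right-hand side is an evaluation of $\bar{\omega}_t$ on the subalgebra $Z'$.

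The second step is to show that $\bar{\omega}_t$ cannot distinguish $\vec{D}_H F$ from its projection $\vec{D}^{\prime}_H(t)F=\nu_t\circ\pi_t(\vec{D}_H F)$ defined in \eqref{eq:FashFlowGenerator}. For any $G\in Z'$ the element $G-\nu_t(\pi_t(G))$ lies in $\ker\pi_t$, since $\pi_t\circ\nu_t={\rm id}$ forces $\pi_t(G-\nu_t(\pi_t(G)))=\pi_t(G)-\pi_t(G)=0$; as $\bar{\omega}_t\in\Gamma_{Z'}|_{\pi_t}$ vanishes on $\ker\pi_t$ by Definition~\ref{GaugeFixedStates}, we get $\bar{\omega}_t(G)=\bar{\omega}_t(\nu_t(\pi_t(G)))$. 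Taking $G=\vec{D}_H F$ gives $\bar{\omega}_t(\vec{D}_H F)=\bar{\omega}_t(\vec{D}^{\prime}_H(t)F)$, and because $\vec{D}^{\prime}_H(t)F\in\mathcal{F}$ this last value equals $\tilde{\omega}_t(\vec{D}^{\prime}_H(t)F)$ by \eqref{psit}. Chaining the two steps produces exactly \eqref{FlowF}, with initial condition $\tilde{\omega}_{t_0}=\psi_{t_0}(\bar{\omega})$ supplied automatically.

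I do not expect a genuine obstacle here: the argument is an unwinding of definitions using results already established (Lemma~\ref{FlowLemma}, equations \eqref{eq:ProjectedCHFlow} and \eqref{eq:ZAdjointAction}, and the maps \eqref{psit} and \eqref{eq:FashFlowGenerator}). The only point that merits care is the interchange in the middle step --- that ``project by $\nu_t\circ\pi_t$, then evaluate'' and ``evaluate directly'' agree on states in $\Gamma_{Z'}|_{\pi_t}$ --- which relies precisely on the quantum-clock axioms $\mathcal{F}\cap\ker\pi_t=\{0\}$ and $\pi_t(\mathcal{F})=Z'/(Z-t\mathbf{1})Z'$ of Definition~\ref{def:QClock} that make $\nu_t$ a genuine two-sided inverse of $\pi_t|_{\mathcal F}$, so that $\pi_t\circ\nu_t={\rm id}$. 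It is also worth stating explicitly that $\vec{D}^{\prime}_H(t)$ is a $*$-compatible derivation on $\mathcal{F}$ (hence a dynamical flow in the sense of Definition~\ref{def:Dflow}), which follows since $\pi_t$ and $\nu_t$ are $*$-homomorphisms, as already noted around \eqref{eq:FashFlowGenerator}.
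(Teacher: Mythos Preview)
Your proposal is correct and follows essentially the same route as the paper's proof: invoke Lemma~\ref{FlowLemma} to place $\bar{\omega}_t$ on $\Gamma_{Z'}|_{\pi_t}$, use \eqref{eq:ProjectedCHFlow} to get $\tfrac{\rm d}{{\rm d}t}\tilde{\omega}_t(F)=\bar{\omega}_t(\vec{D}_H F)$, and then observe that $\vec{D}_H F$ and $\vec{D}'_H(t)F$ lie in the same coset of $(Z-t\mathbf{1})Z'$ so that $\bar{\omega}_t$ cannot distinguish them. Your justification of that last step via $\pi_t\circ\nu_t={\rm id}$ is slightly more explicit than the paper's, but the argument is the same.
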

\begin{proof}
By Lemma~\ref{FlowLemma}, $\bar{\omega}_t\in\Gamma_{Z'}|_{\pi_{t}}$\ for all $t$. Using equation~(\ref{eq:ProjectedCHFlow}), for any $A \in Z'$
\[
\frac{{\rm d}}{{\rm d}t} \bar{\omega}_t (A) = \bar{\omega}_t \left( \frac{1}{i\hbar} \left[ A, C_H\right] \right) = \bar{\omega}_t (\vec{D}_H F) \ .
\]
Since  $\tilde{\omega}_t (F) = \psi_t(\bar{\omega}_t)(F) = \bar{\omega}_t(F)$\ for any $F \in \mathcal{F}$, we also have
\[
\frac{{\rm d}}{{\rm d}t} \tilde{\omega}_t (F) = \frac{{\rm d}}{{\rm d}t}
\bar{\omega}_t (F) = \bar{\omega}_t (\vec{D}_H F) \ .
\]
Because $\bar{\omega}_t \in \left. \Gamma_{Z'} \right|_{\pi_t}$, it assigns
the same value to all elements that belong to a given coset generated by the
ideal $(Z-t\mathbf{1})Z'$. By definition in
equation~(\ref{eq:FashFlowGenerator}), for any value of $t$,
$\vec{D}^{\prime}_H (t) F$\ and $\vec{D}_H F$ are in the same coset relative
to the ideal $(Z-t\mathbf{1})Z'$. Hence $\bar{\omega}_t (\vec{D}_H F) =
\bar{\omega}_t (\vec{D}^{\prime}_H(t) F)$, from which (\ref{FlowF}) follows.
\end{proof}

As demanded by property~\ref{prop:Depar3} of
Proposition~\ref{prop:Deparameterization}, the gauge flow of $C_H$\
deparameterized by $(Z, \mathcal{F})$\ projects to a dynamical flow on the
unconstrained system defined by the fashionable algebra $\mathcal{F}$. The
unconstrained dynamics on $\mathcal{F}$\ can then be analyzed by the methods
of conventional quantum mechanics.

\subsection{Positivity}
\label{sec:Positivity}

For many applications it is also important to be able to reverse this process,
linking positive states on a fashionable algebra at a given value of the clock
with kinematical solutions to the constraint in $\Gamma$ and ultimately with
physical states in $\Gamma_{C_H}/\thicksim_{C_H}$. Combining
Definition~\ref{def:Deparameterization} and Lemma~\ref{CosetLemma},
deparameterization is accomplished by finding a clock such that $[Z, C_H] =
i\hbar \mathbf{1}$, and by splitting the kinematical algebra into subalgebras
that share only the null element,
\[
\mathcal{A} =\mathcal{A}C_H + (Z-t\mathbf{1})Z' + \mathcal{F} \ ,
\]
where $\mathcal{F}$ is a $*$--subalgebra of $Z'$ isomorphic to
$Z'/(Z-t\mathbf{1})Z'$ at each $t$.

Specifically, given a positive state $\tilde{\omega}$\ on $\mathcal{F}$\ and a value of the clock $t\in \mathbb{R}$, the corresponding relational state is $\bar{\omega} = \psi_t^{-1} (\tilde{\omega}) \in \Gamma_{Z'}|_{\pi_t}$, which by Lemma~\ref{lem:RelationalPositivity} is positive on $Z'$. The corresponding solution of the constraint on $\mathcal{A}$\ is $\omega = \phi^{-1} ( \psi_t^{-1} (\tilde{\omega})) \in \Gamma_{C_H}$, which for any $B\in Z'$\ yields
\[
\omega( (Z-t\mathbf{1})B) = \bar{\omega}( (Z-t\mathbf{1})B) = 0 
\]
because $\bar{\omega}\in \Gamma_{Z'}|_{\pi_t}$.  This, in turn, implies that
\[
 \omega(Z) = t \quad\mbox{and}\quad \omega(ZB) = t\,
 \omega(B) = \omega(Z)\, \omega(B)\,.
\]
We find that this relation extends beyond $Z'$ to the whole of $\mathcal{A}$:
Given any $A\in \mathcal{A}$, and any state $\omega \in \Gamma_{C_H}$, that
satisfies $\omega(ZB) = \omega(Z)\, \omega(B)$ for all $B\in Z'$, there are $B
\in Z'$\ and $G \in \mathcal{A}$ such that
\begin{eqnarray*}
\omega(ZA) &=& \omega \left( Z B+ Z G C_H\right) 
\\
&=& \omega(ZB)
\\
&=& \omega(Z) \omega(B)
\\
&=& \omega(Z) \omega \left( B +  G C_H\right) = \omega(Z) \omega(A) \ . 
\end{eqnarray*}

The following definition is therefore meaningful:
\begin{defi}\label{def:AlmostPos}
A state $\omega \in \Gamma$ is {\em almost-positive} with respect to a
deparameterization of $C_H$ by $Z$ if 
\begin{enumerate}
\item it annihilates the left ideal generated by $C_H$: $\omega(AC_H)
  = 0$ for all $A \in \mathcal{A}$; 
\item it is positive on the commutant of $Z$: $\omega(BB^*) \geq 0$
  for all $B \in Z'$; 
\item it parameterizes left multiplication by $Z$: for all
  $A \in \mathcal{A}$, $\omega(ZA) = \omega(Z) \omega(A)$.  
\end{enumerate}
\end{defi}
The first condition ensures that $\omega \in \Gamma_{C_H}$ solves the
constraint. The second condition ensures that $\omega$ restricts to some
positive state on $Z'$, and hence also on $\mathcal{F}$. The third condition
ensures that this restriction belongs to the constant clock surface
$\left. \Gamma_{Z'}\right|_{\pi_{\omega(Z)}}$. In other words,
\begin{cor} \label{cor:Positive}
  Every positive state on $\mathcal{F}$\ has a unique extension to an
  almost-positive state $\omega$, for any given value of the clock
  $\omega(Z)=t\in\mathbb{R}$. Conversely, every almost positive state
  $\omega$\ restricts to a positive relational state on $Z'$\ and hence a
  positive state on $\mathcal{F}$ at time $\omega(Z)$.
\end{cor}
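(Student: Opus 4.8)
The plan is to assemble the corollary from structures already in place --- chiefly the invertible maps $\psi_t$ and $\phi$, the decomposition~(\ref{eq:AlgebraicDecomp}), and Lemmas~\ref{lem:RelationalPositivity} and~\ref{CosetLemma} --- so that essentially no new computation is required; the displayed calculation immediately preceding Definition~\ref{def:AlmostPos} already does the one nontrivial algebraic step.

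For the forward (existence) direction, given a positive $\tilde{\omega}\in\Gamma_{\mathcal F}$ and a target clock value $t\in\mathbb R$, I would set $\bar{\omega}:=\psi_t^{-1}(\tilde{\omega})\in\Gamma_{Z'}|_{\pi_t}$ and then $\omega:=\phi^{-1}(\bar{\omega})$. One then checks the three conditions of Definition~\ref{def:AlmostPos}. Condition~1 holds because $\phi^{-1}$ is defined by equation~(\ref{eq:PhiInv}) to annihilate $\mathcal{A}C_H$, so $\omega\in\Gamma_{C_H}$. Condition~2 holds because $\omega$ agrees with $\bar{\omega}$ on $Z'$ (for $A\in Z'$ the decomposition~(\ref{eq:AlgebraicDecomp}) is $A=A+0\cdot C_H$), and $\bar{\omega}$ is positive on all of $Z'$ by Lemma~\ref{lem:RelationalPositivity} since it lies in $\Gamma_{Z'}|_{\pi_t}$ and is positive on $\mathcal F$. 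Condition~3 is exactly the computation displayed just before Definition~\ref{def:AlmostPos}: from $\omega((Z-t\mathbf{1})B)=0$ for all $B\in Z'$ (which holds because $\bar{\omega}\in\Gamma_{Z'}|_{\pi_t}$) together with $A=B+GC_H$ one gets $\omega(ZA)=\omega(Z)\omega(A)$. Normalization and $\rho$-continuity pass through $\psi_t^{-1}$ and $\phi^{-1}$, so $\omega\in\Gamma$, and $\omega(Z)=\bar{\omega}(Z)=t$.

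For uniqueness, suppose $\omega'$ is almost-positive with $\omega'(Z)=t$ and $\omega'|_{\mathcal F}=\tilde{\omega}$. Condition~1 forces $\omega'$ to vanish on $\mathcal{A}C_H$, so by the splitting $\mathcal{A}=Z'\oplus\mathcal{A}C_H$ (Definition~\ref{def:Deparameterization}, items~\ref{def:Depar1}--\ref{def:Depar2}) $\omega'$ is determined by $\omega'|_{Z'}$. Condition~3 with $\omega'(Z)=t$ gives $\omega'((Z-t\mathbf{1})B)=\omega'(Z)\omega'(B)-t\,\omega'(B)=0$ for all $B\in Z'$, so $\omega'|_{Z'}\in\Gamma_{Z'}|_{\pi_t}$. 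Then Lemma~\ref{CosetLemma} writes each $B\in Z'$ as $B=F+(Z-t\mathbf{1})B_0$ with $F\in\mathcal F$, whence $\omega'(B)=\omega'(F)=\tilde{\omega}(F)$; thus $\omega'|_{Z'}$ is pinned down by $(\tilde{\omega},t)$, so $\omega'|_{Z'}=\bar{\omega}$ and hence $\omega'=\omega$.

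The converse runs the same chain backwards: an almost-positive $\omega$ lies in $\Gamma_{C_H}$ by condition~1, so $\phi(\omega)=\omega|_{Z'}$, which is positive on $Z'$ by condition~2; condition~3 with $t=\omega(Z)$ makes $\omega((Z-t\mathbf{1})B)=0$ for all $B\in Z'$, placing $\omega|_{Z'}$ in $\Gamma_{Z'}|_{\pi_t}$ --- a positive relational state --- and restricting along $\mathcal F\subset Z'$ via $\psi_t$ gives the asserted positive state on $\mathcal F$ at time $\omega(Z)$. I do not anticipate a genuine obstacle, since the analytic content was already extracted in Lemmas~\ref{lem:RelationalPositivity} and~\ref{CosetLemma} and in the paragraph before Definition~\ref{def:AlmostPos}; the only point demanding care is the uniqueness argument, where all three conditions of almost-positivity (not merely the first two) must be used jointly to collapse $Z'$ onto $\mathcal F$ --- condition~3 being precisely what forces the restriction to annihilate the ideal $(Z-t\mathbf{1})Z'$.
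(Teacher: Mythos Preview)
Your proposal is correct and follows essentially the same approach as the paper, which treats the corollary as an immediate restatement (``In other words'') of the construction and computation laid out in the paragraphs preceding Definition~\ref{def:AlmostPos}. You are more explicit than the paper about the uniqueness argument, correctly invoking condition~3 of almost-positivity together with Lemma~\ref{CosetLemma} to pin down $\omega|_{Z'}$; the paper leaves this implicit in the bijectivity of $\phi$ and $\psi_t$.
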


\begin{rem}
  One way to interpret the last condition in Definition~\ref{def:AlmostPos} is
  to notice that it requires fluctuations of $Z$ to vanish. For example, we
  have $(\Delta_{\omega} Z)^2:=\omega(Z^2)-\omega(Z)^2=0$. Just like a time
  parameter in ordinary quantum mechanics, $Z$ is sharply defined in such a
  state, but it does not correspond to an evolving observable since $Z$ is not
  an element of $\mathcal{F}$.  As already noted in Section~\ref{sec:QClocks}, the
  combination of almost-positivity and conditions required for
  deparameterization prevent $\omega$ from being positive on the whole of
  $\mathcal{A}$. The new notion of an almost-positive state, introduced here,
  may therefore be considered a maximal implementation of positivity in an
  internal-time setting, in which evolution is defined with respect to an
  algebra element. According to Lemma~\ref{lem:RelationalPositivity},
  positivity of states can be extended from observables (realized here by
  ${\cal F}$) to time ($Z\in Z'$), but not to the full algebra ${\cal A}$.
\end{rem}

According to Lemma~\ref{FlowLemma}, the gauge flow $\bar{S}_{C_H} (\lambda)$
generated by $C_H$ and mapped to $\Gamma_{Z'}$ drags states from the
subspace $\left. \Gamma_{Z'}\right|_{\pi_{t}}$ to the subspace
$\left. \Gamma_{Z'}\right|_{\pi_{t+\lambda}}$ of gauge-fixed states.
\begin{lemma} 
  The gauge flow $S_{C_H}$ on $\Gamma$ drags an almost-positive state
  $\omega$ to another almost-positive state $S_{C_H} (\lambda) \omega$, such
  that $S_{C_H} (\lambda) \omega(Z) = \omega(Z) + \lambda$. 
\end{lemma}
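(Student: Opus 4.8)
The plan is to verify the three conditions of Definition~\ref{def:AlmostPos} for $S_{C_H}(\lambda)\omega$ given that $\omega$ satisfies them, and then compute the action on $Z$. The first condition, $(S_{C_H}(\lambda)\omega)(AC_H)=0$ for all $A$, is exactly the statement that $S_{C_H}(\lambda)$ preserves the constraint surface $\Gamma_{C_H}$, which is Lemma~\ref{lem:GammaCinv}. The second and third conditions are most naturally handled after transporting everything to $\Gamma_{Z'}$ via the bijection $\phi$ of Section~\ref{sec:Transversality}. By Corollary~\ref{cor:Positive}, an almost-positive $\omega$ corresponds under $\phi$ to a state $\bar{\omega}=\phi(\omega)\in\Gamma_{Z'}|_{\pi_{\omega(Z)}}$ that is positive on $Z'$ (equivalently, positive on $\mathcal{F}$ by Lemma~\ref{lem:RelationalPositivity}). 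The mapped flow $\bar{S}_{C_H}(\lambda)=\phi\circ S_{C_H}(\lambda)\circ\phi^{-1}$ is, by Lemma~\ref{FlowLemma}, a time evolution relative to $Z$: it carries $\Gamma_{Z'}|_{\pi_t}$ to $\Gamma_{Z'}|_{\pi_{t+\lambda}}$. So $\bar{S}_{C_H}(\lambda)\bar{\omega}\in\Gamma_{Z'}|_{\pi_{\omega(Z)+\lambda}}$.

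Next I would establish positivity on $Z'$ along the flow. This is Lemma~\ref{EvolutionLemma} applied to the flow generated by the $*$-compatible derivation $\frac{1}{i\hbar}[\cdot,C_H]$ restricted to $Z'$ (using $[A,C_H]\in Z'$ from equation~(\ref{eq:ZAdjointAction})): positivity of $\bar{\omega}$ on $Z'$ is preserved, so $\bar{S}_{C_H}(\lambda)\bar{\omega}$ is positive on $Z'$ and therefore on $\mathcal{F}$. Combining this with membership in the constant-time surface $\Gamma_{Z'}|_{\pi_{\omega(Z)+\lambda}}$, Corollary~\ref{cor:Positive} (or directly the argument preceding Definition~\ref{def:AlmostPos}) tells us that $S_{C_H}(\lambda)\omega=\phi^{-1}(\bar{S}_{C_H}(\lambda)\bar{\omega})$ is again almost-positive, with clock value $(S_{C_H}(\lambda)\omega)(Z)=\omega(Z)+\lambda$. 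That verifies conditions~2 and~3 simultaneously, since condition~3 for a state in $\Gamma_{C_H}$ whose $\phi$-image lies in $\Gamma_{Z'}|_{\pi_t}$ was shown in the text to follow from $\bar{\omega}((Z-t\mathbf{1})B)=0$ extended from $Z'$ to $\mathcal{A}$.

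Alternatively, and perhaps more transparently for a self-contained proof, I would compute $(S_{C_H}(\lambda)\omega)(Z)$ directly: by Definition~\ref{def:Cequiv}, $i\hbar\frac{{\rm d}}{{\rm d}\lambda}(S_{C_H}(\lambda)\omega)(Z)=(S_{C_H}(\lambda)\omega)([Z,C_H])=(S_{C_H}(\lambda)\omega)(i\hbar\mathbf{1})=i\hbar$, so $(S_{C_H}(\lambda)\omega)(Z)=\omega(Z)+\lambda$ immediately. Then to get condition~3 at time $\omega(Z)+\lambda$, I would define $g_A(\lambda)=(S_{C_H}(\lambda)\omega)\big((Z-(\omega(Z)+\lambda)\mathbf{1})A\big)$ for $A\in Z'$ and run the same Grönwall-type uniqueness argument as in Lemma~\ref{FlowLemma}: $g_A(0)=0$ by almost-positivity of $\omega$ (condition~3 gives $\omega((Z-\omega(Z)\mathbf{1})A)=\omega(ZA)-\omega(Z)\omega(A)=0$), and $\frac{{\rm d}g_A}{{\rm d}\lambda}$ closes on $g_{[A,C_H]}$ since $[A,C_H]\in Z'$, so $g_A\equiv 0$; the extension from $Z'$ to all of $\mathcal{A}$ is the one-line decomposition argument already given before Definition~\ref{def:AlmostPos}.

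The routine parts are the derivation bookkeeping; the only genuine subtlety—and the step I would flag—is that all of these flow arguments invoke the standing assumption that the linear evolution equation~(\ref{eq:Ddynamics}) has a \emph{unique} solution with given initial data, which is what licenses the ``matches initial conditions, satisfies the ODE, hence is the solution'' reasoning for $g_A\equiv 0$ and for preservation of positivity. Given that assumption the result is a direct corollary of Lemma~\ref{lem:GammaCinv}, Lemma~\ref{EvolutionLemma}, Lemma~\ref{FlowLemma}, and Corollary~\ref{cor:Positive}; no new ideas are needed beyond reassembling them.
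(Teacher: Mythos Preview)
Your proposal is correct and follows essentially the same approach as the paper: invoke Lemma~\ref{lem:GammaCinv} for condition~1, Lemma~\ref{EvolutionLemma} applied to the $*$-compatible derivation $\frac{1}{i\hbar}[\cdot,C_H]$ on $Z'$ for condition~2, and the direct computation $\frac{{\rm d}}{{\rm d}\lambda}\omega_\lambda(Z)=1$ for the clock shift. The only organizational difference is that for condition~3 the paper runs the uniqueness argument directly on $f_A(\lambda)=\omega_\lambda(ZA)-\omega_\lambda(Z)\omega_\lambda(A)$ for all $A\in\mathcal{A}$ (using $\vec{D}_H(ZA)=Z\vec{D}_HA+A$), rather than first restricting to $Z'$ and then extending, or invoking Lemma~\ref{FlowLemma} and Corollary~\ref{cor:Positive} as black boxes---but the content is the same.
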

\begin{proof}
  Lemma~\ref{lem:GammaCinv} guarantees that the flow remains on
  $\Gamma_{C_H}$. Since the adjoint action of $C_H$ is a $*$-compatible
  derivation on $Z'$, according to Lemma~\ref{EvolutionLemma} the corresponding flow maps states that are positive on
  $Z'$ to other states that are positive on $Z'$. For any almost-positive $\omega$,
  along the flow $\omega_{\lambda}:=S_{C_H} (\lambda) \omega $ we have
\[
\frac{{\rm d}}{{\rm d}\lambda} \omega_{\lambda} (Z)  =  \omega_{\lambda}
(\vec{D}_H Z) =  1 \ . 
\]
Therefore, $\omega_{\lambda} (Z) = \omega(Z) + \lambda$.

To prove that parameterization of $Z$\ is preserved along the flow, we follow the method of Lemma~\ref{EvolutionLemma} and define a function $f_A(\lambda) = \omega_{\lambda}(ZA) - \omega_{\lambda}(Z)\omega_{\lambda}(A)$\ for each $A \in \mathcal{A}$. Condition~3 of Definition~\ref{def:AlmostPos} holds for $\omega_{\lambda}$\ if and only if $f_A(\lambda) = 0$\ for all $A\in \mathcal{A}$. Suppose all of the functions $f_A(\lambda') = 0$\ for some $\lambda'$. Taking an arbitrary $A\in\mathcal{A}$
\begin{eqnarray*}
\left.\frac{{\rm d}}{{\rm d}\lambda} \omega_{\lambda} (ZA) \right|_{\lambda=\lambda'}
&=& \left. \left( \omega_{\lambda} \left( Z \vec{D}_H A \right) +
    \omega_{\lambda} (A) \right)  \right|_{\lambda=\lambda'} 
\\
&=& \omega_{\lambda'} (Z)  \left. \frac{{\rm d}}{{\rm d}\lambda} \omega_{\lambda} (A)
\right|_{\lambda=\lambda'} + \left.  \frac{{\rm d}}{{\rm d}\lambda} \omega_{\lambda} (Z)
\right|_{\lambda=\lambda'} \omega_{\lambda'} (A) 
\\
&=& \left.  \frac{{\rm d}}{{\rm d}\lambda}  \left( \omega_{\lambda} (Z)
    \omega_{\lambda} 
    (A) \right) \right|_{\lambda=\lambda'} \ .
\end{eqnarray*}
Consequently, ${\rm d}f_A(\lambda)/{\rm d}\lambda = 0$\ at $\lambda=\lambda'$\
for all $f_A(\lambda)$. Since $f_A(0)=\omega(ZA) - \omega(Z)\omega(A)=0$ for
all $A\in \mathcal{A}$, it follows that $\{f_A(\lambda)=0, \forall \lambda
\}_{A\in \mathcal{A}}$\ is the solution to the flow generated by $C_H$. Hence
$\omega_{\lambda} (ZA) = \omega_{\lambda} (Z) \omega_{\lambda} (A)$ remains
true everywhere along the flow.
\end{proof}

Since ${\cal A}_{\rm obs}$ is not available in general, there is no full
quantum analog of Proposition~\ref{PropClass}. But any available Dirac
observable $O\in{\cal A}_{\rm obs}$ is a valid observable with respect to any
almost-positive state:
\begin{lemma}\label{lem:Positivity}
  If $O\in\mathcal{A}_{\rm obs}$, $\omega(OO^*)\geq 0$ for any almost-positive
  functional $\omega$ with respect to a deparameterization of $C_H$ by some
  $(Z, \mathcal{F})$.
\end{lemma}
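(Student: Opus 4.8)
The plan is to use the algebraic splitting $\mathcal{A} = Z' + \mathcal{A}C_H$ with $Z'\cap\mathcal{A}C_H=\{0\}$ that deparameterization provides, and to reduce $\omega(OO^*)$ to an evaluation on $Z'$, where almost-positivity already guarantees positivity. Concretely, I would first decompose $O = B + GC_H$ as in~(\ref{eq:AlgebraicDecomp}), with $B\in Z'$ its unique $Z'$-component and $G\in\mathcal{A}$. The target is to show $\omega(OO^*) = \omega(BB^*)$; since $BB^*\in Z'$, condition~2 of Definition~\ref{def:AlmostPos} then finishes the argument.

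The crux --- and the step I expect to be the real obstacle --- is establishing that the $Z'$-component $B$ commutes with $C_H$, i.e.\ $B\in\mathcal{A}_{\rm obs}$. Here one combines two facts. From $[O,C_H]=0$ and $[GC_H,C_H]=[G,C_H]C_H$ we get $[B,C_H] = -[G,C_H]C_H \in \mathcal{A}C_H$. On the other hand, since $[Z,C_H]=i\hbar\mathbf{1}$ and $[B,Z]=0$, the Jacobi-identity computation of~(\ref{eq:ZAdjointAction}) gives $[[B,C_H],Z]=0$, so $[B,C_H]\in Z'$. As $Z'\cap\mathcal{A}C_H=\{0\}$, this forces $[B,C_H]=0$, and hence $[B^*,C_H]=0$ as well, using $C_H^*=C_H$ and the fact that $Z'$ is a $*$-subalgebra.

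Granting $[B^*,C_H]=0$, the remainder is a short computation using only condition~1 of almost-positivity (that $\omega$ annihilates the left ideal $\mathcal{A}C_H$). Writing $O^* = B^* + G^*C_H$, one has $OO^* = OB^* + (OG^*)C_H$ and $OB^* = BB^* + GC_HB^* = BB^* + (GB^*)C_H$, using $C_HB^* = B^*C_H$ from the previous step. The two terms ending in $C_H$ lie in $\mathcal{A}C_H$ and are annihilated by $\omega$, leaving $\omega(OO^*) = \omega(BB^*)\geq 0$. It is worth noting that condition~3 of Definition~\ref{def:AlmostPos} (vanishing fluctuations of the clock $Z$) plays no role here: only the solution-of-the-constraint property and positivity on $Z'$ are used, so in fact the conclusion holds for any state satisfying conditions~1 and~2.
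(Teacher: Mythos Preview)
Your approach is essentially identical to the paper's, but one step is unjustified as written: the identity $O^* = B^* + G^*C_H$ does not follow from $O = B + GC_H$ alone, since $(GC_H)^* = C_HG^*$, not $G^*C_H$. What you need is $[G,C_H]=0$, which actually drops out of your own computation: you showed $[B,C_H] = -[G,C_H]C_H$ and then $[B,C_H]=0$, hence $[G,C_H]C_H = 0$, and since $C_H$ is not a divisor of zero in a singly constrained system, $[G,C_H]=0$. The paper makes exactly this deduction explicit before computing $\omega(OO^*)$; with that line added, your argument is complete and coincides with the paper's. Your closing observation that condition~3 of Definition~\ref{def:AlmostPos} is unused is also correct.
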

\begin{proof}
  Since $O\in{\cal A}_{\rm obs}\subset{\cal A}$ is also an element of
  $\mathcal{A}$, the decomposition induced according to
  Definition~\ref{def:Deparameterization} by deparameterization implies that
  we can write it as $O = AC_H + B$ for some $A\in \mathcal{A}$ and $B\in
  Z'$. The fact that $O$ is in the commutant of $C_H$ implies $[O, C_H] = [B,
  C_H] + [A, C_H] C_H = 0$.  The second term on the left-hand side is in
  $\mathcal{A} C_H$, while equation~(\ref{eq:ZAdjointAction}) implies that the
  first term is in $Z'$. Since the two subalgebras are linearly independent,
  the two terms must vanish separately, implying that $[B, C_H] = 0$.  Because
  $C_H$ is not a divisor of zero in a single constrained system, for $A\in
  \mathcal{A}$ we have that $AC_H = 0$ implies $A=0$, and therefore $[A, C_H]
  = 0$.  These results also imply that $A^*$ and $B^*$ commute with $C_H$. Now
  suppose that $\omega$ is almost-positive with respect to deparameterization
  of $C_H$ by $Z$, then
\begin{eqnarray*}
\omega( OO^*) &=& \omega \left( AC_HC_HA^* + BC_HA^* +  AC_HB^* + BB^* \right)
\\
&=& \omega \left( \left( AA^*C_H + BA^* + AB^* \right) C_H \right) + \omega
\left(   BB^* \right) 
\\
&=& \omega \left( BB^* \right) \geq 0 
\end{eqnarray*}
because $B\in Z'$, using $\omega({\cal A}C)=0$.
\end{proof}
This completes the proof of Proposition~\ref{prop:Deparameterization}.

\subsection{Example: Parameterized Particle}\label{sec:DeparamExample}

Let ${\cal A}$ be the polynomial algebra, generated by complex polynomials in
the basic elements $Z$, $E$, $A_i$, with $i=1, 2, \ldots , M$, and
$\mathbf{1}$ ($=: A_0$). The generating elements are $*$-invariant, $Z=Z^*$,
$E=E^*$ and $A_i=A_i^*$, and are subject to commutation relations $[Z, E] =
i\hbar \mathbf{1}$, $[Z,A_i]=0=[E,A_i]$ and $[A_i, A_j] = i \hbar \,
\sum_{k=0}^{M} \alpha_{ij}^k A_k$ for some $\alpha_{ij}^k \in \mathbb{C}$.
This algebra is an example of an enveloping algebra of a Lie algebra, which we
equip with the $\rho$-topology of \cite{UnboundedTop}.

For any $M$-tuple of integers $\vec{n} =
(n_1, n_2, \ldots n_M)$, we define $A_{\vec{n}} =A_1^{n_1}A_2^{n_2} \ldots
A_M^{n_M}$, with $A_{\vec{0}} :=\mathbf{1}$. The set of monomials
$\{A_{\vec{n}} Z^mE^l \}$ is linearly independent and forms a linear basis on
$\mathcal{A}$. Let this system be subject to a single constraint of the form
\[
C_H=E+{\rm h}(Z,A_i)\ ,
\]
where ${\rm h}$ is a polynomial with an ordering such that ${\rm h}={\rm h}^*$ and therefore $C_H=C_H^*$. 

Consider the clock $(Z, \mathcal{F})$, where $\mathcal{F}$\ is the algebra generated by all complex polynomials in just the elements $A_i$, and is therefore spanned by the linear basis of monomials $\{ A_{\vec{n}}\}$. $Z=Z^*$\ is given, and, since $Z$\ commutes with itself and generators $A_i$, any element of $Z'$ can be written as a linear combination of monomials of the form $A_{\vec{n}} Z^m$. The relational observables of this clock are given by the projection $\pi_t(A)\in Z'/(Z-t\mathbf{1})Z'$ for any $A\in Z'$, interpreted as ``$A$ when $Z=t$.'' Here $\ker \pi_t =(Z-t\mathbf{1})Z'$, and it is easy to see that no non-zero elements of  $(Z-t\mathbf{1})Z'$\ are polynomials in generators $A_i$\ alone, therefore $\mathcal{F} \cap \ker \pi_t = \{0\}$\ as required by Definition~\ref{def:QClock}. For a basis monomial of $Z'$\ and any $t\in \mathbb{R}$
\begin{eqnarray*}
 A_{\vec{n}} Z^m &=&  A_{\vec{n}} \left( \left( Z -t\mathbf{1} \right)
   +t\mathbf{1}\right)^m \\ 
 &=& A_{\vec{n}} \sum_{k=0}^{m} {m\choose k}  \left( Z -t\mathbf{1} \right)^k
 t^{m-k}\\ 
 &=&  A_{\vec{n}} t^m +  A_{\vec{n}}  \sum_{k=1}^{m} {m\choose k}  \left( Z
   -t\mathbf{1} \right)^k t^{m-k} \ . 
\end{eqnarray*}
The last sum lies in the ideal $(Z-t\mathbf{1})Z'$, and, therefore, in the kernel of $\pi_t$; hence,
\[
\pi_t \left(  A_{\vec{n}} Z^m \right) =\pi_t \left(  A_{\vec{n}} t^m  \right)  = t^m \pi_t \left( A_{\vec{n}}  \right) \ . 
\]
Clearly, $\left\{ \pi_t \left( A_{\vec{n}}  \right) \right\}$\ linearly spans $ Z'/(Z-t\mathbf{1})Z'$\ for any $t\in \mathbb{R}$, so that $\pi_t (\mathcal{F}) =  Z'/(Z-t\mathbf{1})Z'$. The pair $(Z, \mathcal{F})$\ satisfy all requirements for a quantum clock given in Definition~\ref{def:QClock}.  

We now confirm that $(Z, \mathcal{F})$\ deparameterizes $C_H$\ according to Definition~\ref{def:Deparameterization}. We immediately see that $[Z, C_H] = [Z, E] = i\hbar \mathbf{1}$\ as required. Since the expression for any non-zero element of $\mathcal{A}C_H$ in terms of the basis monomials of $\mathcal{A}$\ includes at least one term of the form $A_{\vec{n}} Z^mE^l $ with $l\neq0$, we infer $Z' \cap \mathcal{A}C_H = \{0\}$.  By substituting $E = C_H - {\rm h}(Z, A_i)$, we can
write any element of $\mathcal{A}$ as a polynomial in $Z$, $C_H$, and
$A_i$. Using the commutation relations, a factor of $C_H$ can be iteratively
moved to the right whenever present, so that any $A \in \mathcal{A}$ can be
written as
\[
A = {\rm p_0} (Z, A_i) + {\rm p}(Z, C_H, A_i) C_H \ , 
\]
for some polynomials ${\rm p}_0$ and ${\rm p}$. The first term is in $Z'$,
while the second is in $\mathcal{A}C_H$, the two sets therefore linearly
generate the whole of $\mathcal{A}$. It is also immediately clear here that $Z'\cup\{C_H\}$\ algebraically generates $\mathcal{A}$: iteratively moving every factor of $C_H$\ to the right we can write 
\[
A = {\rm p_1} (Z, A_i) + {\rm p_2}(Z, A_i) C_H + \ldots + {\rm p}_N(Z, A_i) C_H^N \ .
\]

According to Proposition~\ref{prop:Deparameterization} the flow generated by $C_H$\ gives rise to a dynamical flow on the fashionables. To see this explicitly, we note that any $B\in Z'$ can be written as a polynomial ${\rm p}(Z, A_i)$, so that the adjoint action of $C_H$\ on $Z'$\ is given by
\[
[B, C_H] = [B, E] + [B, {\rm h}] = i\hbar \left. \frac{\partial {\rm p}(s,
    A_i)}{\partial s} \right|_{s=Z} + [{\rm p}(Z, A_i), {\rm h}(Z, A_i)] \ , 
\]
both terms in the final expression are in $Z'$. Since $[A_{\vec{n}}, E] = 0$, restricting this to $F \in \mathcal{F}$ we have
\[
\vec{D}_H F = \frac{1}{i\hbar} [F, C_H] =  \frac{1}{i\hbar} [F, {\rm h} (Z,
A_{\vec{n}}) ] \ . 
\]
The projection from $Z'$ to $\mathcal{F}$ here has the explicit form
\[
\nu_t\circ \pi_t \left( A_{\vec{n}} Z^m \right) = \nu_t \left( t^m \pi_t \left( A_{\vec{n}}\right) \right)= A_{\vec{n}} t^m \ , 
\]
so that the commutator of two basis monomials projects as
\[
\nu_t\circ \pi_t \left( \left[ A_{\vec{n}_1}, A_{\vec{n}_2} Z^m \right]
\right) =  \left[ A_{\vec{n}_1}, A_{\vec{n}_2} \right]  t^m \ . 
\]
It follows that
\[
\vec{D}^{\prime}_H (t) F =   \frac{1}{i\hbar} [F, {\rm h} (t,  A_{\vec{n}}) ] \ .
\]
Therefore, this deparameterized example reduces the original constrained
quantum system to an unconstrained quantum system with degrees of
freedom generated by the basis $\{ A_{\vec{n}} \}$, driven by the
time-dependent Hamiltonian ${\rm h} (t, A_{\vec{n}})$.

In this example, ${\cal A}_{\rm obs}=C_H'$ contains $C_H$ itself and the
identity element $\mathbf{1}$. Any element of ${\cal A}_{\rm obs}$ which is
not a linear combination of a power of $C_H$ and $\mathbf{1}$ is a constant of
motion of the (possibly time-dependent) Hamiltonian ${\rm h}$.  For most
choices of a classical polynomial ${\rm h}_{\rm class}$, constants of motion
which are $E$-independent and fulfill $\{O,h_{\rm class}\}=0$, are generically
non-polynomial, if they even exist in closed form
\cite{DiracChaos,DiracChaos2}. (The system may be non-integrable.) No
quantization of such an observable can exist in our ${\cal A}$, and the
available ${\cal A}_{\rm obs}$ is incomplete. Even if one extends ${\cal A}$,
for instance by using deformation quantization, in most cases of physical
interest it is impossible to find a complete set of Dirac
observables. Nevertheless, we have shown that it is possible to fix the gauge
relative to $Z$ in any such system and uniquely specify physical states by
relational observables, with the only requirement on ${\rm h}$ that ${\rm
  h}\in Z'$ and ${\rm h}^* = {\rm h}$.

\subsection{Physical vs.\ relational states} \label{sec:PhysStates}

In Section~\ref{sec:QConstraint} we defined physical states as orbits on
$\Gamma_{C_H}$\ of the entire collection of gauge flows generated by
$\mathcal{A}C_H$, such that they are positive on the Dirac observables
$\mathcal{A}_{\rm obs}$. In Section~\ref{sec:Positivity} we established that
relational states associated with a deparameterization of a constraint $C_H$\
by a clock $(Z, \mathcal{F})$\ are represented on $\Gamma_{C_H}$\ by almost
positive states of Definition~\ref{def:AlmostPos}.  According to
Lemma~\ref{lem:Positivity} each almost positive state is in the orbit
corresponding to some physical state of the constraint. Furthermore, since
time-evolution in $Z$\ is generated by $C_H$, it is tangential to the gauge
orbit so that all almost positive states along a time-evolution correspond to
the same physical state. Deparameterization of
Definition~\ref{def:Deparameterization} interprets orbits of physical states
(or, more accurately, a single preferred flow along these orbits) as time
evolution relative to a physical clock. This relation leaves room for two
important concerns.

First, some physical states may not be sampled by the deparameterization
relative to a given clock $Z$: It may happen that a physical orbit
$[\omega]_{C_H}$\ does not contain any state that parameterizes $Z$\ as in the
third condition of Definition~\ref{def:AlmostPos}. Stated differently, there
is no guarantee that $\phi \left( [\omega]_{C_H} \right) \cap \Gamma_{Z'}
|_{\pi_t} \neq \emptyset$\ for some $t\in \mathbb{R}$. While this concern
warrants further investigation, it does not immediately appear to be
physically problematic: since a clock is a physical part of the constrained
system, the ability to assign simultaneous definite values to the clock and
its commutant should in general require special states. At present, we do not
claim that every physical state can be captured by deparameterization relative
to a given clock.

Second, a given deparameterization may suffer from the analog of Gribov
problems in gauge theories if a physical state is sampled by multiple
relational states, in other words, if $\phi \left( [\omega]_{C_H} \right) \cap
\Gamma_{Z'} |_{\pi_t}$\ contains more than one state for some $t\in
\mathbb{R}$. If this happens, each of those states will give rise to a
time-evolution relative to $Z$, and a single physical state will be associated
with multiple inequivalent evolving states on $\mathcal{F}$. However only the
states that are \emph{positive} on $Z'$\ have a physically useful relational
interpretation (see discussion in Section~\ref{sec:QClocks}
and~\ref{sec:Positivity}). Therefore this possibility constitutes an ambiguity
of physical relational states only if $\phi \left( [\omega]_{C_H} \right) \cap
\Gamma_{Z'} |_{\pi_t}$\ contains more than one state that is positive on
$Z'$. An indication that such physical ambiguity is avoided by
deparameterization comes from the following consideration. One would expect
that multiple intersections of two $\Gamma_{Z'} |_{\pi_t}$\ by the orbit $\phi
\left( [\omega]_{C_H} \right)$\ would be generated by following a
$Z'$-positivity-preserving gauge flow. From equation~(\ref{eq:CFlowBreaking})
(together with (\ref{eq:PolExpansionInC})), we see that the only gauge flow
that, when mapped onto $\Gamma_{Z'}$, is generated by a $*$-compatible
derivation, and therefore preserves positivity on $Z'$, is the one generated
by $C_H$\ itself. For this particular flow we note the following result.

\begin{lemma}\label{lm:IncisiveLinearGauge} Let $C_H$\ be deparameterized by a clock $(Z, \mathcal{F})$. Then, each one-parameter family of states in $\Gamma_{Z'}$ along the constraint flow generated by $C_H$\ intersects each constant time surface at most once.
\end{lemma}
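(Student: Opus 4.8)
The plan is to track the expectation value of the clock $Z$ along the flow and show it is a strictly monotone function of the flow parameter, which already pins down where the flow line can meet a constant-time surface. First I would note that $Z\in Z'$ (since $[Z,Z]=0$) and that $\vec{D}_HZ=\frac{1}{i\hbar}[Z,C_H]=\mathbf{1}$ by the defining property of a deparameterizing clock. Hence, for the one-parameter family of states $\bar{\omega}_{\lambda}\in\Gamma_{Z'}$ obtained as an integral curve of $\bar{S}_{C_H}$ and thus solving~(\ref{eq:ProjectedCHFlow}), setting $A=Z$ gives $i\hbar\,\frac{{\rm d}}{{\rm d}\lambda}\bar{\omega}_{\lambda}(Z)=\bar{\omega}_{\lambda}\left([Z,C_H]\right)=i\hbar\,\bar{\omega}_{\lambda}(\mathbf{1})=i\hbar$, so that $\bar{\omega}_{\lambda}(Z)=\bar{\omega}_{\lambda_0}(Z)+(\lambda-\lambda_0)$. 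In particular, $\lambda\mapsto\bar{\omega}_{\lambda}(Z)$ is injective.

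Next I would observe that membership in a constant-time surface fixes this expectation value. If $\bar{\omega}\in\Gamma_{Z'}|_{\pi_t}$, then by Definition~\ref{GaugeFixedStates} the state $\bar{\omega}$ annihilates $\ker\pi_t=(Z-t\mathbf{1})Z'$; applying this to the particular element $(Z-t\mathbf{1})\mathbf{1}\in\ker\pi_t$ and using normalization gives $\bar{\omega}(Z)=t$. Combining the two facts: if a single flow line passed through $\Gamma_{Z'}|_{\pi_t}$ at two parameter values $\lambda_1$ and $\lambda_2$, then $\bar{\omega}_{\lambda_1}(Z)=t=\bar{\omega}_{\lambda_2}(Z)$, whence $\lambda_1=\lambda_2$ by injectivity. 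This proves the claim. An equivalent route avoiding the differential computation is to invoke Lemma~\ref{FlowLemma} directly: a state in $\Gamma_{Z'}|_{\pi_t}$ is dragged after parameter $\mu$ into $\Gamma_{Z'}|_{\pi_{t+\mu}}$, and these surfaces are pairwise disjoint for distinct values of $t$ precisely because each forces $\bar{\omega}(Z)=t$.

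I do not expect a genuine obstacle here; the statement is essentially the quantum counterpart of the incisiveness condition of Definition~\ref{ClassSection}, and the substantive work was already carried out in Lemma~\ref{FlowLemma}. The only points that need minimal care are that the family in question genuinely solves~(\ref{eq:ProjectedCHFlow}) (so that the well-posedness assumptions of Section~\ref{sec:AlgebraicApproach} apply and $\bar{\omega}_{\lambda}(Z)$ is differentiable in $\lambda$), and that one uses the explicit description $\ker\pi_t=(Z-t\mathbf{1})Z'$ from Section~\ref{sec:QClocks} rather than only abstract properties of the projection $\pi_t$. Note also that this argument does not require positivity of the states, so the conclusion holds for the full constraint flow on $\Gamma_{Z'}$, not merely for its positive part.
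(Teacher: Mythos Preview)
Your proposal is correct and follows essentially the same approach as the paper: compute ${\rm d}\bar{\omega}_{\lambda}(Z)/{\rm d}\lambda=\bar{\omega}_{\lambda}(\vec{D}_HZ)=1$, conclude that $\lambda\mapsto\bar{\omega}_{\lambda}(Z)$ is strictly monotone, and use that any state in $\Gamma_{Z'}|_{\pi_t}$ must assign the value $t$ to $Z$ (equivalently, annihilate $Z-t\mathbf{1}\in\ker\pi_t$). Your write-up is slightly more explicit than the paper's in spelling out $Z\in Z'$, the affine integration $\bar{\omega}_{\lambda}(Z)=\bar{\omega}_{\lambda_0}(Z)+(\lambda-\lambda_0)$, and the alternative via Lemma~\ref{FlowLemma}, but the core argument is identical.
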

\begin{proof} Along the flow of $C_H$\ we have ${\rm
    d}\bar{\omega}_{\lambda} (Z)/{\rm d}\lambda=\bar{\omega}_{\lambda}(\vec{D}_HZ)= 1\not=0$. Therefore,
  $\bar{\omega}_{\lambda}(Z)$ is monotonic in $\lambda$, and any two states $\bar{\omega}_{\lambda_1}$\ and $\bar{\omega}_{\lambda_2}$ with $\lambda_1 \neq \lambda_2$\ along this flow assign different values to $Z-t\mathbf{1}\in\ker\pi_t$\ for any given $t\in \mathbb{R}$.
 It is, therefore, not possible for both states to belong to the same constant time surface $\Gamma_{Z'}|_{\pi_t}$.
\end{proof}

\section{General polynomial constraints} \label{sec:linearizing}

A general constraint element $C$ is not immediately of the form required for
deparameterization to exist. While the kinematical algebraic structure of most
model theories that are of interest to quantum gravity and quantum cosmology
possesses clocks $Z$ that are not constant along the constraint flow, that is,
$[Z,C]\not=0$, the condition $[Z, C]=i\hbar \mathbf{1}$\ (or
$[Z,C]=a\mathbf{1}$ with $a\in{\mathbb C}$) is not usually satisfied. For
instance, most Hamiltonian constraints in such systems are quadratic in
momenta, resulting in $[Z, [Z, C]] \neq 0$ for a clock $Z$, as in the scalar
example of cosmological models with energy density (\ref{ScalarDens}). There
are also examples of constraints for which $[Z,C]\in Z'$ but not a multiple of
the unit.

\subsection{Linearization and cancellation}

In some of these cases, it may be possible to ``linearize'' the constraint by
finding a suitable $C_H\in{\cal A}$ which satisfies all three criteria of a
deparameterization with respect to $Z$ and has a gauge flow and a constraint
surface related to those of $C$: If $N\in{\cal A}$ is such that $C= NC_H$ and
$C_H$ as in Definition~\ref{def:Deparameterization}, we have ${\cal A}_C\subset {\cal
  A}_{C_H}$ and therefore $\Gamma_{C_H}\subset \Gamma_C$. Moreover,
$\omega\thicksim_C\psi$ if $\omega\thicksim_{C_H}\psi$. If $N$ is invertible
in ${\cal A}$, $\Gamma_{C_H}=\Gamma_C$ and $\thicksim_{C_H}=\thicksim_C$, but
this case is too restrictive for most practical purposes.
\begin{ex}
  The constraint $C=E^2-{\rm h}(A_i)^2$ with $Z$-independent ${\rm h}={\rm h}^*$ can be
  factorized as $C=(E-{\rm h})(E+{\rm h})=NC_H$ with $N=E-{\rm h}$ and
  $C_H=E+{\rm h}$. We have $[N,C_H]=0$, but $N$ does not have an
  inverse. However, if $\omega\in\Gamma_{C_H}$ and
  $\omega(E)\not=0$\ it follows that $\omega(N)\not=0$, in which case it may be of interest to study evolution of
  the state with respect to $C_H$.  More generally, we define $C_{\pm} = E \pm
  {\rm h}( A_i)$ such that
\[
C = C_{+}C_{-} = C_{-}C_{+} \ .
\]
\end{ex}
Since either $\omega(AC_{+})=0$\ or $\omega(AC_{-})=0$\ also imply
$\omega(AC)=0$, every left solution of $C_{\pm}$\ is also a left solution of
$C$. Therefore both constraint surfaces $\Gamma_{C\pm}$ are  contained
within the constraint surface $\Gamma_C$. Furthermore, normalized combinations of
states from $\Gamma_{C+}$\ and $\Gamma_{C-}$\ also give us solutions to
$C$. In particular, for any $a_i^{(+)}, a_i^{(-)} \in \mathbb{C}$,
$\omega_i^{(+)} \in \Gamma_{C+}$, and $\omega_i^{(-)} \in \Gamma_{C-}$
\[
\omega = \sum_i a_i^{(+)} \omega_i^{(+)} + \sum_i a_i^{(-)} \omega_i^{(-)}
\]
is a left solution of $C$, which is normalized provided that $\left( \sum_i
  a_i^{(+)} + \sum_i a_i^{(-)} \right) =1$. 

In this example the two subsets $\Gamma_{C\pm}$ are not disjoint. A solution
to both constraint factors must satisfy $\omega(AC_+) = 0$ and $\omega(AC_-)
= 0$ for any $A \in \mathcal{A}$. These conditions are equivalent to
requiring that both $\omega(AE)=0$ and $\omega(A{\rm h}) = 0$ for all $A \in
\mathcal{A}$, since
\[
\omega(AE) = \omega \left(A\frac{1}{2} (C_+ +C_-) \right) = \frac{1}{2} \left(
  \omega(AC_+) + \omega(AC_-) \right) = 0 \ , 
\]
and similarly
\[
\omega(A{\rm h}) = \omega \left(A\frac{1}{2} (C_+ -C_-) \right) = \frac{1}{2}
\left( \omega(AC_+) - \omega(AC_-) \right) = 0 \ . 
\]
Conversely, $\omega(AE)=0$ and $\omega(A{\rm h}) = 0$ imply $\omega(AC_+) =
0$\ and $\omega(AC_-) = 0$. The only restriction on the values assigned by a
kinematical state $\omega \in \Gamma$, in addition to continuity, is
normalization $\omega(\mathbf{1})=1$. It is therefore possible to satisfy both
$\omega(AE)=0$ and $\omega(A{\rm h}) = 0$ for all $A$, unless $AE + B{\rm h} =
\mathbf{1}$ for some $A, B \in \mathcal{A}$. No such $A$ and $B$ exist within
our polynomial $\mathcal{A}$, hence the intersection $\Gamma_{C+}\cap
\Gamma_{C-}$ is non-empty. However if we consider only the states that are
positive on $Z'$, as required by Definition~\ref{def:AlmostPos}, there may be
additional restrictions: suppose ${\rm h} = FF^* + \epsilon_0 \mathbf{1}$ for
some $F\in \mathcal{F}$ and real $\epsilon_0 >0$.  Then, for any normalized
state that is positive on $\mathcal{F}$
\[
\omega({\rm h}) = \omega\left( FF^*\right) + \epsilon_0 \geq  \epsilon_0 > 0 \ ,
\]
which means $\omega({\rm h}) =0$ cannot be satisfied. Hence depending on
${\rm h}$ the sets of almost-positive states with respect to internal clock
$Z$ defined by the two constraint factors may be disjoint.

Using the original constraint $C$, the orbits are generated by the subalgebra
$\mathcal{A}C$, as opposed to $\mathcal{A}C_{\pm}$ if we use one of the
factors instead. Neither of the two factors has an inverse already contained
within $\mathcal{A}$ (the only element with an explicit inverse in
$\mathcal{A}$ here is $\mathbf{1}$). Thus $\mathcal{A}C$ is a proper subset
of $\mathcal{A}C_{\pm}$, and hence the original orbits of $C$ are contained
within the larger orbits of $C_{\pm}$. This guarantees, via
Lemma~\ref{lem:GammaCinv}, that a physical state of the original constraint
$C$ is either entirely inside $\Gamma_{C\pm}$ or entirely outside of it.
However, some gauge flows generated by the factor constraints $C_{\pm}$ are
not gauge orbits of $C$ and can potentially link distinct gauge orbits of the
original constraint $C$. Therefore, a physical state with respect to
$C_{\pm}$ generally corresponds to a region of the space of physical
states with respect to the original constraint $C$.

This complication would not arise if $N$\ had an inverse in
$\mathcal{A}$. However, even if $N$\ is non-invertible there are in general
some states on which its action can be ``reversed'' in the following sense.
\begin{defi}
  Left multiplication of $A\in \mathcal{A}$ can be {\em canceled in $\omega
    \in \Gamma$} if for any $B \in \mathcal{A}$, $\omega(GAB) = 0$ for all
  $G\in \mathcal{A}$ implies $\omega(GB) = 0$ for all $G\in
  \mathcal{A}$.
\end{defi}
This state-by-state condition differs from the algebraic cancellation
property. In our concrete example, only the zero element is a divisor of zero
in $\mathcal{A}$. However, even though $CB=0$\ implies $B=0$, left
multiplication by $C$ cannot be canceled in any of its left solutions $\omega
\in \Gamma_C$. Setting $B=\mathbf{1}$, we get
$\omega(GC\mathbf{1})=\omega(GC)=0$ for all $G \in \mathcal{A}$, which is not
equivalent to $\omega(G\mathbf{1}) = 0$ for all $G\in \mathcal{A}$, since
setting $G = \mathbf{1}$ would violate normalization.

\begin{defi}
  A constraint $C$ is {\em deparameterized by factorization} with respect to
  an internal clock $Z$, if there are $N, C_H \in \mathcal{A}$, such that (i)
  $C = N C_H$, (ii) there is at least one state $\omega \in \Gamma_{C_H}$\ in
  which left multiplication by $N$\ can be canceled, and (iii) $C_H=C_H^*$ is
  deparameterized by $(Z, \mathcal{F})$.
\end{defi}

In our concrete example, if we deparameterize our system using $C_+$ as the constraint, we consider
only the states $\omega \in \Gamma_{C_+}$ in which the left multiplication of
$C_-$ can be canceled. (In particular, this means that $\omega \notin
\Gamma_{C_-}$.)

\begin{lemma}
For a constraint that is deparameterized by factorization as $C=NC_H$, for any $A \in \mathcal{A}_{\rm obs}$ of $C$ and $\omega\in\Gamma_{C_H}$ such
that left multiplication by $N$\ can be canceled in $\omega$,
the value $\omega(A)$ is invariant along all
of the gauge flows generated by $C_H$.
\end{lemma}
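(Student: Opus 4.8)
\section*{Proof proposal}

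The plan is to trade the commutator $[A,C_H]$, which by itself is not annihilated by a solution of $C_H$, for an entire tower of ``higher derivatives'' $\vec{D}_H^n A$ that \emph{is} controlled by the cancellation hypothesis, and then to propagate the relevant vanishing statements along each gauge flow by a uniqueness argument. First I would record the algebraic consequence of $A\in\mathcal{A}_{\rm obs}$: from $0=[A,C]=[A,NC_H]=[A,N]C_H+N[A,C_H]$ one gets $N[A,C_H]=-[A,N]C_H$, that is, $N\vec{D}_H A=-\frac{1}{i\hbar}[A,N]C_H\in\mathcal{A}C_H$. Iterating this, using the Leibniz rule for the derivation $\vec{D}_H$, the identity $N\vec{D}_H Y=\vec{D}_H(NY)-\frac{1}{i\hbar}[N,C_H]Y$ valid for any $Y\in\mathcal{A}$, and $\vec{D}_H C_H=0$, I would prove by induction that for every $n\geq1$ there exist $W,W_1,\dots,W_{n-1}\in\mathcal{A}$ with $N\vec{D}_H^n A=WC_H+\sum_{j=1}^{n-1}W_j\vec{D}_H^j A$; crucially the remainder involves only strictly lower-order terms $\vec{D}_H^j A$ with $j\geq1$, and never $A$ itself.

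Next I would use this recursion to pin down the initial data. Since $\omega\in\Gamma_{C_H}$ annihilates $\mathcal{A}C_H$, and since cancellation of $N$ in $\omega$ turns ``$\omega(G\,NB)=0$ for all $G$'' into ``$\omega(GB)=0$ for all $G$'', an induction on $n$ gives $\omega(G\,\vec{D}_H^n A)=0$ for all $G\in\mathcal{A}$ and all $n\geq1$: the base case $n=1$ is immediate because $N\vec{D}_H A\in\mathcal{A}C_H$, and in the inductive step the recursion expresses $N\vec{D}_H^n A$ through $\mathcal{A}C_H$ and lower $\vec{D}_H^j A$, all of which are killed by $\omega$ against arbitrary left factors, so $N\vec{D}_H^n A$ lies in the ``null left ideal'' of $\omega$ and cancellation removes the $N$.

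Then I would fix an arbitrary generator $G_0C_H\in\mathcal{A}C_H$ of a gauge flow and set $\omega_\lambda=S_{G_0C_H}(\lambda)\omega$, which stays in $\Gamma_{C_H}$ by Lemma~\ref{lem:GammaCinv}. Introducing $f_{G,n}(\lambda)=\omega_\lambda(G\,\vec{D}_H^n A)$ for $G\in\mathcal{A}$, $n\geq1$, and differentiating with the flow equation of Definition~\ref{def:Cequiv}, using $[\vec{D}_H^n A,C_H]=i\hbar\vec{D}_H^{n+1}A$, $\vec{D}_H C_H=0$, and $\omega_\lambda(\,\cdot\,C_H)=0$, one finds $f_{G,n}'=\frac{1}{i\hbar}f_{G_0[G,C_H],\,n}+f_{G_0G,\,n+1}$, a closed linear first-order system for the family $\{f_{G,n}\}$. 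Its initial data vanish by the previous paragraph, so by the uniqueness of solutions adopted after Definition~\ref{def:Dflow} the whole family is identically zero; in particular $f_{G_0,1}\equiv0$. Since $i\hbar\frac{\mathrm{d}}{\mathrm{d}\lambda}\omega_\lambda(A)=\omega_\lambda(G_0[A,C_H])=i\hbar\,f_{G_0,1}(\lambda)=0$ (the term $\omega_\lambda([A,G_0]C_H)$ dropping out because $\omega_\lambda\in\Gamma_{C_H}$), we obtain $\omega_\lambda(A)=\omega(A)$. Because the vanishing of all $f_{G,n}$ is propagated along the flow as well, the hypotheses needed for the argument survive, so the conclusion extends to arbitrary compositions $S_{A_1C_H}(\lambda_1)\cdots S_{A_MC_H}(\lambda_M)\omega$, i.e.\ to every gauge flow generated by $C_H$.

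The main obstacle is exactly the step that forces the tower on us: a direct differentiation of $\omega_\lambda(A)$ produces $\omega_\lambda(G_0[A,C_H])$, which is not controlled by $\omega_\lambda\in\Gamma_{C_H}$ alone, while cancellation of $N$ is only assumed at $\omega$, not along the orbit; the remedy is to work with the entire family $\{\vec{D}_H^n A\}_{n\geq1}$ and to verify that the recursion $N\vec{D}_H^n A\equiv\sum_{j<n}W_j\vec{D}_H^j A$ modulo $\mathcal{A}C_H$ is strictly degree-lowering and never regenerates an $A$ term --- otherwise the induction of the second paragraph collapses. A secondary point to watch is that the resulting differential system is genuinely infinite-dimensional, so invoking uniqueness of its solutions is precisely the standing assumption of the paper rather than an elementary fact.
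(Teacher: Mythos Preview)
Your proof is correct, but it takes a considerably more elaborate route than the paper's. The paper stops at your $n=1$ step --- from $[A,C]=0$ and cancellation of $N$ it obtains $\omega(B[A,C_H])=0$ for all $B\in\mathcal{A}$ --- and then propagates just this one family $f_B(\lambda)=S_{GC_H}(\lambda)\omega(B[A,C_H])$ along an arbitrary gauge flow. The point you overlooked is that this family is already closed under the flow without introducing higher powers $\vec{D}_H^nA$: differentiating gives simply $i\hbar f'_B=-f_{GC_HB}$, because the term $\omega_\lambda(B[A,C_H]\,GC_H)$ ends in $C_H$ and hence vanishes on $\Gamma_{C_H}$, while the new index $GC_HB$ is still just an element of $\mathcal{A}$. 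Thus the single cancellation of $N$ at $\lambda=0$ suffices, and no tower is needed. Your recursion $N\vec{D}_H^nA\equiv\sum_{j<n}W_j\vec{D}_H^jA$ modulo $\mathcal{A}C_H$ is correct and strictly degree-lowering, and the resulting infinite system does close and vanish as you claim; it just reproduces, after expanding out the factors of $C_H$ hidden in the index $GC_HB$, what the paper gets in one line. The extra structure you expose might be useful if one wanted finer control over the iterated adjoint action, but for this lemma it is redundant.
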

\begin{proof}
We first observe that, since $[A, C]=0$, in
particular $\omega(B[A, C]) = 0$ for any $B\in \mathcal{A}$. Which means
\begin{eqnarray*}
0 &=& \omega \left( B[A, NC_H] \right) 
\\ 
&=&  \omega \left( BN[A,C_H] \right) + \omega \left( B[A, N]C_H \right) =
\omega \left( BN[A,C_H] \right) \ . 
\end{eqnarray*}
Since this holds for any $B$, cancellation of left multiplication by $N$
in $\omega$ implies that
\[
\omega \left( B[A,C_H] \right) = 0 \ , \quad {\rm for\ all\ }\ B\in
\mathcal{A} \ . 
\]
The above property holds along all of the gauge flows generated by $C_H$. To see this let us fix an arbitrary $G\in \mathcal{A}$\ and, following the method of Lemma~\ref{EvolutionLemma}, define functions $f_B(\lambda) = S_{GC_H} \omega \left( B[A,C_H] \right)$\ for each $B\in \mathcal{A}$. Clearly, $f_B(0)=0$\ for all $B\in \mathcal{A}$. Suppose all functions $f_B(\lambda')=0$\ for some $\lambda'$, then
\begin{eqnarray*}
\left. i\hbar \frac{{\rm d}f_B}{{\rm d} \lambda} \right|_{\lambda = \lambda'} &=& \left. i\hbar \frac{{\rm d}}{{\rm d} \lambda} \left( S_{GC_H}(\lambda) \omega\left( B[A,C_H] \right)  \right) \right|_{\lambda = \lambda'}
\\
&=& S_{GC_H}(\lambda') \omega\left( B[A,C_H] GC_H - GC_H B[A,C_H] \right)
\\
&=& S_{GC_H}(\lambda') \omega\left( B[A,C_H] GC_H\right) - f_{GC_HB} (\lambda') = 0 \ ,
\end{eqnarray*}
where we used the fact that $ S_{GC_H}(\lambda') \omega \in \Gamma_{C_H}$\ according to Lemma~\ref{lem:GammaCinv}. We see that $\{f_B(\lambda)=0, \forall \lambda \}$\ is the solution of the dynamical flow generated by $GC_H$\ that agrees with our initial conditions. Since $G$\ was arbitrary, $S_{GC_H}(\lambda) \omega \left( B[A,C_H] \right) = 0$\ for all $B, G\in \mathcal{A}$\ and $\lambda \in \mathbb{R}$.

Using the above result, the value of $\omega(A)$ along the gauge flow generated by $BC_H$, with arbitrary $B\in \mathcal{A}$, varies according to
\begin{eqnarray*}
i\hbar \frac{{\rm d}}{{\rm d} \lambda} \left( S_{BC_H}(\lambda) \omega(A) \right)  &=&   S_{BC_H}(\lambda) \omega([A, BC_H])
\\ 
&=& S_{BC_H}(\lambda)\omega\left( B[A, C_H] \right) + S_{BC_H}(\lambda)\omega\left( [A, B] C_H \right)= 0 \ , \ {\rm for\ all\ }\ \lambda \ .
\end{eqnarray*}
\end{proof}

Therefore, using gauge flows generated by $C_H$ does not affect the values
assigned to the set of Dirac observables of the original constraint $C$, so long as we use states on which left multiplication of the factor $N$ can be canceled. In this section's example, the
roles of $C_+$ and $C_-$ can be reversed since the two factors commute.

In principle this construction also applies to a constraint that can be written as a
product of non-commuting factors, as one would expect in the case of time-dependent
Hamiltonians ${\rm h}(A_i,Z)$. However, factorizing such a constraint is much more complicated.
\begin{ex}If we factorize a constraint of the form $C=E^2-H^2$ with
  $[E,H]\not=0$, we have $C\not=(E-H)(E+H)$, but we can try to find $X\in Z'$ such that $C= (E-H+X)(E+H-X)$. Multiplying the two factors, we have
\[
C=E^2-H^2-X^2+[E,H]-[E,X]+[X,H]+2HX
\]
provided that
\[
 2HX= [H,E]+[H,X]+[E,X]+X^2\,.
\]
This equation has a formal power-series solution $X=\sum_{n=1}^{\infty}
(i\hbar)^nX_n$ with
\[
 2HX_1 = \frac{[H,E]}{i\hbar}
\]
and
\[
 2HX_n=\frac{[H,X_{n-1}]}{i\hbar}+ \frac{[E,X_{n-1}]}{i\hbar}+
 \sum_{a=1}^{n-1} X_{n-a}X_a\,.
\]
We can split $X=\frac{1}{2}(X_++X_-)$ into its $*$-invariant and
anti-$*$-invariant contributions, $X_+=\frac{1}{2}(X+X^*)$ and
$X_-=\frac{1}{2}(X-X^*)$, and define
\[
 H'=H-X_+ \quad\mbox{and}\quad E'=E-X_-\,.
\]
As in the example with commuting factors, $H'{}^*=H'$ and
$[Z,E']=[Z,E]=i\hbar\mathbf{1}$ but $E'{}^*\not=E'$. There are therefore
almost-positive states, but the gauge flow of $C_H=E'+H'$ does not induce a
$*$-compatible derivation, unless it so happens that $X_- =0$.
\end{ex}

For a systematic analysis of suitable factorizations, we need to carefully
consider the adjointness conditions imposed on the factors of the constraint.

\subsection{Adjointness relations}

The adjointness relation $C^*=C$ imposed on constraints guarantees that ${\cal
  A}_{\rm obs}\subset{\cal A}$\ inherits a $*$-relation, which in turn makes
it possible to define physical states as positive linear functionals on ${\cal
  A}_{\rm obs}$. This condition also restricts possible factorization choices
that could be applied to linearize constraints. Suppose a constraint $C=C^*$
can be written as $C=NC_H$, where $N$ can be algebraically canceled within
$\mathcal{A}$ and $C_H=C_H^*$ allows a deparameterization with respect to
$Z^*=Z\in{\cal A}$. Then $C$ can be deparameterized with respect to $Z$ by
factorization, using the same method as we applied to $C=C_-C_+$ to cast a
subset of its physical states as dynamical evolution in $Z$. Under these
conditions, $C_H$ uniformizes the flow generated by $C$: Since $[Z,C]=i\hbar
N$, we may consider $N$ as the ``non-constant rate'' of evolution determined
by $C$, while evolution with respect to $C_H$ has constant rate.

In order to satisfy $C=C^*$ we need $NC_H = C_HN^*$, which can be rewritten as
\begin{equation} \label{eq:Ncondition}
[N, C_H] =  C_H(N^* - N) \ .
\end{equation}
If the non-constant rate is required to be real when evaluated in a positive
state $\omega$, we need $N^*=N$. In this case, (\ref{eq:Ncondition}) implies
$[N,C_H]=0$, such that the rate is, in fact, constant on solutions of the
constraint because $N$ is a constant of motion with respect to $C_H$.
Conversely, if $[N, C_H]=0$ we obtain $(N-N^*)C_H=0$, and if $C_H$ can be
algebraically canceled within $\mathcal{A}$, we get $N=N^*$. These cases
constitute two \emph{sufficient} conditions for factorization to result in a
deparameterization.

Provided that the clock is part of a canonical pair, $[Z, E] =
i\hbar\mathbf{1}$, as in the example from the previous subsection, the most
general form of a factorizable constraint is $C=N(E+H)$, where $H=H^*$
commutes with $Z$, and condition~(\ref{eq:Ncondition}) holds for $C_H =
E+H$. Further properties depend on the $E$-dependence of $C$.

\subsubsection{Non-relativistic constraints}

\begin{defi}
  A constraint $C\in{\cal A}$ is {\em non-relativistic of rate} $N\in{\cal A}$
  if there is a canonical generator $E\in{\cal A}$ conjugate to $Z\in{\cal
    A}$, $[Z,E]=i\hbar\mathbf{1}$, such that $[Z,C]=i\hbar N\in Z'$.
\end{defi}

\begin{defi}
  A non-relativistic constraint $C\in{\cal A}$ is {\em of constant flow rate}
  $N\in{\cal A}$ if there is a $C_H\in{\cal A}$ such that $C=NC_H$ and
  $[N,C_H]=0$.
\end{defi}

\begin{lemma} \label{lem:Flow}
 Every deparameterizable non-relativistic constraint is of constant flow rate.
\end{lemma}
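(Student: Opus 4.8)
The plan is to read the conclusion off the direct-sum decomposition $\mathcal{A}=Z'\oplus\mathcal{A}C_H$ that deparameterization supplies, after first pinning down the $*$-type of the rate. Unpacking the hypotheses: $C$ being non-relativistic of rate $N$ means there is a canonical partner $E$ of the clock with $[Z,E]=i\hbar\mathbf{1}$ and $[Z,C]=i\hbar N$, where $N\in Z'$; and $C$ being deparameterizable means $C=N'C_H$ for some $N'\in\mathcal{A}$ and some $C_H=C_H^*$ that is deparameterized by a clock $(Z,\mathcal{F})$ with the same $Z$, so in particular $[Z,C_H]=i\hbar\mathbf{1}$ and $Z'\cap\mathcal{A}C_H=\{0\}$ by Definition~\ref{def:Deparameterization}. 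The target is to produce some $\tilde{C}_H\in\mathcal{A}$ with $C=N\tilde{C}_H$ and $[N,\tilde{C}_H]=0$; I would simply take $\tilde{C}_H=C_H$, the factor already at hand.

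First I would check that $N=N^*$. Because $C=C^*$ and $Z=Z^*$, the commutator $[Z,C]$ is anti-self-adjoint, so $-i\hbar N^*=(i\hbar N)^*=[Z,C]^*=-[Z,C]=-i\hbar N$, hence $N^*=N$. Next I would show the rate itself is a left factor yielding $C$: since $N\in Z'$, one has $[Z,NC_H]=[Z,N]C_H+N[Z,C_H]=0+i\hbar N=i\hbar N=[Z,C]$, so $C-NC_H$ commutes with $Z$, i.e.\ $C-NC_H\in Z'$; on the other hand $C-NC_H=N'C_H-NC_H=(N'-N)C_H\in\mathcal{A}C_H$. As $Z'\cap\mathcal{A}C_H=\{0\}$, this forces $C=NC_H$.

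Finally, with $C=NC_H$ in hand, I would take adjoints: $C=C^*$ and $C_H=C_H^*$ give $NC_H=(NC_H)^*=C_HN^*=C_HN$, the last step using $N^*=N$ from the first step; thus $[N,C_H]=0$. Together with $C=NC_H$ this is precisely the statement that $C$ is of constant flow rate $N$. The argument is short; the only point that needs care is matching the definitions — in particular noticing that the factor $N'$ in an arbitrary deparameterizing factorization need not coincide a priori with the rate $N$ (the direct-sum decomposition is what lets one replace $N'$ by $N$), and confirming that ``deparameterizable'' does hand us the two structural facts $[Z,C_H]=i\hbar\mathbf{1}$ and $Z'\cap\mathcal{A}C_H=\{0\}$ on which everything turns. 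Note that neither positivity nor the cancellation condition~(ii) of deparameterization-by-factorization is needed for this direction.
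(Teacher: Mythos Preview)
Your proof is correct and follows essentially the same route as the paper's: show $N^*=N$ from anti-self-adjointness of $[Z,C]$, then combine $C=NC_H$ with $C^*=C$ and $C_H^*=C_H$ to get $NC_H=C_HN$, i.e.\ $[N,C_H]=0$ (the paper packages this last step as an appeal to equation~(\ref{eq:Ncondition})). The one genuine addition in your argument is the explicit identification of the rate $N$ with the left factor $N'$ via the decomposition $\mathcal{A}=Z'\oplus\mathcal{A}C_H$; the paper simply takes this for granted from the surrounding discussion, so your version is more self-contained but not methodologically different.
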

\begin{proof}
  Since $C^*=C$ and $Z^*=Z$ imply $[Z, C]^*=-[Z, C]$, we immediately obtain
  $N^*=N$ from $N=[Z,C]/(i\hbar)$.  Using this in (\ref{eq:Ncondition}), we
  have $[N, C_H] = 0$.
\end{proof}
\begin{rem}
  The condition $[N, C_H] = 0$ of constant flow rate shows the restrictive
  nature of adjointness conditions: Only constants of motion with respect to
  $C_H$ are allowed as factors of $E$ in non-relativistic constraints. Written
  as $[N, E] = -[N, H]$ if $C_H=E+H$, the condition amounts to a partial
  differential equation for $N$ as a function of $Z$ and the remaining
  canonical variables.
\end{rem}

\begin{lemma}
  If a non-relativistic constraint $C$ is deparameterizable, it is of the form
  $C = A_1 E + A_0$ such that $A_1=A_1^*$ and $[A_0,A_1]=A_1[A_1,E]$.
\end{lemma}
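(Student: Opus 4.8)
The plan is to read the claimed form off directly from the factorization supplied by Lemma~\ref{lem:Flow}. Since $C$ is a deparameterizable non-relativistic constraint (with respect to the clock $Z$ appearing in the non-relativistic condition, whose canonical partner is $E$), that lemma — together with its proof and the remark following it — gives a factorization $C = N C_H$ with $N=N^*$, $C_H = C_H^*$, $[N,C_H]=0$, where $C_H$ is deparameterized by a clock $(Z,\mathcal{F})$; in particular $[Z,C_H]=i\hbar\mathbf{1}$. Comparing this with $[Z,E]=i\hbar\mathbf{1}$ shows that $H := C_H - E$ commutes with $Z$, hence lies in $Z'$, and $H^* = C_H^* - E^* = H$. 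Thus $C_H = E + H$ with $H=H^*\in Z'$, and $C = N(E+H) = NE + NH$.

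The next step is purely bookkeeping: set $A_1 := N$ and $A_0 := NH$, so that $C = A_1 E + A_0$ with $A_1, A_0 \in Z'$ and $A_1^* = N^* = N = A_1$, which is the first half of the claim. The commutator identity should then be a two-line computation: expanding $[A_0,A_1] = [A_1 H, A_1] = A_1[H,A_1] = -A_1[A_1,H]$, and then using $0 = [N,C_H] = [A_1, E+H]$ to rewrite $[A_1,H] = -[A_1,E]$, which yields $[A_0,A_1] = A_1[A_1,E]$ as stated.

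I do not anticipate a genuine obstacle here: the substantive input is entirely Lemma~\ref{lem:Flow}, namely that deparameterizability of a self-adjoint non-relativistic constraint forces the flow rate $N$ to be self-adjoint and to commute with the linearized factor $C_H$ — this is precisely where the adjointness relation (\ref{eq:Ncondition}) does the work. The only point deserving a sentence of care is that the factor $C_H$ is genuinely of the form $E+H$ with $H$ self-adjoint and $Z$-commuting, which holds because $E$ and $C_H$ translate $Z$ at the same unit rate; everything after that is the elementary commutator manipulation above.
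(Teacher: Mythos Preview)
Your proposal is correct and follows essentially the same route as the paper: both identify $A_1=N$, $A_0=NH$ from the factorization $C=N(E+H)$, invoke Lemma~\ref{lem:Flow} for $N=N^*$ and $[N,C_H]=0$, and then perform the same two-line commutator manipulation $[A_0,A_1]=A_1[H,A_1]=-A_1[A_1,H]=A_1[A_1,E]$. The only cosmetic difference is that the paper first writes $C=A_1E+A_0$ from the non-relativistic condition and then matches it against the factorization, whereas you read off $A_1,A_0$ directly from $C=N(E+H)$; the content is the same.
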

\begin{proof}
  Since the constraint is non-relativistic, it is linear in $E$ and can be
  written as $C = A_1 E + A_0$ with $A_1$ and $A_0$ such that $[Z, A_1] = [Z,
  A_0] = 0$. The conditions $C=C^*$ and $Z=Z^*$ imply that $[Z, C]^*=-[Z, C]$,
  and thus $A_1=A_1^*$. Since $A_1$ plays the role of the factor $N$, it must
  be a left factor of $A_0$: There must exist $H\in \mathcal{A}$ such that
  $A_0 = A_1 H$ and
\[
C = A_1 (E+H) \ .
\]
In order for $C$ to be deparameterizable, according to Lemma~\ref{lem:Flow},
the flow rate $A_1=N$ must be constant with respect to $C_H=E+H$. Therefore,
$[A_1, E] = -[A_1, H]$, which, upon left multiplication with $A_1$, implies
$A_1[A_1, E ]=-[A_1,A_0]$ because $A_0=A_1H$.
\end{proof}
\begin{rem}
  The condition $H=H^*$, obtained from $C_H^*=C_H$ for a deparameterizable
  constraint, implies
\[
 A_0^*= HA_1=A_0+[A_1,H]\,.
\]
Therefore, $A_0$ in $C=A_1E+A_0$ is not self-adoint unless $A_1$ commutes with
$H$.
\end{rem}

\begin{rem}
  If, in spite of Lemma~\ref{lem:Flow}, we try to factorize a constraint of
  non-constant flow rate, we end up with a non-self-adjoint $C_H$. To see
  this, consider a non-relativistic constraint of the form
  $C=\frac{1}{2}(B_1E+EB_1)+B_0$ with $B_0=B_0^*$\ and invertible $B_1=B_1^*$,
  we can write
\[
 C=B_1E+B_0-\frac{1}{2}[B_1,E]= B_1\left(E+\frac{1}{2}(B_1^{-1}B_0+B_0B_1^{-1})+
   \frac{1}{2}[B_1^{-1},B_0]-\frac{1}{2}B_1^{-1}[B_1,E]\right)\,.
\]
Defining $N=B_1$,
\[
 H=\frac{1}{2}(B_1^{-1}B_0+B_0B_1^{-1})
\]
and
\[
 E'=E+\frac{1}{2}[B_1^{-1},B_0]-\frac{1}{2}B_1^{-1}[B_1,E]\,,
\]
we can write $C=NC_H$ with $C_H=E'+H$.  It follows that $H=H^*$, and
$[Z,E']=[Z,E]=i\hbar$ since $[Z,B_1]=0$ and $[Z,B_0]=0$ for a non-relativistic
constraint. However,
\[
 E'{}^*= E-\frac{1}{2}[B_1^{-1},B_0]+ \frac{1}{2}[B_1,E]B_1^{-1}\not=E'
\]
and therefore $C_H^*\not=C_H$.  For $\omega\in\Gamma_{C_H}$, we have
$\omega(E')=-\omega(H)\in{\mathbb R}$. If $\omega$ is almost-positive, this
equation is consistent even though $E'\not=E'{}^*$ while $H^*=H$: because
$E'\not\in Z'$, an almost-positive state may take on a real value in a
non-self-adjoint $E'$.  However, the gauge flow of $C_H\not=C_H^*$ does not
induce a $*$-preserving derivation on any fashionable algebra ${\cal F}\subset
Z'$ because, in general, $[f,E']\not=0$ for $f\in{\cal F}$ unless $B_0$ and
$B_1$ are multiples of the unit.
\end{rem}

\subsubsection{Relativistic constraints}

\begin{defi}
  A constraint $C$ is {\em relativistic} if there is a canonical generator $E$
  conjugate to $Z$, $[Z,E]=i\hbar\mathbf{1}$, such that
  $0\not=[Z,[Z,C]]\in Z'$.
\end{defi}

A relativistic constraint that is deparameterizable by factorization has the
form
\[
C = (N_1E+N_0)(E+H) = N_1 E^2 + (N_0 + N_1H)E+N_1 [E, H] + N_0H 
\]
where $H^*=H$, and $[Z, N_1] = [Z, N_0] = [Z, H] = 0$. Using $C^*=C$ and
$Z=Z^*$, we have
\[
\left( \frac{1}{i\hbar}\left[ Z, \frac{1}{i\hbar} [Z, C] \right] \right) =
\left( \frac{1}{i\hbar}\left[ Z, \frac{1}{i\hbar} [Z, C] \right] \right)^* \ , 
\]
which quickly yields $N_1 = N_1^*$. The flow rate of $C$ with respect to
$C_H=E+H$ is given by $N_1E+N_0$, such that $C=NC_H$. In contrast to linear or
relativistic constraints, the flow rate depends on $E$.

\begin{lemma} \label{lem:RelFact} If a relativistic constraint $C$\ that is
  deparameterizable by factorization is of constant real flow rate $N$, it is
  of the form $C=NC_H$ with $N=N_1E+N_0$ and $C_H=E+H$ such that
\begin{equation} \label{eq:RelCond1}
 N_0^*=N_0+[N_1,E]\,,
\end{equation}
\begin{equation} \label{eq:RelCond2}
[N_1, E] + [N_1, H]  = 0 \ ,
\end{equation}
and
\begin{equation} \label{eq:RelCond3}
N_1 [H, E] = [N_0, E ] + [N_0, H]
\end{equation}
\end{lemma}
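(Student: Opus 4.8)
The plan is to read off the three relations from the three structural facts already in hand for a relativistic constraint deparameterizable by factorization: the factorized form $C=NC_H$ with $N=N_1E+N_0$, $C_H=E+H$, $H^*=H$ and $N_1=N_1^*$, all of $N_1,N_0,H$ lying in $Z'$; reality of the flow rate, $N^*=N$; and constancy of the flow rate, $[N,C_H]=0$ (the latter being in fact forced by the former, since $C^*=C$ gives $NC_H=C_HN^*$, i.e.\ $[N,C_H]=C_H(N^*-N)$ as in~(\ref{eq:Ncondition})).

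First I would obtain~(\ref{eq:RelCond1}) from reality alone. Using $E^*=E$ and $N_1^*=N_1$ one has $N^*=EN_1+N_0^*=N_1E-[N_1,E]+N_0^*$, and imposing $N^*=N=N_1E+N_0$ gives $N_0^*=N_0+[N_1,E]$. This is the relativistic analogue of the adjointness identity noted for non-relativistic constraints, and it already shows that $N_0$ fails to be self-adjoint unless $N_1$ commutes with $E$.

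Next I would expand $[N,C_H]=0$ and separate its $E$-linear and $E$-independent pieces. Since $[E,\,\cdot\,]$ is a derivation, a short computation gives
\[
[N,C_H]=\bigl([N_1,E]+[N_1,H]\bigr)E+\bigl(N_1[E,H]+[N_0,E]+[N_0,H]\bigr).
\]
Every commutator here lies in $Z'$: the terms $[N_1,H]$ and $[N_0,H]$ because $N_1,N_0,H\in Z'$, and $[N_1,E]$, $[N_0,E]$, $[E,H]$ by the Jacobi-identity argument of~(\ref{eq:ZAdjointAction}), using $[Z,E]=i\hbar\mathbf{1}$ together with $N_1,N_0,H\in Z'$; hence $N_1[E,H]\in Z'$ as well. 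Substituting $E=C_H-H$ turns the vanishing commutator into $\alpha\,C_H+\gamma=0$ with $\alpha:=[N_1,E]+[N_1,H]\in Z'$ and $\gamma:=N_1[E,H]+[N_0,E]+[N_0,H]-\alpha H\in Z'$.

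Finally I would split this using the deparameterization structure. Conditions~\ref{def:Depar1} and~\ref{def:Depar2} of Definition~\ref{def:Deparameterization} give $\mathcal{A}=Z'\oplus\mathcal{A}C_H$ as vector spaces, so the $Z'$-component and the $\mathcal{A}C_H$-component of $\alpha C_H+\gamma$ must vanish separately: $\gamma=0$ and $\alpha C_H=0$, whence $\alpha=0$ because $C_H$ is not a divisor of zero. Then $\alpha=0$ is precisely~(\ref{eq:RelCond2}), and with $\alpha=0$ the relation $\gamma=0$ reads $[N_0,E]+[N_0,H]=-N_1[E,H]=N_1[H,E]$, which is~(\ref{eq:RelCond3}). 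The one step that deserves care is this last splitting: because $E\notin Z'$ one cannot naively equate the coefficient of $E$ to zero, and the point is exactly to re-express everything through $C_H$ and invoke uniqueness of the $Z'\oplus\mathcal{A}C_H$ decomposition together with $C_H$ not being a zero divisor; the remainder is routine derivation-rule bookkeeping.
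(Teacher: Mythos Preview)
Your proof is correct; the derivation of (\ref{eq:RelCond1}) and the expansion of $[N,C_H]$ match the paper's. The only genuine difference is in how you separate the $E$-linear piece from the rest. The paper simply takes the commutator with $Z$ on both sides of
\[
0=\bigl([N_1,E]+[N_1,H]\bigr)E+N_1[E,H]+[N_0,E]+[N_0,H],
\]
noting that every coefficient lies in $Z'$ while $[Z,E]=i\hbar\mathbf{1}$, so only the coefficient of $E$ survives and (\ref{eq:RelCond2}) follows immediately; (\ref{eq:RelCond3}) is then read off from the remainder. You instead substitute $E=C_H-H$ and invoke the direct-sum decomposition $\mathcal{A}=Z'\oplus\mathcal{A}C_H$ from Definition~\ref{def:Deparameterization}, together with $C_H$ not being a zero divisor, to kill $\alpha$ and $\gamma$ separately. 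Your route is perfectly valid and makes explicit use of the deparameterization hypothesis in the lemma, but it is heavier machinery than needed: the commutator-with-$Z$ trick requires only $[Z,E]=i\hbar\mathbf{1}$ and that the coefficients sit in $Z'$, and avoids having to argue that $\gamma\in Z'$ (which you do correctly, but at the cost of an extra paragraph). Either approach gives the same conclusions with the same ease once the expansion is in hand.
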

\begin{proof}
For real flow rate, $N=N^*$ implies
$N^*=EN_1+N_0^*=N$ and therefore (\ref{eq:RelCond1}).
Constant flow rate, $[N, C_H] = 0$, results in
\begin{equation}\label{eq}
0 = \left( [N_1, E] + [N_1, H] \right) E + N_1 [E, H] + [N_0, E ] +[N_0, H] \ .
\end{equation}
Taking a commutator with $Z$ on both sides, only the term proportional to $E$
survives giving us (\ref{eq:RelCond2}). Substituting this back into (\ref{eq})
results in (\ref{eq:RelCond3}).
\end{proof}

The three conditions of Lemma~\ref{lem:RelFact} together are sufficient to make
the quadratic constraint deparameterizable by factorization.

\begin{lemma}
A relativistic constraint with real constant flow rate $N=E+N_0$ is of the form
$C=\tilde{E}^2-h$ such that $\tilde{E}^*=\tilde{E}$ and $h^*=h$ as well as
$[Z,\tilde{E}]=i\hbar\mathbf{1}$ and $[Z,h]=0$.
\end{lemma}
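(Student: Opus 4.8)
The plan is to write down $\tilde{E}$ and $h$ explicitly and then check the four asserted properties by direct computation, leaning on the structural identities already obtained in Lemma~\ref{lem:RelFact}. Since the flow rate is $N = E + N_0$, this is the special case $N_1 = \mathbf{1}$ of that lemma: condition~(\ref{eq:RelCond1}) then reads $N_0^* = N_0 + [\mathbf{1}, E] = N_0$, so that $N_0 = N_0^*$; condition~(\ref{eq:RelCond2}) is vacuous; and condition~(\ref{eq:RelCond3}) becomes $[H, E] = [N_0, E] + [N_0, H]$. From the standing hypotheses of a relativistic constraint deparameterizable by factorization we also have $C = (E + N_0)(E + H)$ with $H = H^*$ and $[Z, N_0] = [Z, H] = 0$.

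The candidates I would propose are $\tilde{E} := E + \frac{1}{2}(N_0 + H)$ and $h := \frac{1}{4}(N_0 - H)^2$. Three of the four claimed properties are then essentially free: $E$, $N_0$ and $H$ are all self-adjoint ($E$ by the setup, $N_0$ by the collapsed form of~(\ref{eq:RelCond1}), $H$ by the standing hypothesis $H = H^*$), so $\tilde{E}^* = \tilde{E}$, and since $(X^2)^* = (X^*)^2$ we get $h^* = h$ from $N_0 - H$ being self-adjoint; moreover $[Z, \tilde{E}] = [Z, E] = i\hbar\mathbf{1}$ and $[Z, h] = 0$ because $N_0$ and $H$ commute with $Z$ and $h$ is a polynomial in them. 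The whole content of the lemma is thus reduced to the single operator identity $C = \tilde{E}^2 - h$.

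For that identity I would set $K := \frac{1}{2}(N_0 + H)$, expand $\tilde{E}^2 = E^2 + EK + KE + K^2$, and subtract $C = E^2 + EH + N_0 E + N_0 H$. Collecting terms, $C - \tilde{E}^2$ becomes $\frac{1}{2}[E, H] + \frac{1}{2}[N_0, E] + \frac{1}{2}[N_0, H] - \frac{1}{4}(N_0 - H)^2$, where the final piece uses the bookkeeping identity $N_0 H - \frac{1}{4}(N_0 + H)^2 = \frac{1}{2}[N_0, H] - \frac{1}{4}(N_0 - H)^2$. Inserting condition~(\ref{eq:RelCond3}) in the rearranged form $[N_0, H] = [H, E] - [N_0, E]$ turns the three commutator terms into $\frac{1}{2}[E, H] + \frac{1}{2}[H, E] = 0$, leaving $C - \tilde{E}^2 = -\frac{1}{4}(N_0 - H)^2 = -h$, which is exactly the claim.

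I do not anticipate a genuine obstacle; the only care required is the non-commutative bookkeeping in expanding $K^2$ and $EK + KE$ (keeping $N_0 H$ and $H N_0$ distinct throughout) and invoking~(\ref{eq:RelCond3}) exactly once. If anything, the subtlety is conceptual rather than computational: one should flag that the self-adjointness $N_0 = N_0^*$ — which is what places $\tilde{E}$ and $h$ back inside the self-adjoint part of $\mathcal{A}$ — is available here precisely because $N_1 = \mathbf{1}$ kills the correction term $[N_1, E]$ in~(\ref{eq:RelCond1}); for a relativistic constraint of constant real flow rate with a genuinely $E$-dependent non-unit leading coefficient no such clean normal form need exist.
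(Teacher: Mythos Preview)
Your proof is correct and follows essentially the same route as the paper: both set $\tilde{E}=E+\tfrac{1}{2}(N_0+H)$, complete the square in $E$, and invoke condition~(\ref{eq:RelCond3}) to identify $h$. The only cosmetic difference is that you posit the closed form $h=\tfrac{1}{4}(N_0-H)^2$ at the outset (making $h^*=h$ and $[Z,h]=0$ immediate), whereas the paper first defines $h$ as the remainder $\tfrac{1}{4}A_1^2-\tfrac{1}{2}[A_1,E]-A_0$ and then simplifies it to $\tfrac{1}{4}A_1^2-\tfrac{1}{2}(N_0H+HN_0)$ to verify those properties --- which is the same expression as yours.
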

\begin{proof}
  A relativistic constraint with flow rate $N=E+N_0$, using $N_1=\mathbf{1}$
  in terms Lemma~\ref{lem:RelFact}, can be written as $C= E^2 + A_1 E+ A_0$,
  where $[A_i, Z] = 0$.  Using the notation of Lemma~\ref{lem:RelFact},
\begin{equation} \label{A1A0}
 A_1=N_0+H \quad\mbox{and}\quad A_0=[E,H]+N_0H\,.
\end{equation}
We have $A_1^*=A_1$ because $N_0^*=N_0$ from
equation~(\ref{eq:RelCond1}). Equation~(\ref{eq:RelCond2}) is trivially
satisfied, while~(\ref{eq:RelCond3}) becomes
\begin{equation} \label{EH}
 [H, E] = [N_0, E ] +[N_0, H] \,.
\end{equation}
We rewrite
\begin{eqnarray*}
C &=& \left( E^2 + \frac{1}{2}\left(A_1E+EA_1 \right) + \frac{1}{4} A_1^2
\right) - \frac{1}{4} A_1^2 +\frac{1}{2} [A_1, E] + A_0 
\\
&=& \left(E + \frac{1}{2} A_1 \right)^2 - \left( \frac{1}{4} A_1^2
  -\frac{1}{2} [A_1, E] - A_0 \right)\\
&=& \tilde{E}^2-h
\end{eqnarray*}
setting $h=\frac{1}{4} A_1^2 -\frac{1}{2} [A_1, E] - A_0 $ and $\tilde{E}=E
+ \frac{1}{2} A_1$.  Using (\ref{A1A0}) and (\ref{EH}), we compute
\begin{eqnarray*}
 h&=& \frac{1}{4}A_1^2- \frac{1}{2}[N_0,E]+ \frac{1}{2}[H,E]-N_0H\\
 &=& \frac{1}{4}A_1^2- \frac{1}{2}[N_0,E]+
 \frac{1}{2}\left([N_0,E]+[N_0,H]\right) -N_0H\\
 &=& \frac{1}{4}A_1^2-\frac{1}{2}(N_0H+HN_0)
\end{eqnarray*}
such that $[Z,h]=0$.
By inspection, $h^*=h$ as well as $\tilde{E}^*=\tilde{E}$. Moreover, since
$[Z,A_1]=0$, we have $[Z,\tilde{E}]=[Z,E]=i\hbar\mathbf{1}$.
\end{proof}

\begin{lemma}\label{lem:RelFactSpecial} 
  A relativistic constraint of the form $C=(E+g)^2-h$, such that $g^*=g$,
  $h^*=h$, and $[Z,g]=[Z, h]=0$, is deparameterizable by factorization only if
  $[E+g,h]=0$.
\end{lemma}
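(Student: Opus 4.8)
The plan is to eliminate the factor $N$ entirely and reduce the whole question to self-adjointness of the would-be ``potential'' $h$. Set $\tilde E:=E+g$, so that the hypotheses read $\tilde E^*=\tilde E$ (using $E^*=E$ and $g^*=g$), $[Z,\tilde E]=[Z,E]=i\hbar\mathbf 1$, and $C=\tilde E^2-h$ with $h^*=h$ and $[Z,h]=0$. Suppose $C$ is deparameterizable by factorization, so there are $N,C_H\in\mathcal{A}$ with $C=NC_H$ and $C_H=C_H^*$ deparameterized by $(Z,\mathcal{F})$; in particular $[Z,C_H]=i\hbar\mathbf 1$. First I would introduce $H:=C_H-\tilde E$ and observe that $[Z,H]=[Z,C_H]-[Z,\tilde E]=0$ and $H^*=C_H^*-\tilde E^*=H$, so that $H$ is a self-adjoint element of $Z'$. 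I would also record, exactly as in~(\ref{eq:ZAdjointAction}), that $[\tilde E,H]\in Z'$: since $[Z,\tilde E]$ is central and $[Z,H]=0$, the Jacobi identity gives $[Z,[\tilde E,H]]=0$.

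The \emph{key step} is to rewrite $C$ using only $C_H$ and $H$. Substituting $\tilde E=C_H-H$ into $C=\tilde E^2-h$ and moving every factor of $C_H$ to the right --- using $[C_H,H]=[\tilde E+H,H]=[\tilde E,H]$ --- gives $C=(C_H-2H)C_H+\bigl(H^2-[\tilde E,H]-h\bigr)$. The parenthesized remainder lies in $Z'$, because $H^2$, $[\tilde E,H]$ and $h$ all do, whereas $(C_H-2H)C_H$ and $C=NC_H$ both lie in $\mathcal{A}C_H$. Hence $H^2-[\tilde E,H]-h=C-(C_H-2H)C_H$ is an element of $Z'\cap\mathcal{A}C_H=\{0\}$ by condition~2 of Definition~\ref{def:Deparameterization}, so $h=H^2-[\tilde E,H]$.

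Finally I would impose $h=h^*$. From $h=H^2-[\tilde E,H]$ together with $\tilde E^*=\tilde E$ and $H^*=H$ one computes $h^*=H^2+[\tilde E,H]$, so $h=h^*$ forces $[\tilde E,H]=0$; then $h=H^2$ and $[\tilde E,h]=[\tilde E,H^2]=0$, that is, $[E+g,h]=0$, as claimed. I expect the only place needing care is the bookkeeping in the middle step --- commuting $H$ past $C_H$ correctly and confirming that the remainder is genuinely the (unique) $Z'$-component of $C$ relative to the decomposition $\mathcal{A}=Z'\oplus\mathcal{A}C_H$ supplied by Definition~\ref{def:Deparameterization} --- after which the conclusion is a one-line $*$-computation. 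It is also worth remarking that condition (ii) in the definition of deparameterization by factorization (cancellability of $N$ in some state) is not used here: the vanishing of $[E+g,h]$ is already forced by $C=NC_H$, by $C_H=C_H^*$, and by the properties $[Z,C_H]=i\hbar\mathbf 1$ and $Z'\cap\mathcal{A}C_H=\{0\}$ of a clock deparameterization.
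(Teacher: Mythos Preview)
Your proof is correct and takes a genuinely different route from the paper's. The paper writes the factorization explicitly as $C=(E+N_0)(E+H)$, matches this expansion against $(E+g)^2-h$ to obtain $g=\tfrac12(N_0+H)$, and then invokes the constant-flow-rate identity~(\ref{eq:RelCond3}) (with $N_1=\mathbf{1}$) to simplify the resulting expression for $h$ down to $h=\tfrac14(N_0-H)^2$; a further application of~(\ref{eq:RelCond3}) yields $[E+g,N_0-H]=0$ and hence $[E+g,h]=0$. Your argument instead exploits the structural decomposition $\mathcal{A}=Z'\oplus\mathcal{A}C_H$ of Definition~\ref{def:Deparameterization} directly: writing $\tilde E=C_H-H$ and pushing factors of $C_H$ to the right isolates the unique $Z'$-remainder $H^2-[\tilde E,H]-h$, which must vanish, and then self-adjointness of $h$ alone forces $[\tilde E,H]=0$. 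This is cleaner in that it never appeals to~(\ref{eq:RelCond3}) --- indeed it recovers $N=C_H-2H=\tilde E-H$, whence $N^*=N$ and $[N,C_H]=2[\tilde E,H]=0$, so constant real flow rate is obtained as a \emph{consequence} rather than used as an input. The paper's computation, by contrast, makes the connection to the preceding Lemma~\ref{lem:RelFact} transparent and exhibits the factor $N_0-H$ whose square is $4h$ explicitly. Your closing remark that cancellability of $N$ is not needed is also accurate: only $C=NC_H$, $C_H^*=C_H$, and condition~\ref{def:Depar2} of Definition~\ref{def:Deparameterization} are used.
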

\begin{proof}
  The factorized version of such a constraint must be of the form
\[
C = (E+ N_0)(E+H) = E^2+(N_0 + H)E + [E, H] + N_0H\ , 
\]
which we compare with
\[
(E+g)^2 - h = E^2+g^2+2gE+[E, g] - h \ .
\]
Taking a commutator with $Z$ and equating the two expressions yields
$g=\frac{1}{2} (N_0 + H)$. Using this result to eliminate $g$ and setting the
two expressions equal gives
\[
[E, H] + N_0H = \frac{1}{4}(N_0+H)^2 + \frac{1}{2} [E, N_0] + \frac{1}{2} [E,
H] - h \ . 
\]
This expression can be rearranged to solve for $h$ in terms of $H$, $N_0$,
and their commutators with $E$
\[
h = \frac{1}{2} \left( [H, E] + [E, N_0] \right) +\frac{1}{2} \left( N_0^2 +
  H^2 + 2N_0H - [N_0, H] \right) - N_0H \ . 
\]
We combine the first two terms using equation~(\ref{eq:RelCond3}) (with $N_1 =
\mathbf{1}$):
\begin{eqnarray*}
h &=&  \frac{1}{2} [N_0, H] + \frac{1}{4} N_0^2 + \frac{1}{4} H^2 -
\frac{1}{2} N_0H - \frac{1}{4} [N_0, H] 
\\
&=& \frac{1}{4} \left( N_0^2 + H^2 - 2N_0H + [N_0, H] \right) = \frac{1}{4}
\left(N_0 - H \right)^2\,. 
\end{eqnarray*}
Now consider the commutator
\begin{eqnarray*}
[E+g, N_0-H] &=& [E+\frac{1}{2}(N_0 + H), N_0-H]
\\
&=& [H, E] + [E, N_0]-\frac{1}{2} [N_0, H] + \frac{1}{2}[H, N_0]
\\
&=& [N_0, H] + [H, N_0] = 0 \ ,
\end{eqnarray*}
where in the final equality we once again used~(\ref{eq:RelCond3}). This
result immediately implies
\[
[E+g, h] =[E+g,\frac{1}{4} \left(N_0 - H \right)^2 ] = 0
\]
as a necessary condition for our constraint to be deparameterizable by
factorization.
\end{proof}

\begin{ex}
We assume that $h=\sqrt{h}^2$ has a square root $\sqrt{h} = \sqrt{h}^*$ in
${\cal A}$. Comparison of the two constraint forms results in
\[
g = \frac{1}{2} (N_0 + H) \ , \quad {\rm and} \ \ \sqrt{h} = \frac{1}{2} (N_0
-H) \ . 
\]
The factorizability condition~(\ref{eq:RelCond3}) now gives
\[
2[E, \sqrt{h} ] = [\sqrt{h}, g ] - [g, \sqrt{h}] \ ,
\]
or
\begin{equation} \label{eq:Factorization}
i\hbar \frac{\partial \sqrt{h}}{\partial Z} = [\sqrt{h}, g] \ .
\end{equation}
For example, in a two-component system with canonical generators $[Z, E]=[q,
p] = i\hbar\mathbf{1}$, setting
\[
\sqrt{h} = p + \frac{1}{2}(q^2 - Z^2) \ , \quad {\rm and} \ \  g = Z(q-Z) \ ,
\]
satisfies equation~(\ref{eq:Factorization}) and leads to the
factorization
\begin{eqnarray*}
C&=&\left( E+Z(q-Z)\right)^2 - \left( p + \frac{1}{2} (q^2-Z^2) \right)^2
\\
&=& \left( E + p + \frac{1}{2}q^2- \frac{3}{2} Z^2 + qZ \right) \left(E -
  \left( p+\frac{1}{2} (q-Z)^2 \right) \right) \ . 
\end{eqnarray*}
\end{ex}

As this example demonstrates, in general a constraint $C\in{\cal A}$ has to be
of a specific form in order for a deparameterization and therefore evolution
with respect to a gauge section to exist. This result showcases the power of
our general approach to quantum dynamical reduction. The restrictions of the
type found in Lemmas~\ref{lem:RelFact}--\ref{lem:RelFactSpecial} have not been
anticipated by the standard method of deparameterization on a fixed Hilbert
space, which treats each specific scenario individually and has mainly been
applied to time-independent systems in which $C=NC_H$, where $N$ and $C_H$
commute. The additional restrictions derived here are the consequence of the
inclusion of time dependence from the outset, as well as the general algebraic
treatment that is not tied to a specific Hilbert-space representation.

\section*{Acknowledgements}

This work was supported in part by NSF grants PHY-1607414 and PHY-1912168.


\end{document}